\documentclass[12pt]{article}

 \setlength{\parskip}{1em} \linespread{1.3}
\setlength{\parindent}{0pt}
%%%%% NEW MATH DEFINITIONS %%%%%

\usepackage{amsmath,amsfonts,bm}

% Mark sections of captions for referring to divisions of figures

% Highlight a newly defined term

% Figure reference, lower-case.

% Figure reference, capital. For start of sentence

% Section reference, lower-case.

% Section reference, capital.

% Reference to two sections.

% Reference to three sections.

% Reference to an equation, lower-case.
% \def\eqref#1{equation~\ref{#1}}
% % Reference to an equation, upper case
% \def\Eqref#1{Equation~\ref{#1}}
% % A raw reference to an equation---avoid using if possible
% \def\plaineqref#1{\ref{#1}}
% Reference to a chapter, lower-case.

% Reference to an equation, upper case.

% Reference to a range of chapters

% Reference to an algorithm, lower-case.

% Reference to an algorithm, upper case.

% Reference to a part, lower case

% Reference to a part, upper case

\def\1{\bm{1}}

% Random variables

% rm is already a command, just don't name any random variables m

% Random vectors

% Elements of random vectors

% Random matrices

% Elements of random matrices

% Vectors

% Elements of vectors

% Matrix

% Tensor
\DeclareMathAlphabet{\mathsfit}{\encodingdefault}{\sfdefault}{m}{sl}
\SetMathAlphabet{\mathsfit}{bold}{\encodingdefault}{\sfdefault}{bx}{n}

% Graph
\def\gA{{\mathcal{A}}}

\def\gC{{\mathcal{C}}}
\def\gD{{\mathcal{D}}}

\def\gF{{\mathcal{F}}}

\def\gI{{\mathcal{I}}}

\def\gM{{\mathcal{M}}}

\def\gR{{\mathcal{R}}}

\def\gT{{\mathcal{T}}}

\def\gX{{\mathcal{X}}}
\def\gY{{\mathcal{Y}}}

% Sets

% Don't use a set called E, because this would be the same as our symbol
% for expectation.

% Entries of a matrix

% entries of a tensor
% Same font as tensor, without \bm wrapper

% The true underlying data generating distribution

% The empirical distribution defined by the training set

% The model distribution

% Stochastic autoencoder distributions

 % Laplace distribution

\newcommand{\E}{\mathbb{E}}

% Wolfram Mathworld says $L^2$ is for function spaces and $\ell^2$ is for vectors
% But then they seem to use $L^2$ for vectors throughout the site, and so does
% wikipedia.

 % See usage in notation.tex. Chosen to match Daphne's book.

% inf norm

%ell 2 norm

%ell 1 norm

%abs

%nuclear norm

\newcommand{\Eqmark}[2]{\stackrel{(#1)}{#2}}

% parentheses

%\newcommand{\LRabs}[1]{\left|#1\right|}

\def\X{{\bf{X}}}

%\setlength{\parskip}{1em}

% % Removes 'A Preprint' from beneath title
% \renewcommand{\headeright}{}
% \renewcommand{\undertitle}{}

\usepackage{graphicx}       % pictures
\graphicspath{ {./} }       % set graphics path
\usepackage{amsmath}        % math
\usepackage{amssymb}        % math symbols
\usepackage[utf8]{inputenc} % allow utf-8 input
\usepackage[T1]{fontenc}    % use 8-bit T1 fonts
\usepackage{url}            % simple URL typesetting
\usepackage{booktabs}       % professional-quality tables
\usepackage{amsfonts}       % blackboard math symbols
\usepackage{nicefrac}       % compact symbols for 1/2, etc.
\usepackage{microtype}      % microtypography
\usepackage{todonotes}      % nice looking to-do notes
\usepackage[colorlinks=true,linkcolor=blue,urlcolor=blue,citecolor=blue]{hyperref}
\usepackage{enumitem}       % enumeration settings
\usepackage{mathtools}      % math tools
\usepackage{natbib}         % citation style
\bibliographystyle{plainnat}
\usepackage{multirow}
\setcitestyle{authoryear}
%\setcitestyle{authoryear, square, citesep={;}, aysep={,}, yysep={;}}
%\setcitestyle{numbers,square,citesep={,}}
%\setcitestyle{authoryear, square, citesep={;}, aysep={,}, yysep={;}}
%\bibliographystyle{plainnat}
\usepackage{dsfont}
\usepackage{caption}
\usepackage{subfigure}
\usepackage{parskip}
\usepackage{geometry}
\geometry{verbose,tmargin=1in,bmargin=1in,lmargin=1in,rmargin=1in}
%\linespread{1.2}

\usepackage{algorithm}
\usepackage{algorithmic}

% \usepackage[ruled]{algorithm2e}
% \SetKwComment{Comment}{/* }{ */}

\usepackage{amsthm}
\newtheorem{theorem}{Theorem}
\newtheorem{lemma}{Lemma}

\newtheorem{assumption}{Assumption}
\newtheorem{proposition}{Proposition}[section]
\newtheorem{remark}{Remark}[section]

\newtheorem{corollary}{Corollary}[section]
\newtheorem{definition}{Definition}[section]

\def\FDR{\operatorname{FDR}}
\def\mFDR{\operatorname{mFDR}}

\def\FDP{\operatorname{FDP}}

\def\E{{\mathbb{E}}}
\def\P{\mathbb{P}}
\def\I{{\mathbb{I}}}
\def\X{{\bf{X}}}
\def\x{{\bf{x}}}
\def\hdelta{\boldsymbol{\delta}}

\def\X{{\bf X}}
\def\x{{\bf x}}

\def\E{{\mathbb{E}}}

\def\P{\mathbb{P}}

\def\I{{\mathbb{I}}}
\def\gI{{\mathcal{I}}}
\def\gA{{\mathcal{A}}}
\def\gD{{\mathcal{D}}}

\def\gX{{\mathcal{X}}}
\def\gY{{\mathcal{Y}}}
\def\gR{{\mathcal{R}}}
\def\gF{{\mathcal{F}}}

\def\gM{{\mathcal{M}}}

\def\gT{{\mathcal{T}}}
\def\gC{{\mathcal{C}}}

\def\hdelta{{\bm \delta}}

\usepackage{xr}    
\usepackage{booktabs}
\usepackage{multirow}
\usepackage{tabularx}
\usepackage{array}
\usepackage{geometry}
\geometry{margin=1in}
\usepackage{booktabs}
\usepackage{multirow}
\usepackage{tabularx}
\usepackage{pifont}
\usepackage{makecell} % for \makecell
\usepackage{geometry}
\usepackage{graphicx}
\usepackage{subcaption}
\geometry{margin=1in}

  % ✔
  % ✘

%\newcommand{\Eqmark}[2]{\stackrel{(#1)}{#2}}

%\newcommand{\Eqmark}[2]
%{\stackrel{(#1)}{#2}}

\title{\LARGE\bf Feedback-Enhanced Online Multiple Testing with Applications to Conformal Selection}
\date{}
% Online Multiple Testing with Feedback
\author{
%   Author1, Author2 \\
%   Affiliation \\
%   Univ \\
%   City\\
%   \texttt{\{Author1, Author2\}email@email} \\
%   %% examples of more authors
%   \And
%   Author3 \\
%   Affiliation \\
%   Univ \\
%   City\\
%   \texttt{email@email} \\
  %% \AND
  %% Coauthor \\
  %% Affiliation \\
  %% Address \\
  %% \texttt{email} \\
  %% \And
  %% Coauthor \\
  %% Affiliation \\
  %% Address \\
  %% \texttt{email} \\
  %% \And
  %% Coauthor \\
  %% Affiliation \\
  %% Address \\
  %% \texttt{email} \\
}

\begin{document}

\author{Lin Lu$^{*1}$,\ Yuyang Huo$^{*1}$,\ Haojie Ren$^{\text{†}2}$, Zhaojun Wang$^{1}$,\  and Changliang Zou$^{1}$              \\
{$^1${\small\it School of Statistics and Data Sciences,  LPMC, KLMDASR and LEBPS,}} \\
{{\small \it Nankai University, Tianjin, China}} \\
{$^2${\small\it{School of Mathematical Sciences, Shanghai Jiao Tong University, Shanghai, China}}}\\
}
\date{}
\maketitle
\def\thefootnote{*}\footnotetext{The first two authors equally contributed to this work.}\def\thefootnote{\arabic{footnote}}
\def\thefootnote{†}\footnotetext{Correspondence to: Haojie Ren <haojieren@sjtu.edu.cn> }\def\thefootnote{\arabic{footnote}}

\bigskip
\begin{abstract}

This work studies online multiple testing with feedback, where decisions are made sequentially, and the true state of the hypothesis is revealed after decisions are made, either instantly or with a delay, {\color{black}and under either full or bandit feedback}. We propose Generalized alpha-investing with feedback (GAIF) \textcolor{black}{along with its adaptive variants}, a feedback-enhanced framework that dynamically adjusts thresholds using revealed outcomes, ensuring finite-sample false discovery rate (FDR)/marginal FDR (mFDR) control. {\color{black}We further extend GAIF to online conformal testing by constructing valid conformal $p$-values and developing feedback-enhanced testing rules with finite-sample mFDR control. We also propose a feedback-driven score selection criterion to adaptively choose the candidate score that is most effective for the testing procedure, together with a theoretical analysis of its optimality.} Numerical simulations and real-data applications demonstrate the effectiveness of our methods.
\end{abstract}

\noindent%
{\it Keywords:} Conformal prediction, distribution shifts, generalized alpha-investing procedure, model selection, online conformal $p$-value, online FDR control

%\newpage

\section{Introduction}
Real-time decision making plays a critical role in a growing number of modern applications, such as online recruitment for job hiring \citep{faliagka2014line}, real-time alignment of large language models \citep{huang2025survey}, and time-series anomaly detection \citep{rebjock2021online}. These tasks %share a common structure and 
can be naturally formulated as online multiple testing problems \citep{foster2008alpha}. %We describe the problem setup as follows: 
{\color{black}Consider a potentially infinite stream of null hypotheses \(\{\mathbb{H}_{01}, \mathbb{H}_{02}, \dots,\mathbb{H}_{0t},\dots\}\) that are tested sequentially based on incoming test statistics such as $p$-values \(\{p_1, p_2, \dots p_t,\dots\}\)}. At each time \(t\), a real-time decision must be made based on the current test statistic. %\(z_t\). 
Let \(\theta_t\) denote the true state of the hypothesis at time \(t\), where \(\theta_t = 0\) if \(\mathbb{H}_{0t}\) is true and \(\theta_t = 1\) otherwise. The testing problem at time \(t\) is
\[
\mathbb{H}_{0t}: \theta_t = 0 \quad \text{vs.} \quad \mathbb{H}_{1t}: \theta_t = 1.
\]
Let \(\delta_t \in \{0,1\}\) be the decision at time \(t\), with \(\delta_t = 1\) indicating rejection of \(\mathbb{H}_{0t}\). To ensure the reliability of the testing procedure, it is essential to control a well-defined error rate. Define the false discovery proportion (FDP) and false discovery rate (FDR) \citep{Benjamini1995ControllingTF} at time \(t\) by
\[
\FDP(t) = \frac{V(t)}{1 \vee R(t)} 
    \;:=\; \frac{\sum_{j=1}^t \delta_j \bigl(1 - \theta_j\bigr)}{1 \vee \sum_{j=1}^t \delta_j}, 
\quad
\FDR(t)= \mathbb{E}\left\{\FDP(t)\right\},
\]
where %\(a \vee b = \max(a,b)\), 
$V(t)$ and $R(t)$ represent the numbers of false rejections (discoveries) and rejections at time $t$, respectively. Given a target level \(\alpha \in (0,1)\), classical work on online multiple testing \citep{ramdas2017online, ramdas2018saffron, tian2019addis} aims to guarantee $\sup_t\FDR(t)\leq \alpha$ or its weaker variant  \citep{foster2008alpha, Zrnic2021asynchronous}, the marginal FDR (mFDR), i.e.,
$\operatorname{mFDR}(t) 
    = {\mathbb{E}\{V(t)\}}/{\mathbb{E}\{1 \vee R(t)\}}$.

{\color{black} Unlike the classical setting, we consider another realistic scenario in which feedback on past hypotheses is revealed after decisions are made. Specifically, after issuing $\delta_t$, the corresponding state $\theta_t$ may be observed either immediately or after a delay, and in some settings it may be observed only partially. 
%Unlike the classical setting, we consider another realistic scenario in which the decision-maker receives feedback on past hypotheses after decisions are made. Concretely, after issuing the decision $\delta_t$, one may later observe the corresponding $\theta_t$, either immediately or with a delay. The feedback may also be partial. 
% real-time feedback on $\theta_t$ after issuing the decision $\delta_t$. Concretely, at each time $t$, the decision $\delta_t$ is based on the current statistic $z_t$, the history $\{\delta_1, \dots, \delta_{t-1}\}$, and any feedback observed up to that point. The feedback $\theta_t$ may arrive immediately or with some delay (for example, as $\theta_{t-d}$ for $d \ge 1$) before moving on to time $t+1$. {\color{black} Additionally, the feedback may be either full or partial. 
For instance, in the bandit setting \citep{neu2010online}, feedback $\theta_t$ is observed only when the decision $\delta_t=1$ is made. Our work encompasses all these feedback-available settings, including full or partial feedback, as well as immediate or delayed feedback.} 

Such feedback-available settings naturally arise in many practical applications. To demonstrate its relevance, we present three {motivating examples}: online conformal selection, real-time alignment of large language models, and time-series anomaly detection.
\begin{itemize}
  \item \textbf{Online conformal selection.} Conformal selection aims to identify valuable individuals whose unknown label $Y$ satisfies a pre-specified requirement by leveraging machine learning predictions \citep{wu2024optimal,jin2025model}. % Some online applications include job hiring and medical diagnosis and have been considered \citep{huo2024realtime, xu2024online}.
  % At each time \(t\), the analyst test the current observation, continuing until a predetermined sample size \(T\). 
  The online conformal selection setting naturally aligns with the feedback-available online multiple testing framework. For example, in diabetes risk prediction, a patient’s true condition may later be confirmed by an expert, offering feedback that could improve future decisions. Some works \citep{huo2024realtime, xu2024online} considered online conformal testing problems but did not exploit feedback information.

  \item \textbf{Real-time LLM alignment.} 
  Large language models (LLMs) are increasingly used in high-stakes domains such as healthcare, finance, and law, where outputs must be reliable and factual. However, LLMs can hallucinate—producing plausible but incorrect content \citep{huang2025survey}. A remedy is to
  %To guard against these errors, one must 
  filter or certify LLM outputs \citep{guiconformal, bai2024optimized}. %in real time while controlling the FDR. For example, in medical decision support, filtering LLM-generated diagnoses on the fly is critical for safe clinical deployment. 
  Some alignment approaches based on conformal testing are proposed for this use but are generally designed for offline environments, whereas many applications demand immediate and trustworthy screening. %Moreover, healthcare systems 
  In such applications, follow-up feedback is usually available, %(e.g., follow-up diagnoses or treatment outcomes), 
  which, if incorporated, could continuously improve alignment.

  \item \textbf{Time series anomaly detection.} 
  Detecting anomalies in time series is crucial for industrial monitoring, fraud detection, and healthcare analytics. To ensure reliability, prior work \citep{rebjock2021online, gang2021structure,  kronert2023fdr} addresses online FDR control but typically ignores real-time feedback. In practice, such feedback is often available: once an anomaly is flagged, a subsequent system failure, user verification of a fraudulent transaction, or expert annotation may confirm or refute the alarm. %Integrating this feedback allows the detection procedure to iteratively refine its decisions, thereby improving power over time.%, thereby improving power over time.
\end{itemize}

These observations underscore that feedback is a foundational element of adaptive decision-making, playing a pivotal role across a wide range of online applications. Despite its importance, current online multiple testing procedures seldom incorporate feedback in a systematic way. This naturally raises a fundamental and compelling question: \emph{Can available feedback be effectively incorporated into online multiple testing procedures in a way that allows us to enhance statistical power while still ensuring valid error rate control?} 

{\color{black}To this end, we develop a feedback-enhanced framework that systematically integrates revealed feedback into the generalized alpha-investing (GAI) framework \citep{aharoni2014generalized,javanmard2018online}, 
achieving significant performance gains without compromising statistical validity. }

%Specifically, our proposed strategies utilize feedback through three key mechanisms: {\color{black}(1) it sharpens FDP estimation and adaptively adjusts testing thresholds; (2) it updates an online calibration set to generate  valid online $p$-values for conformal testing; (3) it guides score selection when multiple candidate scores are available in online conformal testing.}
%(1) within the proposed \textbf{GAIF} framework—a feedback-enhanced extension of generalized alpha-investing (GAI)~\citep{ramdas2017online,javanmard2018online}—we adaptively adjust testing thresholds based on past decisions and revealed outcomes together; (2) by updating an online calibration dataset—used to generate explicit, valid, and independent online \(p\)-values for conformal testing; and (3) by guiding score selection in online conformal testing. 
%These enhancements collectively improve power over existing methods based on the generalized alpha-investing (GAI) framework to a large degree, while still ensuring valid online FDR control.

\subsection{Our contributions}

To the best of our knowledge, this is the first work to incorporate feedback information directly into the construction of testing thresholds for online FDR procedures and to employ this idea to online conformal testing. Our main contributions are summarized as follows: %twofold:

%The key idea is to use revealed feedback to reduce the slack in FDP estimation, and, in the adaptive version, to further improve \(\alpha\)-wealth allocation.
\begin{itemize}
{\color{black}
    \item \textbf{GAIF and its variants.} We introduce Generalized Alpha-Investing with Feedback (GAIF), a feedback-enhanced extension of the GAI 
    framework, together with its adaptive variants. The key idea is to use revealed feedback to sharpen the FDP estimator by discounting confirmed non-null hypotheses, and, in the adaptive version, {\color{black}to recycle the $\alpha$-wealth that would otherwise be permanently lost whenever a non-null hypothesis yields a large $p$-value.} The framework applies to a broad range of feedback settings, including full or partial feedback, as well as immediate or delayed feedback, and it also extends to locally dependent scenarios. We establish online FDR control under independence and online mFDR control under local dependence for GAIF and its variants.
    \item \textbf{Online conformal testing with feedback.}
    We extend the GAIF to \textit{online conformal testing}. We first provide an explicit construction of valid, independent $p$-values (for null hypotheses) by dynamically updating the calibration dataset. We then combine them with suitably modified GAIF rules to obtain \emph{online conformal testing with feedback} (OCTF), which enjoys finite-sample online mFDR control. This extension bridges the gap between traditional online multiple testing and conformal inference, yielding distribution-free, model-agnostic tools for real-time decision-making.
   \item  \textbf{Feedback-driven score selection.}  We further introduce a strategy to adaptively select among candidate predictive models or conformity scores. This is particularly important under non-stationary non-null distributions, where the optimal score may vary over time. By locally averaging recent non-null $p$-values to track the best-performing score, our selection criterion can accommodate such shifts while maintaining validity. }%analysis the proposed EWMA-based model-selection criterion.
\end{itemize}

% \begin{itemize}
%     \item {\color{black}\textbf{Generalized Alpha-Investing with Feedback (GAIF) and its variants.} GAIF and its adaptive variant are both comprehensive and flexible. They can enhance almost any existing method within the traditional GAI family and apply to a wide range of feedback settings, including full or partial feedback, as well as instant or delayed feedback. Moreover, the framework naturally extends to more complex scenarios, such as local dependence.} We prove that GAIF-based procedures maintain valid FDR control under independence and valid mFDR control under local dependence.

%     \item \textbf{Online conformal multiple testing.} 
%     We extend GAIF to the setting of \emph{online conformal multiple testing}, providing explicit construction of valid, independent \(p\)-values (for null hypotheses) by dynamically updating the calibration dataset in this context. 
%     By suitably modifying the GAIF rules, we obtain a finite‐sample guarantee of mFDR control for our procedure, {\it online conformal testing with feedback} (OCTF). This extension bridges the gap between traditional online multiple testing and conformal inference, yielding distribution‐free, model‐agnostic tools for real‐time decision‐making. Moreover, we introduce an online model‐selection criterion: predictive models and conformity scores are chosen adaptively based on feedback, further improving power, especially under distribution shifts among non‐nulls. {\color{black}We also provide an optimality analysis of the proposed EWMA-based model-selection criterion.}
% \end{itemize}
We provide rigorous proofs for all proposed methods. Extensive simulations and real‐data experiments demonstrate that our procedures substantially outperform existing approaches while effectively controlling the online FDR when feedback information is provided.

\subsection{Related works}

Our work is situated at the intersection of online multiple testing and conformal inference. We review key developments in each area and highlight gaps that motivate our contribution.

\paragraph{Online multiple testing under independence.} Early works on online multiple testing began with the alpha-investing strategy of \citet{foster2008alpha}, later generalized by \citet{aharoni2014generalized} and \citet{javanmard2018online} into the GAI framework, which led to the LORD algorithm. Building on this line of work, \citet{ramdas2017online} introduced LORD++, an improved version of GAI tailored for online FDR control. Subsequent refinements include SAFFRON \citep{ramdas2018saffron}, which adapts to the proportion of non-nulls, and ADDIS \citep{tian2019addis}, which adjusts for conservative null $p$-values. These methods, including LORD++, SAFFRON, and ADDIS, achieve online FDR control when null $p$-values are independent of all other $p$-values. %Building on GAI, \citet{chen2020contextual} incorporated contextual information into thresholding, while 
%\citet{zrnic2020power} proposed a batched extension interpolating between online and offline settings. 
Separately, \citet{gang2021structure} developed structure-adaptive rules based on local FDR, which improve power but only ensure asymptotic FDR control under correct model specification. For a comprehensive review, see \citet{robertson2023online}. All of these methods determine thresholds solely from past rejections, without considering real-time feedback.

\paragraph{Online multiple testing under dependence.} In practice, hypotheses often exhibit dependence, and applying methods designed for independence can lead to inflated error rates. To address arbitrary dependence, \citet{xu2024online} proposed \textit{e}-LOND, an FDR-controlling procedure based on \textit{e}-values. \citet{zhang2025egai} extended this approach to \textit{e}-GAI, achieving improved power by dynamically allocating the
 testing levels. Alternatively, research has focused on local dependence structures. \citet{Zrnic2021asynchronous} introduced \textsc{LORD}\textsubscript{dep} and \textsc{SAFFRON}\textsubscript{dep}, establishing mFDR control under local dependence; \citet{rebjock2021online} later adapted these methods to time-series anomaly detection. Recently, \citet{fisher2024online} showed that \textsc{LORD}++  with suitable local modifications can maintain FDR control under certain dependence, while \citet{fischer2024online} proposed an online procedure for p-values {\color{black}satisfying positive regression dependence on a subset (PRDS).} Despite these advances,
 existing dependence-aware methods do not incorporate any real-time feedback. 
 
\paragraph{Conformal inference and conformal multiple testing.}

Conformal inference \citep{vovk2005algorithmic} offers a model-agnostic way to quantify prediction uncertainty. %It guarantees predictive coverage using only the exchangeability assumption. % of%, independent of the underlying model or data distribution. 
In the multiple testing setting, early works constructed conformal $p$-values and applied the Benjamini–Hochberg (BH) procedure \citep{Benjamini1995ControllingTF} to achieve finite-sample FDR control \citep{bates2021testing,jin2025model}. Subsequent extensions addressed covariate shift \citep{jin2023selection}, constrained selection \citep{wu2024optimal,nair2025diversifying}, and conditional testing \citep{wu2025conditional}, as well as model selection \citep{bai2024optimized,gui2025acs}. However, these contributions remain confined to offline settings. The few existing efforts to extend conformal multiple testing to the online domain {\color{black}\citep{huo2024realtime, xu2024online,liu2026online}} have not taken feedback information into account. %—an element that is often critical in real-time decision-making. 
%Our work sheds light on this gap by explicitly incorporating feedback into the design of online conformal testing procedures. 
%While related work in online conformal inference has begun to incorporate feedback \citep{gibbs2021adaptive, gibbs2024conformal}, its integration for online FDR control remains largely unexplored.
Although related research on the construction of {\color{black}online conformal prediction sets \citep{gibbs2021adaptive, gibbs2024conformal,xi2025exploring}} has considered feedback, multiple testing problems with FDR control remain largely unexplored. One very recent work, by \citet{humbert2025onlineselectiveconformalinference}, establishes asymptotic online FDP control through an online learning strategy, while we achieve feedback-enhanced testing based on the GAI framework.

\subsection{Organization of the paper}
The remainder of this paper is organized as follows. Section \ref{sec:GAIF} introduces the  GAIF and establishes FDR control under independence. {\color{black}Section \ref{sec:exten-GAIF} extends the GAIF to achieve adaptive $\alpha$-wealth allocation and to address local dependence.} In Section \ref{sec:OCT}, we construct explicit online conformal $p$-values and apply modified GAIF rules to online conformal testing, providing finite-sample theoretical guarantees. We also extend the proposed framework to achieve score selection and address distribution shifts. Simulation and real-data experiment results are presented in Sections \ref{sec:simu} and \ref{sec:real-data}, respectively. Finally, we conclude the paper in Section \ref{sec:conclu}.

\section{Generalized Alpha-Investing with Feedback (GAIF)}\label{sec:GAIF}
{\color{black} In this section, we begin by revisiting the generalized alpha-investing (GAI) framework %and its existing extensions 
in Section~\ref{subsec:GAI}. In Section~\ref{subsec:GAIF}, we introduce our feedback-enhanced framework, termed generalized alpha-investing with feedback (GAIF), which incorporates feedback information to improve FDP estimation. In Section~\ref{subsec:finite-FDR-GAIF}, we establish finite-sample online FDR control for GAIF procedures. % under independence. 
}

 %We provide two concrete approaches of GAIF in Section \ref{subsec:LF+SF}.

\subsection{Recap: Generalized alpha-investing}\label{subsec:GAI}

%Early approaches to error rate control in an online setting were based on so‐called ``alpha‐investing” strategies \citep{foster2008alpha}, which were subsequently generalized into the GAI framework \citep{aharoni2014generalized,javanmard2018online}. The key idea is to compare each incoming $p$‐value $p_t$ against a dynamically chosen threshold $\alpha_t$, to accumulate additional ``$\alpha$‐wealth” upon each rejection, and to make testing decisions according to $\delta_t = \mathbb{I}\{p_t \leq \alpha_t\}$. {\color{black}These methods are based on the principle of \emph{alpha-wealth}: the testing process is viewed as a sequential investing game that starts with an initial wealth, spends wealth to conduct hypothesis tests, and earns wealth whenever a rejection is made. The resulting wealth is then used to fund future tests.}

Early approaches to online error-rate control were based on so-called \emph{alpha-investing} strategies \citep{foster2008alpha}, which were later generalized into the GAI framework \citep{aharoni2014generalized,javanmard2018online}. {\color{black}These procedures are based on the principle of \emph{alpha-wealth}: the testing process starts with an initial wealth, spends wealth to conduct hypothesis tests, and earns wealth whenever a rejection is made, thereby generating resources for future tests.} Accordingly, each incoming $p$-value $p_t$ is compared against a dynamically chosen threshold $\alpha_t$, and the testing decision is given by $\delta_t=\mathbb{I}\{p_t\le\alpha_t\}$.

Based on GAI, \citet{ramdas2017online} proposed controlling online FDR by ensuring that an estimator of the FDP remains below a pre‐specified level~$\alpha$.  A specific example is the LORD++ algorithm, where the estimated $\operatorname{FDP}$ at time $t$ is given by
\begin{equation}\label{eq:LORD_FDP}
  \widehat{\operatorname{FDP}}_{\text{LORD}}(t)=\frac{\sum_{j\leq t}\alpha_j}{1\vee\sum_{j\leq t}\delta_j}\overset{(i)}{\geq} \operatorname{FDP}^*(t)= \frac{\sum_{j\leq t,j\in \mathcal{H}_0} \alpha_j}{1\vee\sum_{j\leq t}\delta_j}\approx\operatorname{FDP}(t):=\frac{\sum_{j\leq t,j\in \mathcal{H}_0}\mathbb{I}\{p_j\leq \alpha_j\}}{1\vee\sum_{j\leq t}\delta_j},  
\end{equation}
  where $\mathcal{H}_0$ is the index set of null hypotheses.
    To ensure that $\operatorname{FDR}(t)\leq\alpha$, it suffices to enforce $\widehat{\operatorname{FDP}}_{\text{LORD}}(t)=\frac{\sum_{j\leq t}\alpha_j}{R(t)\vee 1}\leq \alpha.$
    {\color{black} The corresponding testing level of LORD++ is 
    \begin{equation*}\label{eq:alpha-LORD++}
        \alpha_t^{\text{LORD++}}=\gamma_t s_0+(\alpha-s_0)\gamma_{t-\tau_1}\mathbb{I}\{\tau_1<t\}+\alpha\sum_{j:\tau_j<t,\tau_j\neq \tau_1}\gamma_{t-\tau_j},
    \end{equation*}
    where $\{\gamma_t\}_{t=1}^{\infty}$ is a given infinite non-increasing sequence of positive constants that sums to one, $\tau_j$ is the time of the $j$-th rejection, and $s_0>0$ is the pre-specified initial wealth.
    } 
    
    Building on this, some adaptive versions of LORD++ were proposed subsequently, such as SAFFRON \citep{ramdas2018saffron} and ADDIS \citep{tian2019addis}. Specifically, %similar to Storey-BH \citep{storey2004strong},   \cite{ramdas2018saffron} proposed SAFFRON with a  user-specified parameter $\lambda\in[0,1]$ to estimate the null proportion $\pi_0$, giving the estimated FDP as
    SAFFRON introduces a user-specified parameter $\lambda\in[0,1]$ to estimate {\color{black}the proportion of true nulls in the online testing stream}, and obtains a less conservative estimated FDP as    \begin{equation}\label{eq:FDP_SAFFRON}
\widehat{\operatorname{FDP}}_{\text{SAFFRON}}(t)=\frac{\sum_{j\leq t}\alpha_j\frac{\mathbb{I}\{p_j>\lambda\}}{(1-\lambda)}}{1\vee\sum_{j\leq t}\delta_j}.
    \end{equation}
% {\color{black}The testing thresholds of SAFFRON is taken as,
% \begin{equation*}%\label{eq:alpha-SAFFRONF}
%        \alpha_t^{\text{SAFFRON}}=\min\bigl\{\lambda,s_0\gamma_{t-C_{0+}}+\{(1-\lambda)\alpha-s_0\}\gamma_{t-\tau_1-C_{1+}}+(1-\lambda)\alpha\sum_{j\geq2}\gamma_{t-\tau_j-C_{j+}}\bigr\},
%    \end{equation*}
% where $C_{j+}=\sum_{i=\tau_j+1}^{t-1}\mathbb{I}\{p_t\leq \lambda\}$. 
% }
    
These GAI‐based procedures %\citep{ramdas2017online,ramdas2018saffron,tian2019addis} 
guarantee online mFDR control under the {\it conditional super‐uniformity} assumption for null $p$‐values:
\begin{align}\label{csua}
\Pr\bigl(p_t \le \alpha_t \,\bigm|\, \mathcal{G}_{t-1}\bigr)
\;\le\;
\alpha_t
\quad
\text{for all } t \in \mathcal{H}_0,
\end{align}
{\color{black}where $\mathcal{G}_{t} = \sigma(\delta_{1:t})$ and each threshold $\alpha_t = f_t(\delta_{1:t-1})$ is measurable with respect to past decisions. Here $\delta_{1:t}:=(\delta_{1},\dots,\delta_{t})$.} Furthermore, if all null $p$‐values are independent of other $p$‐values, these procedures also ensure strict FDR control.

% existing GAI rules suffer from significant information loss. 
%Although GAI rules are practically effective, they typically overlook valuable feedback information revealed during the testing process, which can result in unduly conservative behavior. By systematically incorporating this feedback, we can substantially enhance the performance and adaptivity of the GAI framework. 
%Specifically, 

Note that the inequality (i) in \eqref{eq:LORD_FDP} is tight if and only if all past hypotheses are null. %$\theta_j=0$ for all $j\leq t$. %, which is rarely the case in practice. 
This conservative step is one of the major sources of the gap between the realized FDR of LORD++ and the target level $\alpha$. %Similar looseness arises in other GAI-based methods. 
{\color{black}When feedback becomes available over time, it allows us to sharpen the FDP estimator and accordingly reduce the resulting slack.} 
{\color{black}Intuitively, once the $j$th test is confirmed non-null, its contribution \(\alpha_j\) can be removed from the FDP bound and reallocated. 
%{\color{black} Intuitively, feedback provides additional information about past testing decisions, which we exploit through a unified feedback framework and a feedback-aware FDP estimator to allocate alpha-wealth more efficiently. Once a previously tested hypothesis is revealed to be non-null, part of the wealth allocated for controlling errors associated with that test becomes unnecessary and can be reclaimed. As we will show in Sections~\ref{subsec:GAIF} and \ref{subsec:Adaptive-GAIF}, 
Our proposed GAIF and its adaptive variant extend LORD++ and SAFFRON by systematically recycling such recovered wealth and reinvesting it into future tests.} %This feedback-driven mechanism improves testing power while maintaining the desired error guarantees. 

%, it can serve as a natural remedy. %In the next subsection, we introduce a new framework that explicitly leverages this feedback to enhance statistical power.

% {\color{black}
% %\subsection{Relationship to existing procedures}

% %We now discuss the connection between GAIF, Adaptive GAIF, and existing online multiple testing procedures such as alpha-investing, LORD++, and SAFFRON. 

% Intuitively, feedback provides additional information about past testing decisions that can be exploited to allocate alpha-wealth more efficiently. Once a previously tested hypothesis is revealed to be non-null, part of the wealth allocated for controlling errors associated with that test becomes unnecessary and can be reclaimed. As we will show in Sections~\ref{subsec:GAIF} and \ref{subsec:Adaptive-GAIF}, our proposed GAIF and Adaptive GAIF extend LORD++ and SAFFRON by systematically recycling such recovered wealth and reinvesting it into future tests. This feedback-driven mechanism improves testing power while maintaining the desired error guarantees. %In the absence of feedback, GAIF and Adaptive GAIF reduce to LORD++ and SAFFRON, respectively.
% }
%In the absence of feedback, GAIF and Adaptive GAIF reduce to LORD++ and SAFFRON, respectively.
%}

\subsection{Boosting GAI via feedback: GAIF}\label{subsec:GAIF}
% In contrast to standard GAI procedures, which ignore such feedback, our approach uses it to make more informed decisions at every step. 
% At each time $t$, feedback reveals the true values of \(\{\theta_j\}_{j = 1}^{t-1}\).  Our key innovation is to reduce this slack by leveraging feedback to improve both the accuracy of FDP estimation and the efficiency of alpha-wealth allocation, thereby enhancing statistical power. In contrast to standard GAI procedures that ignore feedback, our approach uses it to make better-informed decisions at each step. We propose the GAI with feedback (GAIF) framework, which incorporates feedback information into the FDP calculation:
%{\color{black}
{\color{black}We now propose generalized alpha-investing with feedback (GAIF), which leverages revealed feedback information to improve FDP estimation. The key idea is that once some past hypotheses are known to be non-nulls, their contribution to the FDP upper bound can be reduced accordingly, leaving more $\alpha$-wealth available for future tests. 

Formally, let
\[
\mathcal{I}_t = \{ j\in[t] : \theta_j \text{ is revealed by time } t \}
\]
denote the index set of hypotheses whose labels are available by time \(t\), and let \(\bar{\mathcal{I}}_t = [t] \setminus \mathcal{I}_t\), where \([t] = \{1, 2, \dots, t\}\). 
Realistic scenarios can lead to diverse feedback structures. 
For example, feedback may be available only when a rejection is made or may arrive after a delay \(d \geq 0\). %, so that at time \(t\), only feedback up to \(\theta_{t-d-1}\) is available. 
Our framework encompasses a variety of practically relevant cases: %. Some typical feedback-available index sets and their complements include:
\begin{itemize}
  \item Full and instant feedback: \(\mathcal{I}_t = \{j\in[t] : j \le t-1\}\); %\(\bar{\mathcal{I}}_t = \{t\}\)
  \item Bandit and instant feedback: \(\mathcal{I}_t = \{j\in[t] : j \le t-1, \delta_j = 1\}\);% \(\bar{\mathcal{I}}_t = \{t\} \cup \{j : j \le t-1, \delta_j = 0\}\)
  \item Full and delayed feedback: \(\mathcal{I}_t = \{j\in[t] : j \le t-d-1\}\), where \(d\) is the delay; %\(\bar{\mathcal{I}}_t = \{j : t-d \le j \le t\}\), 
  \item Bandit and delayed feedback: \(\mathcal{I}_t = \{j\in[t] : j \le t-d-1, \delta_j = 1\}\). %\(\bar{\mathcal{I}}_t = \{t\} \cup \{j : t-d \le j \le t-1, \delta_j = 0\}\)
\end{itemize}

At time $t$, the available feedback reveals the true values of $\{\theta_j\}_{j \in \mathcal{I}_t}$. %The key idea is to leverage this feedback to reduce statistical slack, thereby improving the accuracy of FDP estimation. We formalize this in the following \textit{GAI with Feedback (GAIF)} framework. 
We leverage this information to refine the FDP estimator as follows.

%which systematically leverages feedback information in constructing testing levels, thereby enhancing power while ensuring valid online FDR control.

\begin{definition}[GAIF]\label{def:GAIF}
 Define the feedback-enhanced FDP estimator as
\[
\widehat{\mathrm{FDP}}_{\mathrm{GAIF}}(t)
:=
\frac{
 \sum_{j \in \mathcal{I}_t} (1 - \theta_j)\,\alpha_j
\;+\;
 \sum_{j \in \bar{\mathcal{I}}_t} \alpha_j
}{
1 \,\vee\, \sum_{j=1}^t \delta_j
}.
\] A procedure is said to be a \emph{GAI with Feedback (GAIF)} procedure if it assigns test levels $\{\alpha_t\}$ such that, for every $t$,
\[
\widehat{\mathrm{FDP}}_{\mathrm{GAIF}}(t)\le \alpha.
\]
\end{definition}

The GAIF framework aligns with intuition. For $j\in\mathcal{I}_t$, feedback reveals whether the corresponding hypothesis is null or non-null, so their contribution to the FDP bound can be accounted for precisely through $(1-\theta_j)\alpha_j$. For indices not yet revealed, i.e.,  $j\in\bar{\mathcal{I}}_t$, we keep the contribution $\alpha_j$.  Thus, GAIF improves the FDP estimator relative to the no-feedback case.

Importantly, this refinement leads directly to a less conservative constraint on the test levels $\alpha_t$. Recall that LORD++ enforces $\alpha_t\leq \alpha(1\vee \sum_{j\leq t} \delta_j)-\sum_{j\leq t-1}\alpha_j$. In contrast, a GAIF procedure selects $\alpha_t$ to satisfy
\begin{equation}\label{eq:alphat_construction_GAIF}
    \alpha_t \le \alpha(1\vee\sum_{j=1}^t\delta_j) - \sum_{j\in\mathcal{I}_t}\alpha_j(1-\theta_j)-\sum_{j\in\bar{\mathcal{I}}_t\setminus \{t\}}\alpha_j.
\end{equation}
The presence of revealed feedback enlarges the admissible upper bound for \(\alpha_t\). Indeed, since \(1-\theta_j \le 1\) for every \(j\in\mathcal{I}_t\), \eqref{eq:alphat_construction_GAIF} implies
\[
\sum_{j\in\mathcal{I}_t}\alpha_j(1-\theta_j)+\sum_{j\in\bar{\mathcal{I}}_t\setminus \{t\}}\alpha_j\le\sum_{j\in\mathcal{I}_t}\alpha_j+\sum_{j\in\bar{\mathcal{I}}_t\setminus \{t\}}\alpha_j = \sum_{j=1}^{t-1}\alpha_j.
\]
Therefore, the upper bound of $\alpha_t$ under GAIF is always at least as large as that under LORD++. %of $\alpha_t^{\rm LORD++}$, 
%showing that using feedback (which replaces \(\alpha_j\) by the smaller \(\alpha_j(1-\theta_j)\) when \(\theta_j\) is revealed) loosens the constraint in \eqref{eq:alphat_construction_GAIF} and hence permits larger test levels.
In this sense, an appropriate procedure with feedback is expected to improve power by sharpening the FDP estimator and releasing more $\alpha$-wealth for future allocation. 

We next present a concrete instance of GAIF as follows:
\begin{equation}\label{eq:alpha-GAIF}
        \alpha_t^{\text{GAIF}}=\gamma_t s_0+(\alpha-s_0)\gamma_{t-\tau_1}\mathbb{I}\{\tau_1<t\}+\alpha\sum_{j:\tau_j<t,\tau_j\neq \tau_1}\gamma_{t-\tau_j}+\sum_{j:j\in\mathcal{I}_t}{\gamma_{t-j}}\alpha_{j}\theta_j.
    \end{equation}
When no feedback is observed, i.e., $\mathcal{I}_t = \emptyset$, the above reduces exactly to the LORD++ update. Consequently, the test levels $\alpha_t^{\text{GAIF}}$ are never smaller than those of LORD++ whenever feedback is available. {\color{black}The last term represents the additional $\alpha$-wealth recovered from revealed non-nulls and reallocated to future tests.}

\textcolor{black}{
The following result states that the $\alpha_t^{\text{GAIF}}$ construction in Eq. (\ref{eq:alpha-GAIF}) satisfies Definition \ref{def:GAIF}. %of GAIF.
% \begin{proposition}\label{prop:LF_fdphat_control}
%     Assume the testing levels $\alpha_t^{\text{GAIF}}$ are generated by Eq.~(\ref{eq:alpha-GAIF}), the sequence $\{\gamma_j\}_{j=1}^\infty$ satisfies $\sum_{j=1}^{\infty} \gamma_j \leq 1$, and the initial wealth parameter satisfies $s_0 \leq \alpha$. Then, the following is satisfied:
%     \[\widehat{\mathrm{FDP}}_{\mathrm{GAIF}}(t) \le \alpha.\]
% \end{proposition}
\begin{proposition}\label{prop:LF_fdphat_control}
    Assume the testing levels $\alpha_t^{\text{GAIF}}$ are generated by Eq.~(\ref{eq:alpha-GAIF}), the sequence $\{\gamma_j\}_{j=1}^\infty$ is non-negative and satisfies $\sum_{j=1}^{\infty} \gamma_j \leq 1$, the initial wealth parameter satisfies $s_0 \leq \alpha$. Then, the following is satisfied:
    \[\widehat{\mathrm{FDP}}_{\mathrm{GAIF}}(t) \le \alpha.\]
\end{proposition}
}

% Note that if all feedback is ignored so that $\mathcal{I}_t = \emptyset$, the $\alpha_t^{\rm GAIF}$ reduces exactly to that of LORD++. This further implies that the test levels $\alpha_t^{\rm GAIF}$ with the additional feedback term are never smaller than those of LORD++ whenever feedback is available. {\color{black}Intuitively, the last term represents recycled $\alpha$-wealth obtained through feedback.}
%Since the test levels $\alpha_t^{\rm GAIF}$ are never smaller than those of LORD++ whenever feedback is available, all of these procedures achieve higher power than LORD++ by fully exploiting the feedback information.}
 
Algorithm~\ref{alg:GAIF-LORD-SAF} summarizes the implementation of this concrete GAIF procedure under a general feedback regime. The only difference across settings lies in how the feedback-available set \(\mathcal{I}_t\) is updated. We denote the resulting procedures under full and instant feedback, bandit and instant feedback, full and delayed feedback, and bandit and delayed feedback by LF, LF-BI, LF-FD, and LF-BD, respectively. %Here, LF highlights the incorporation of feedback into LORD++.

% To implement GAIF in practice, one maintains the evolving feedback-available set $\mathcal{I}_t$ according to the specific circumstances of the application, and constructs $\alpha_t$ based on past decisions and the feedback revealed up to time~$t$,  following Definition~\ref{def:GAIF}. The overall implementation is identical across different feedback regimes; the only distinction lies in how $\mathcal{I}_t$ is updated. Algorithm~\ref{alg:GAIF-LORD-SAF} summarizes this unified procedure.
% }

{\color{black}
\begin{algorithm}[h!]
    \small
    \captionsetup{font=small}
    \caption{GAIF procedures}
    \label{alg:GAIF-LORD-SAF}
\begin{algorithmic}[1]
    \REQUIRE Target FDR level $\alpha$; pre-specified parameters for constructing test levels.
    \FOR{$t = 1,2,\dots$}
        \STATE Observe the $p$-value $p_t$.
        \STATE Update $\mathcal{I}_t$ according to the specific feedback regime.
        \STATE Construct $\alpha_t=\alpha_t^{\mathrm{GAIF}}$ using Definition~\ref{def:GAIF} with the current $\mathcal{I}_t$.
        \STATE \textbf{if} $p_t\le\alpha_t$ \textbf{then} set $\delta_t=1$; \textbf{else} set $\delta_t=0$.
        \STATE Reveal feedback by obtaining $\theta_j$ for all $j\in\mathcal{I}_{t+1}\setminus\mathcal{I}_t$.
    \ENDFOR
    \ENSURE Rejection set $\mathcal{R}=\{t:\delta_t=1\}$.
\end{algorithmic}
\end{algorithm}
}

\subsection{Theoretical guarantee under independence}\label{subsec:finite-FDR-GAIF}
%\begin{definition}[Conditional super-uniformity]
 %   Define the filtration generated from sigma-fields $\mathcal{F}_t:=\sigma(\delta_1,\dots,\delta_t;\theta_1,\dots,\theta_t)$. And let $\alpha_t:=f_t(\delta_1,\dots,\delta_{t-1};\theta_1,\dots,\theta_{t-1})$, where $f_t$ is an arbitrary function of the first $t-1$ indicators for rejection and true states. We say that the null $p$-values are conditionally super-uniformly distributed if the following holds: 
 %   \[\operatorname{Pr}(p_t\leq \alpha_t\mid \mathcal{F}_{t-1})\leq \alpha_t\quad \text{for all}\; t\in\mathcal{H}_0.\]
%\end{definition}
%in Theorem \ref{the:FDR_GAIF_ind}  %establishes the online error rate control of LF and SF. In particular, 

The following result states that  GAIF procedures guarantee online mFDR control under conditional super-uniformity of null $p$-values
\begin{align}\label{csuag}
\Pr\bigl(p_t \le \alpha_t \,\bigm|\, \mathcal{F}_{t-1}\bigr)
\;\le\;
\alpha_t
\quad
\text{for all } t \in \mathcal{H}_0,
\end{align}
where the enlarged filtration {\color{black}$\mathcal{F}_{t-1}:=\sigma(\delta_{1:t-1};\{\theta_j\}_{j\in{\mathcal{I}}_t})$}. Under stronger independence and monotonicity assumptions, GAIF further achieves online FDR control. Here, the test level sequence $\{\alpha_t\}_{t\in\mathbb{N}}$ is said to be a monotonic function of the past if, for all $t \in \mathbb{N}$, $\alpha_t$ is coordinate-wise non-decreasing in the past decisions $\{\delta_j : j < t\}$ and revealed feedback $\{\theta_j : j \in\mathcal{I}_t\}$ for the GAIF procedure.}

{\color{black}
\begin{theorem}[Online mFDR and FDR control for GAIF]\label{the:FDR_GAIF_ind}
Let $\{\alpha_t\}_{t\in \mathbb{N}}$ be a sequence of test levels satisfying $\widehat{\mathrm{FDP}}_{\rm GAIF}(t)\le \alpha$. Then:
\begin{itemize}
    \item[(a)] If the null $p$-values are conditionally super-uniformly distributed as in \eqref{csuag}, then the procedure guarantees 
    \[
        \operatorname{mFDR}(t)\le \alpha \quad \text{for all } t\in\mathbb{N}.
    \]
    
    \item[(b)] If the null $p$-values are mutually independent and also independent of the non-null $p$-values, and if the test levels $\{\alpha_t\}$ are a monotone sequence of functions of the past for all $t$, then the procedure satisfies
    \[
        \operatorname{FDR}(t)\le \alpha \quad \text{for all } t\in\mathbb{N}.
    \]

    %\item[(c)] 
\end{itemize}
\end{theorem}
%Our testing levels in \ref{eq:alpha-GAIF} satisfy the required monotonicity condition. 
 The test levels of GAIF in \eqref{eq:alpha-GAIF} inherit the monotonicity structure from LORD++, and therefore satisfy the monotonicity condition needed for Theorem \ref{the:FDR_GAIF_ind} (b).
The proof of Theorem~\ref{the:FDR_GAIF_ind} is provided in Appendix~\ref{proof:them1}.}

{\color{black}
\section{Extensions of GAIF}\label{sec:exten-GAIF}
In this section, we discuss two important extensions of GAIF. % to achieve better $\alpha$-wealth allocation and to address local dependence. 
We first develop the Adaptive GAIF framework in Section~\ref{subsec:Adaptive-GAIF}, which further enhances GAIF by enabling adaptive \(\alpha\)-wealth allocation. We then extend our methods to the settings with local dependence in Section~\ref{subsec:GAIF-dep}.

\subsection{Adaptive GAIF: improve GAIF via adaptive $\alpha$-wealth allocation}\label{subsec:Adaptive-GAIF}

Section~\ref{subsec:GAIF} shows that feedback improves power by sharpening the FDP estimate. In online testing, power also depends on how the available $\alpha$-wealth is allocated over time, since the test levels $\alpha_t$ must be chosen sequentially. This motivates an improvement direction beyond feedback-enhanced FDP estimation, namely adaptive $\alpha$-wealth allocation.

To motivate this idea, we revisit the difference between LORD++ and SAFFRON. The SAFFRON weighting function $\kappa(p):=\frac{\mathbb{I}\{p>\lambda\}}{1-\lambda}$ has a {\color{black}well-known interpretation} as a null proportion estimator \citep{storey2004strong,dohler2023unified}. Here we highlight a complementary perspective: $\kappa(\cdot)$ also functions as an $\alpha$-wealth allocation rule, concentrating testing budget on hypotheses with small $p$-values. Indeed, under LORD++, the test levels satisfy $\alpha_t \le \alpha \sum_{j=1}^t \delta_j-\sum_{j=1}^{t-1}\alpha_j$. By contrast, SAFFRON enforces
\begin{equation}\label{eq:SAFFRON-alphat-bound-rev}
\alpha_t \le \alpha\,\frac{1-\lambda}{\mathbb{I}\{p_t>\lambda\}}
\sum_{j=1}^t \delta_j
-\frac{1}{\mathbb{I}\{p_t>\lambda\}}\sum_{j=1}^{t-1}\alpha_j\mathbb{I}\{p_j>\lambda\}.
\end{equation}
When $p_t\le \lambda$, the upper bound is interpreted as $+\infty$, permitting aggressive investment at promising tests; when $p_t>\lambda$ and \(p_j>\lambda\) for all \(j\le t-1\), the bound is strictly smaller than the LORD++ bound, ensuring less wealth is spent on unpromising tests. 
{\color{black}This observation %is not new, but it 
clarifies the distinct roles of FDP estimation and $\alpha$-wealth allocation in improving power, thereby helping to explain how GAIF and SAFFRON achieve power gains over LORD++ through different mechanisms.%---a distinction that matters when comparing GAIF and SAFFRON, which improve upon LORD++ through different mechanisms.
}

%However, SAFFRON's wealth allocation has a structural blind spot: when a non-null hypothesis yields a large $p$-value ($p_j>\lambda$, so $\kappa(p_j)=0$), the budget $\alpha_j$ allocated to that test is permanently lost---it neither contributes to a rejection nor returns to the wealth pool. In the absence of feedback, this loss is unavoidable. But when feedback eventually reveals $\theta_j=1$, we know retrospectively that this budget carried no false discovery risk. The wealth $\alpha_j$ could have been reallocated to future tests; SAFFRON has no mechanism to recover it.
%However, SAFFRON's wealth allocation has a structural blind spot: when a non-null hypothesis yields a large $p$-value ($p_j>\lambda$), it is treated as a non-candidate and contributes to the FDP estimator through $\kappa(p_j)=1/(1-\lambda)$. In the absence of feedback, this conservative penalty is unavoidable. Once feedback reveals $\theta_j=1$, however, we know retrospectively that this test carried no false discovery risk. The wealth $\alpha_j$ could have been reallocated to future tests; SAFFRON has no mechanism to recover it.

{\color{black}However, SAFFRON may allocate $\alpha$-wealth inefficiently when a non-null hypothesis yields a large $p$-value. %, leaving less wealth available for future tests. 
 Specifically, when $p_j>\lambda$, the hypothesis contributes to the FDP estimator through $\kappa(p_j)=1/(1-\lambda)$. In the absence of feedback, this conservative penalty is unavoidable. Once feedback reveals $\theta_j=1$, we know retrospectively that this test carried no false discovery risk. The wealth $\alpha_j$ could have been reallocated to future tests. %; SAFFRON has no mechanism to recover it.
Adaptive GAIF is designed to address this issue. It retains the feedback-based refinement of GAIF while incorporating the SAFFRON-style allocation rule through $\kappa(\cdot)$.} %Thus, feedback is utilized to recover conservatively charged $\alpha$-wealth from revealed non-nulls and reallocate it to future tests. This adaptive allocation mechanism is especially beneficial when the non-null proportion is small: as $\alpha$-wealth recovered via feedback becomes scarcer, its efficient reallocation matters correspondingly more.} We formalize this idea below.
\begin{definition}[Adaptive GAIF]\label{def:GAIF-extend}
Define the adaptive feedback-enhanced FDP estimator as
\[
\widehat{\mathrm{FDP}}_{\mathrm{Ada\text{-}GAIF}}(t)
:=
\frac{
 \sum_{j \in \mathcal{I}_t} (1 - \theta_j)\,\alpha_j\kappa(p_j) 
\;+\;
 \sum_{j \in \bar{\mathcal{I}}_t} \alpha_j \kappa(p_j)
}{
1 \vee \sum_{j=1}^t \delta_j
},
\]
where $\kappa(p)=\frac{\mathbb{I}\{p > \lambda\}}{1 - \lambda}$ and $\lambda\in[0,1)$ is a user-chosen parameter for identifying large \(p\)‑values. A procedure is called an \emph{Adaptive GAIF} procedure if it assigns test levels \(\{\alpha_t\}\) such that, for every \(t\),
\begin{equation}\label{eq:adaptive-GAIF-control}
\widehat{\mathrm{FDP}}_{\mathrm{Ada\text{-}GAIF}}(t) \le \alpha.
\end{equation}
\end{definition}

% \begin{equation}\label{eq:kappa}
%     \kappa(p_j) = \frac{\mathbb{I}\{p_j > \lambda\}}{1 - \lambda}
% \end{equation}

The key difference between GAIF and Adaptive GAIF is the introduction of the $\kappa(p)$ in the FDP estimate. If feedback is ignored so that \(\mathcal{I}_t=\emptyset\), the FDP estimator of Adaptive GAIF reduces to that of SAFFRON.

% This weight returns the unused \(\alpha\)-budget from tests with large \(p\)-values (i.e., \(p_j > \lambda\)) for future allocation, while feedback continues to reduce the slack in FDP estimation through the revealed \(\theta_j\)'s.

\begin{remark}
In principle, other constructions of $\kappa(\cdot)$ may further optimize $\alpha$-wealth allocation while ensuring online FDR control. This online optimization remains an open problem and a direction for future work \citep{gang2021structure}. Instead, we focus on the SAFFRON-type weight $\frac{\mathbb{I}\{p > \lambda\}}{1 - \lambda}$, which offers a  theoretically grounded and practically appealing compromise with robust empirical performance.% with minimal tuning burden.
\end{remark}

A concrete instance of Adaptive GAIF follows the same structure as SAFFRON, augmented with the feedback correction term. {\color{black}For $t = 1$, $\alpha_1^{\text{Ada-GAIF}} = \min\{\gamma_1 s_0 (1 - \lambda),\, \lambda\}$. For $t > 1$,}
\begin{equation}\label{eq:alpha-SAFFRONF}
\begin{aligned}
\alpha_t^{\text{Ada-GAIF}} 
&= \min\Biggl\{\lambda,\;
(1-\lambda)\Biggl[
s_0\, \gamma_{t - C_{0+}}  + (\alpha - s_0)\, \gamma_{t - \tau_1 - C_{1+}}  + \alpha \sum_{j \geq 2} \gamma_{t - \tau_j - C_{j+}}
\Biggr] \\
&\quad + \sum_{j:\, j \in \mathcal{I}_t} 
{\color{black}\gamma_{t - j - C_{j+}^{*}}\, \alpha_j\, \theta_j\, \mathbb{I}\{p_j > \lambda\}}
\Biggr\},
\end{aligned}
\end{equation}
where $C_{j+} = C_{j+}(t) = \sum_{i=\tau_j+1}^{t-1} \mathbb{I}\{p_i \leq \lambda\}$ and {\color{black}$C_{j+}^{*} = C_{j+}^{*}(t) = \sum_{i=j+1}^{t-1} \mathbb{I}\{p_i \leq \lambda\}$. The first term coincides with the standard SAFFRON update; the second term recycles the wealth recovered through feedback.} We denote the resulting procedures under full and instant, bandit and instant, full and delayed, and bandit and delayed feedback by SF, SF-BI, SF-FD, and SF-BD, respectively.

\textcolor{black}{
The Proposition \ref{prop:SF_fdphat_control} states that the $\alpha_t^{\text{Ada-GAIF}}$ construction in Eq. (\ref{eq:alpha-SAFFRONF}) satisfies Definition \ref{def:GAIF-extend}. %the definition of Adaptive GAIF.
% \begin{proposition}\label{prop:SF_fdphat_control}
%     Assume the testing levels $\alpha_t^{\text{Ada-GAIF}}$ are generated by Eq.~(\ref{eq:alpha-SAFFRONF}), the sequence $\{\gamma_j\}_{j=1}^\infty$ satisfies $\sum_{j=1}^{\infty} \gamma_j \leq 1$, and the initial wealth parameter satisfies $s_0 \leq \alpha$. Then, the following is satisfied:
%     \[\widehat{\mathrm{FDP}}_{\mathrm{Ada\text{-}GAIF}}(t) \le \alpha.\]
% \end{proposition}
\begin{proposition}\label{prop:SF_fdphat_control}
    Assume the testing levels $\alpha_t^{\text{Ada-GAIF}}$ are generated by Eq.~(\ref{eq:alpha-SAFFRONF}), the sequence $\{\gamma_j\}_{j=1}^\infty$ is non-negative and satisfies $\sum_{j=1}^{\infty} \gamma_j \leq 1$, and the initial wealth parameter satisfies $s_0 \leq \alpha$. Then, the following is satisfied:
    \[\widehat{\mathrm{FDP}}_{\mathrm{Ada\text{-}GAIF}}(t) \le \alpha.\]
\end{proposition}
}

Figure~\ref{fig:alpha_Gaussian} depicts the average testing thresholds $\{\alpha_t\}$ over time for LF, SF, LORD++, SAFFRON and LOND \citep{javanmard2015online} applied to Gaussian observations under full and instant feedback. {\color{black}Across all non-null proportions $\pi_1$}, both LF and SF yield strictly larger thresholds than their feedback-free counterparts, confirming that feedback enables more effective $\alpha$-wealth utilization. %The relative ordering between LF and SF further reflects the interplay between the two improvement mechanisms. 
{\color{black}When $\pi_1$ is small, revealed non-nulls are rare and feedback-recovered wealth is limited; adaptive $\alpha$-wealth allocation then dominates, and SF outperforms LF---at $\pi_1=0.2$. %, even SAFFRON surpasses LF. 
As $\pi_1$ increases, feedback becomes more informative and wealth recycling grows, allowing LF to surpass SF.}%, thereby achieving higher power than the traditional GAI framework.} 

\begin{figure}[htbp!]
		\centering
		\includegraphics[width=\textwidth]{Fig-GAIF/alpha_all_pi1.pdf}
		\caption{\small Average testing thresholds $\alpha_t$ over time under various procedures, based on 500 replications across non-null proportions {\color{black}$\pi_1 \in \{0.2, 0.4, 0.6, 0.8\}$}. Data are generated under Scenario~I of Section~\ref{subsec:simu-GAIF} with full and instant feedback and signal strength $\mu = 2$. }
  \label{fig:alpha_Gaussian}
	\end{figure}

%\subsection{Theoretical guarantee under independence}\label{subsec:finite-FDR-Ada-GAIF}

{\color{black}
We now establish the theoretical results for Adaptive GAIF. Similar to \eqref{csuag}, the conditional super-uniformity condition on the null $p$-values is
    \begin{align}\label{csuag-SF}
\Pr\bigl(p_t \le \alpha_t \,\bigm|\, \mathcal{J}_{t-1}\bigr)
\;\le\;
\alpha_t
\quad
\text{for all } t \in \mathcal{H}_0,
\end{align}
%\end{equation}
where {\color{black}$\mathcal{J}_{t-1}:=\sigma(\delta_{1:t-1};C_{1:t-1};\{\theta_j\}_{j\in{\mathcal{I}}_t})$} is the enlarged filtration, where $C_j = \mathbb{I}\{p_j \leq \lambda\}$. %in Theorem \ref{the:FDR_Ada_GAIF_ind}.
%For the Adaptive GAIF procedure, 
And we extend the monotonicity definition in Theorem \ref{the:FDR_GAIF_ind} by additionally requiring $\alpha_t$ to be coordinate-wise non-decreasing in the indicators $\{C_j : j < t\}$. 
\begin{theorem}[Online mFDR and FDR control for Adaptive GAIF]\label{the:FDR_Ada_GAIF_ind}
Let $\{\alpha_t\}_{t\in \mathbb{N}}$ be a sequence of test levels satisfying  $\widehat{\mathrm{FDP}}_{\rm Ada\text{-}GAIF}(t)\le \alpha$. Then:
\begin{itemize}
    \item[(a)] If the null $p$-values are conditionally super-uniform \eqref{csuag-SF}, the procedure guarantees 
    \[
        \operatorname{mFDR}(t)\le \alpha \quad \text{for all } t\in\mathbb{N}.
    \]
    
    \item[(b)] If the null $p$-values are mutually independent and independent of the non-null $p$-values, and if the test levels $\{\alpha_t\}$ form a monotone sequence of functions of the past for all $t$, then the procedure satisfies
    \[
        \operatorname{FDR}(t)\le \alpha \quad \text{for all } t\in\mathbb{N}.
    \]

    %\item[(c)] 
\end{itemize}
\end{theorem}
The test levels of Adaptive GAIF in \eqref{eq:alpha-SAFFRONF} inherit the monotonicity structure from SAFFRON, and therefore satisfy the monotonicity condition needed for Theorem \ref{the:FDR_Ada_GAIF_ind} (b).
The proof of Theorem~\ref{the:FDR_Ada_GAIF_ind} is provided in Appendix~\ref{proof:them1-Ada-GAIF}.
}%{\color{black}

\subsection{GAIF under local dependence}\label{subsec:GAIF-dep}
%Having established our framework under the assumption of independence, we now turn to the case of local dependence. 

This subsection focuses on extending the GAIF and Adaptive GAIF framework to settings with local dependence (as defined in Definition \ref{def:local_dep}). This extension is useful in applications where nearby tests may be dependent, while long-range dependence is absent or weak. For a strategy that addresses more general dependence structures via e-values, the reader may refer to Appendix \ref{app:exten-GAIF}.

\begin{definition}[Local dependence; \cite{Zrnic2021asynchronous}]\label{def:local_dep}
    We say that $p$-values $p_1,p_2,\dots,p_t,\dots$ are locally dependent if
    \begin{equation}\label{eq:local-dep}
        \text{for all}\; t>0,\; \text{there exists}\; L_t\in\mathbb{N}\; \text{such that}\; p_t\perp p_{t-L_t-1},p_{t-L_t-2},\dots,p_1,
    \end{equation}
    where $\{L_t\}_{t\in\mathbb{N}}$ is a fixed sequence of parameters which we refer to as lags.
\end{definition}

%\paragraph{Related works.} 
 \cite{Zrnic2021asynchronous} pioneered a strategy to handle this local dependence in online multiple testing, leading to the development of the $\text{LORD}_{\text{dep}}$ and $\text{SAFFRON}_{\text{dep}}$ methods. %However, their methods overlook the feedback information.
 %, where $\text{LORD}_{\text{dep}}$ controls the following quantity:
%\[\widehat{\operatorname{FDP}}_{\text{LORD}_{\text{dep}}}(t)=\frac{\sum_{j\leq t}\alpha_j}{(\sum_{j\leq t, j\notin \{t-L_t,\dots,t-1\}}\delta_j)\vee 1}\leq \alpha.\]
{\color{black}We extend the idea to the feedback-enhanced setting. Under local dependence, %The following estimators define 
the dependence--adjusted FDP estimates for GAIF and Adaptive GAIF are, respectively: %Specifically, under a local dependence structure, the estimated false discovery proportions are given by
\begin{align*}
\widehat{\operatorname{FDP}}_{\mathrm{GAIF}_{\mathrm{dep}}}(t)
&=\frac{\sum_{j\in \mathcal{I}_t}(1-\theta_j)\alpha_j+\sum_{j\in\bar{\mathcal{I}}_t}\alpha_j}
{ \bigl(\sum_{j\leq t,\; j\notin \{t-L_t,\dots,t-1\}}\delta_j \bigr)\vee 1},
% \widehat{\operatorname{FDP}}_{\mathrm{Ada\text{-}GAIF}_{\mathrm{dep}}}(t)
% &=\frac{\sum_{j< t-L_t,\; j\in\mathcal{I}_t}(1-\theta_j)\alpha_j\frac{\mathbb{I}\{p_j>\lambda\}}{1-\lambda}+\sum_{j< t-L_t,\; j\in\bar{\mathcal{I}}_t}\alpha_j\frac{\mathbb{I}\{p_j>\lambda\}}{1-\lambda}
% +\sum_{t-L_t\leq j\leq t,\; j\in\mathcal{I}_t}(1-\theta_j)\frac{\alpha_j}{1-\lambda}
% +\sum_{t-L_t\leq j\leq t, j\in\bar{\mathcal{I}}_t}\frac{\alpha_j}{1-\lambda}}
% { \bigl(\sum_{j\leq t,\; j\notin \{t-L_t,\dots,t-1\}}\delta_j \bigr)\vee 1 }.
\end{align*}
and
\begin{equation*}
\widehat{\operatorname{FDP}}_{\mathrm{Ada\text{-}GAIF}_{\mathrm{dep}}}(t) = \frac{ \sum_{j < t-L_t} \frac{\alpha_j \mathbb{I}\{p_j>\lambda\}}{1-\lambda} \mathcal{W}_j + \sum_{t-L_t \le j \le t} \frac{\alpha_j}{1-\lambda} \mathcal{W}_j }{ \bigl(\sum_{j\leq t,\; j\notin \{t-L_t,\dots,t-1\}}\delta_j \bigr)\vee 1 },
\end{equation*}
 where $\mathcal{W}_j = (1-\theta_j)\mathbb{I}\{j \in \mathcal{I}_t\} + \mathbb{I}\{j \in \bar{\mathcal{I}}_t\}$. The corresponding testing levels are chosen so that % satisfy the constraints 
$\widehat{\mathrm{FDP}}_{\mathrm{GAIF}_{\mathrm{dep}}}(t)\leq\alpha$ or $\widehat{\mathrm{FDP}}_{\mathrm{Ada\text{-}GAIF}_{\mathrm{dep}}}(t)\leq\alpha$. The resulting testing levels are provided in Appendix~\ref{appendix:testing-levels}.
Theorem \ref{the:mFDR-control-dep} shows that $\text{GAIF}_{\rm dep}$ and $\text{Ada-GAIF}_{\rm dep}$ methods control mFDR under local dependence.}%, as established in Theorem \ref{prop:mFDR-control-dep}.
\begin{theorem}[Online mFDR control under local dependence]\label{the:mFDR-control-dep}
Suppose that the null $p$-values are locally dependent as defined in (\ref{eq:local-dep}). Then if the test levels $\{\alpha_t\}_{t\in\mathbb{N}}$ satisfy {\color{black}$\widehat{\operatorname{FDP}}_{\operatorname{GAIF}_{\rm{dep}}}(t)\leq \alpha$ or $\widehat{\operatorname{FDP}}_{\operatorname{Ada-GAIF}_{\rm{dep}}}(t)\leq \alpha$}, we have
$\operatorname{mFDR}(t)\leq \alpha$ for all \(t \in \mathbb{N}\).
\end{theorem}

\section{Applications of GAIF on Online Conformal Selection}\label{sec:OCT}

Online conformal selection \citep{huo2024realtime,xu2024online} provides a canonical and intuitive instantiation of the GAIF framework. Here, decisions must be made in real time about whether an incoming observation satisfies a pre-specified requirement, while the true label is typically observed after the decision and thus serves as feedback. \textcolor{black}{Throughout this section, we focus on the full and instant feedback setting.} 

Consider a data pair \((\X, Y) \in \gX \times \gY\), with a historical calibration dataset \(\mathcal{D}_{\gC} = \{(\X_i, Y_i)\}_{i=-n+1}^0\) of size $n$, with index set $\gC$. Test samples \((\X_t, Y_t)\) arrive sequentially for \(t = 1,2,\dots\). At each time \(t\), the covariate \(\X_t \) is observed but responses \(Y_t\) remain hidden until a real-time decision \(\delta_t \in \{0,1\}\) is made. The goal is to determine whether \(Y_t\) lies in a target region \(\gA \subseteq \gY\), for example, \(\gA=[a,b]\)  or \(\gA=[b,\infty)\) in regression tasks. This can be framed as an online testing problem with \(\theta_t = \mathbb{I}\{Y_t \in \gA\}\). {\color{black}The GAIF framework developed in the preceding sections offers a principled basis for analyzing this setting and enables rigorous theoretical guarantees.} 

\textcolor{black}{We apply the GAIF framework to online conformal selection as follows. Since no off-the-shelf \(p\)-values are available, we first construct feedback-assisted online conformal \(p\)-values in Section~\ref{subsec:p-val-construct}. We then incorporate them into a modified GAIF (or Adaptive GAIF) rule to obtain finite-sample mFDR control in Section~\ref{subsec:OCTF}. We further use feedback to adaptively select the score model for conformal testing in Section~\ref{subsec:Opt-OCTF}, and analyze its optimality in Section~\ref{subsec:optimality}.}

\subsection{Construction of online conformal $p$-values} \label{subsec:p-val-construct}

\textcolor{black}{We begin by constructing suitable online \(p\)-values for this online conformal selection setting.} %To this end, }
%To construct $p$-values, 
Let $V(\X)$ be a non-conformity score, where larger values indicate a higher likelihood that $\theta_i=0$. Typically, $V(\X)$ is a monotone transformation of the prediction $\widehat{\mu}(\X)$, assumed pre-trained to estimate \(Y_t\). For example, if $\theta=\mathbb{I}\{Y_t \geq b\}$, one can take $V(X)=b-\widehat{\mu}(\X)$. For simplicity, we write $V_i=V(\X_i)$. 

A natural approach to achieving online FDR control is to compute conformal \( p \)-values with the given calibration set $\gC$ \citep{bates2021testing, jin2023selection} and then apply GAIF or existing GAI rules. However, %this is not directly applicable in the present setting, as %they rely on 
standard conformal \( p \)-values are not suitable here, since they do not satisfy the conditional super-uniformity in \eqref{csuag} and independence assumptions.
To circumvent these issues, we adopt \emph{online} conformal $p$-values, which are constructed by sequentially updating the calibration set. %The construction is as follows.
%The strategy is that each test point is incorporated into the calibration set for the next time step. 
Specifically, at time $t$, the current point \((\X_{t}, Y_{t})\) serves as the test point, but from time \(t+1\) onward, it is added to the calibration set. Let $\gC_{0t}$ denote the dynamically updated index set of calibration samples with 
\begin{equation}
    \gC_{0t} = \{ -n<i<t : \theta_i=0 \}.\nonumber
\end{equation}
For test data $\X_{t}$ at time $t$, the online conformal $p$-value $p_t$ is defined as  
\begin{equation}\label{eq:conf_p}
    p_t=\frac{\sum_{i\in{\gC}_{0t}} \I\{V_{i}<V_{t}\}+\xi_t\cdot \left[1+\sum_{i\in \gC_{0t}} \mathbb{I}\{V_{i}=V_{t}\}\right]}{1+|\gC_{0t}|},
\end{equation}
where  $\xi_t\overset{\text{i.i.d.}}{\sim} \operatorname{Unif}[0,1]$ is an independent random variable for tie-breaking. 
Similar forms of online conformal \( p \)-values have been studied for testing exchangeability \citep{vovk2003testing, vovk2021testing} and constructing online conformal prediction intervals \citep{angelopoulos2024theoretical}. 

Under exchangeability, this construction yields null $p$-values that are independent, thereby avoiding the complex dependence induced by  structures inherent in offline conformal methods, which rely on 
shared calibration sets and produce $p$-values that are PRDS \citep{bates2021testing}.
We impose the following assumption.
\begin{assumption}[Exchangeability in conformal setting]\label{assump:conformal_setting}
    The null data $\big\{(\X_i,Y_i):  i> -n,\theta_i=0\big\}$ are exchangeable conditional on the non-null data $\big\{(\X_i,Y_i):  i> -n,\theta_i=1\big\}$.
\end{assumption}
This assumption is common in conformal inference \citep{adadetect}, and is weaker than requiring that the full data $\{(\X_i,Y_i)\}_{i>-n}$ are independent and identically distributed (i.i.d.). We formally state the validity and independence under the null in Proposition~\ref{prop-inde-p}. 

\begin{proposition}[{\color{black}Validity and independence of online conformal \( p \)-values}]\label{prop-inde-p}
Suppose Assumption~\ref{assump:conformal_setting} holds. Then under the null, the online conformal \( p \)-value \( p_t \) defined in \eqref{eq:conf_p} is uniformly distributed on \([0,1]\), and the null p-values \(\{p_t:t\in\mathbb{N},\theta_t=0\}\) are mutually independent.
\end{proposition}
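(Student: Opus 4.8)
The plan is to condition throughout on the non-null data $\{(\X_i,Y_i):\theta_i=1\}$. By Assumption~\ref{assump:conformal_setting} this makes the null scores $\{V_i:\theta_i=0\}$ an exchangeable sequence, and it fixes the set of null indices $\mathcal{N}=\{i>-n:\theta_i=0\}$; I list them increasingly as $j_1<j_2<\dots$ and write $W_k:=V_{j_k}$ and $\eta_k:=\xi_{j_k}$. Since feedback guarantees $\gC_{0,j_k}=\{j_1,\dots,j_{k-1}\}$, definition \eqref{eq:conf_p} becomes
\[
p_{j_k}=\frac{\sum_{l<k}\I\{W_l<W_k\}+\eta_k\bigl(1+\sum_{l<k}\I\{W_l=W_k\}\bigr)}{k},
\]
so $p_{j_k}$ is exactly the \emph{randomized sequential rank} of $W_k$ within $W_1,\dots,W_k$. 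The proposition then reduces to the statement that, for an exchangeable sequence $(W_k)$ equipped with independent tie-breakers $(\eta_k)\overset{\text{i.i.d.}}{\sim}\operatorname{Unif}[0,1]$, these randomized sequential ranks are i.i.d.\ $\operatorname{Unif}[0,1]$.

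The marginal uniformity is the easy half. Conditioning on the unordered multiset $\{W_1,\dots,W_k\}$, exchangeability makes $W_k$ equally likely to be any of its $k$ members; if $W_k$ lands in a tied block occupying sorted positions $a+1,\dots,a+b$, the display above equals $(a+\eta_k b)/k\sim\operatorname{Unif}[\tfrac{a}{k},\tfrac{a+b}{k}]$, an interval selected with probability $b/k$ equal to its length. These intervals tile $[0,1]$, so $p_{j_k}\sim\operatorname{Unif}[0,1]$ regardless of the multiset; in particular this recovers the validity claim.

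The crux—and the step I expect to be the real obstacle—is mutual independence, which looks problematic because the calibration sets $\gC_{0,j_k}$ are nested and overlap completely. The device that resolves this is the sequential-rank (Lehmer-code) structure of exchangeable sequences. When the scores are a.s.\ distinct, exchangeability means all $k!$ orderings of $W_1,\dots,W_k$ are equally likely; the integer sequential ranks $R_k=1+\sum_{l<k}\I\{W_l<W_k\}$ are precisely the digits of the Lehmer code of this uniform random permutation, and that code is a bijection onto $\prod_k\{1,\dots,k\}$, so the $R_k$ are independent with $R_k\sim\operatorname{Unif}\{1,\dots,k\}$. Smoothing by the fresh, independent $\eta_k$ turns each $R_k$ into $p_{j_k}=(R_k-1+\eta_k)/k\sim\operatorname{Unif}[0,1]$, and independence of the $R_k$ together with the i.i.d.\ $\eta_k$ yields independence of the $p_{j_k}$. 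Ties are the only genuine nuisance: there one runs the same argument on the a.s.\ distinct lexicographic pairs $(W_k,\eta_k)$ and checks that the randomization in the numerator reproduces the uniform law, exactly as established for online conformal $p$-values by \citet{vovk2003testing,vovk2021testing}; I would cite that result for the tie case rather than reprove it.

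Finally, since the conclusion that $(p_{j_k})_k$ are i.i.d.\ $\operatorname{Unif}[0,1]$ holds conditionally on the non-null data with a limiting law free of that data, a routine tower-property argument removes the conditioning, delivering both the marginal uniformity and the mutual independence of the null conformal $p$-values asserted in the proposition.
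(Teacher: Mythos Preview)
Your approach is correct and takes a genuinely different route from the paper. You condition on the non-null data up front, re-index the null observations, and recognize each null $p$-value as the randomized sequential rank of $W_k$ within an exchangeable sequence; independence in the distinct-score case then drops out of the Lehmer-code bijection, with the tied case delegated to \citet{vovk2003testing,vovk2021testing}. The paper instead argues by backward peeling: at each null time $t$ it conditions on the unordered bag $\Phi_t=\bigl(\{(\X_i,Y_i):i\in\gC_{0t}\cup\{t\}\},(\theta_i)_{i\le t},((\X_i,Y_i):i\in\gC_{1t})\bigr)$, shows explicitly (ties included) that $p_t\mid\Phi_t\sim\operatorname{Unif}[0,1]$, and verifies that every later $p_{t'}$ is invariant under permutations of $\gC_{0t}\cup\{t\}$, so that $\Pr(p_t\le x_t,\dots,p_T\le x_T)=x_t\,\Pr(p_{t+1}\le x_{t+1},\dots,p_T\le x_T)$ and the joint CDF factorizes. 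Your argument is shorter and ties the result cleanly to the classical sequential-rank literature; the paper's permutation-invariance argument is heavier but (i) handles ties self-containedly rather than by citation, and (ii) is proved for time-varying, data-dependent score functions $V(\cdot;\gD_t)$ symmetric in the null data---precisely the generality needed for the model-selection extension in Corollary~\ref{the:FDR-Opt}, which your fixed-score Lehmer argument would not cover without additional work. One minor caveat: the lexicographic-pair remedy you sketch for ties does not literally reproduce formula~\eqref{eq:conf_p}, since the lex rank of $(W_k,\eta_k)$ involves $\eta_l$ for $l<k$ whereas~\eqref{eq:conf_p} uses only $\eta_k$; your instinct to fall back on citing Vovk there is the right call.
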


\textcolor{black}{The mutual independence above indeed relies on the full and instant feedback setting, since only in this case does \(\gC_{0t}\) contain the complete set of past null samples, allowing the sequential rank-based independence argument to hold. For delayed feedback, additional adjustments are needed; see the finite-sample analysis in Appendix~\ref{appen:OCTF-delay}. We also allow the score function to vary over time and be data-dependent, provided that it remains symmetric on the null samples; see Appendix~\ref{appen_sub:online_cp}. This flexibility enables leveraging non-null feedback to improve score quality, facilitating score selection and adaptation to distribution shift.
}

\subsection{Online conformal testing with feedback}\label{subsec:OCTF}
Having established the properties of online conformal \( p \)-values, we now introduce the corresponding testing procedures. Although applying GAIF directly to those \( p \)-values yields satisfactory FDR control empirically, a finite-sample theoretical guarantee remains lacking. The key challenge is that null online conformal \( p \)-values may still be dependent on non-null decisions through the adaptive testing levels.

%To overcome this issue, we propose a modified GAIF rule that admits rigorous, distribution-free, and model-agnostic finite-sample mFDR guarantees. The key idea is to construct the sequence \(\{\alpha_t\}_{t\in\mathbb{N}}\) based solely on the rejections of true nulls, rather than on all rejections, by replacing \( \tau_j \) with \( \tilde{\tau}_j \), the time of the \( j \)-th rejection under the null:
%\[
%\tilde{\tau}_j = \inf\left\{t\in\mathbb{N}:\sum_{i\leq t}\delta_i(1-\theta_i)\geq j\right\}.
%\]
%We obtain the safe versions of LF and SF by replacing $\tau_j$ with $\tilde{\tau}_j$, leading to the LFS and SFS procedures, respectively. The letter ``S" stands for ``safe", since it guarantees rigorous finite-sample mFDR control. 
%By doing so, $\tilde{\tau}_j$ is fixed given past null p-values, as its construction depends only on the rejections of true null hypotheses. The \emph{Online Conformal Testing with Feedback} (OCTF) procedure %, based on online conformal \( p \)-values, 
% is detailed in Algorithm~\ref{alg:OCTF-ind-conservative}. 
{\color{black}%To resolve this, we propose a modified GAIF rule that restricts the rejection reward to true-null rejections only, replacing \( \tau_j \) with \( \tilde{\tau}_j \), the time of the \( j \)-th rejection under the null:
%\[
%\tilde{\tau}_j = \inf\left\{t\in\mathbb{N}:\sum_{i\leq t}\delta_i(1-\theta_i)\geq j\right\}.
%\]
%Since $\tilde{\tau}_j$ depends only on null rejections, the resulting test levels are decoupled from non-null decisions, enabling rigorous finite-sample mFDR control in Theorem \ref{the:FDR-OCTF}. We obtain the safe versions of LF and SF by replacing $\tau_j$ with $\tilde{\tau}_j$, leading to the LFS and SFS procedures (``S'' for ``safe''). The \emph{Online Conformal Testing with Feedback} (OCTF) procedure is detailed in Algorithm~\ref{alg:OCTF-ind-conservative}.}
To resolve this, we construct safe variants of LF and SF in which the rejection rewards are triggered only by confirmed null rejections. Specifically, we replace the usual rejection times $\tau_j$ with $\tilde{\tau}_j$, the time of the $j$-th null rejection: 
\[
\tilde{\tau}_j
=
\inf\left\{
t\in\mathbb{N}:
\sum_{i\leq t}\delta_i(1-\theta_i)\geq j
\right\}.
\]
Since $\tilde{\tau}_j$ is determined solely by null rejections, the resulting test levels do not depend on non-null decisions. We denote the resulting procedures by LFS and SFS, where ``S'' stands for ``safe.'' The corresponding \emph{Online Conformal Testing with Feedback} (OCTF) procedure is given in Algorithm~\ref{alg:OCTF-ind-conservative}, and the following theorem establishes its finite-sample mFDR control.}

 \begin{algorithm}[h!]
    \small
    \captionsetup{font=small}
    \caption{Online conformal testing with feedback (OCTF)}
    \label{alg:OCTF-ind-conservative}
\begin{algorithmic}[1]
    \REQUIRE{Initial calibration data $\gD_\gC=\{(\X_i, Y_i)\}_{i=-n+1}^0$, target region $\mathcal{A}$, FDR target level $\alpha \in (0,1)$, non-conformity score function $V(\cdot)$, parameter $s_0$,  parameter sequence $\{\gamma_t\}_{t\in\mathbb{N}}$,  constant $\lambda\in(0,1)$.} \\[0.5ex]
      \STATE Initialize $\gC_{01}=\{i\in\gC:\theta_i=0\}$.
    \FOR{$t=1,2,\ldots$}  
    \STATE Observe test covariate $\X_{t}$.
        \STATE Compute conformity scores $\{V_{i}\}_{i\in\gC_{0t}\cup\{t\}}$.
        \STATE Compute online conformal $p$-value $p_t$ via~\eqref{eq:conf_p}.
      \STATE  Update $\alpha_t={\alpha}_t^{\text{LFS}}$ in \eqref{eq:alpha-LORDF-conservative} (or $\alpha_t={\alpha}_t^{\text{SFS}}$ in \eqref{eq:alpha-SAFFRONF-conservative}).
    \STATE \textbf{if} $p_t\leq \alpha_t$ \textbf{then} $\delta_t=1$, \textbf{else} $\delta_t=0$. 
    \STATE Obtain the revealed feedback $Y_{t}$, and thus obtain $\theta_t$.
    \STATE Update the calibration dataset $\gC_{0t}$.
    \ENDFOR 
    \ENSURE{Rejection set $\mathcal{R}=\{t:\delta_t=1\}$.}
\end{algorithmic}
\end{algorithm}

\begin{theorem}[Finite-sample online mFDR control for OCTF]\label{the:FDR-OCTF}
     Suppose Assumption \ref{assump:conformal_setting} holds.
     The output $\gR$ of Algorithm \ref{alg:OCTF-ind-conservative} satisfies $\mFDR(t)\leq \alpha$ for all $t\in\mathbb{N}$.
\end{theorem}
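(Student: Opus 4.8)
\textbf{Proof proposal for Theorem~\ref{the:FDR-OCTF}.}

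The plan is to reduce the mFDR guarantee for OCTF to a statement about a single stopped martingale-type identity, exploiting the two structural features established earlier: the null online conformal $p$-values are i.i.d.\ uniform and mutually independent (Proposition~\ref{prop-inde-p}), and the OCTF thresholds are built from $\tilde\tau_j$, the times of null rejections, rather than from all rejections. The key reduction is that controlling $\mFDR(t)=\E\{V(t)\}/\E\{1\vee R(t)\}\le\alpha$ is equivalent to showing $\E\{V(t)\}\le\alpha\,\E\{1\vee R(t)\}$, and since each OCTF threshold satisfies $\widehat\FDP_{\rm GAIF}(t)\le\alpha$ with the GAIF weight condition \eqref{eq:FDP_GAIF_condition}, it suffices to prove
\[
\E\Bigl\{\sum_{j\le t}(1-\theta_j)\,\delta_j\Bigr\}
\;\le\;
\E\Bigl\{\sum_{j\le t}(1-\theta_j)\,\alpha_j\Bigr\},
\]
i.e.\ that the expected number of false rejections is dominated by the expected sum of null test-levels. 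First I would fix a null index $t\in\mathcal H_0$ and condition on the filtration $\mathcal F_{t-1}=\sigma(\delta_1,\dots,\delta_{t-1};\theta_1,\dots,\theta_{t-1})$ enriched with the calibration information; the crucial point is that $\alpha_t$ is measurable with respect to the past \emph{null} rejection pattern, so $\alpha_t$ depends on $\{p_j: j<t,\theta_j=0\}$ only through the $\tilde\tau_j$'s and the $C_i=\mathbb I\{p_i\le\lambda\}$, all of which are frozen once we condition on the null history.

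The central step is a conditional super-uniformity check tailored to online conformal $p$-values. By Proposition~\ref{prop-inde-p} the null $p_t$ is uniform on $[0,1]$ and independent of the other null $p$-values; I would argue that it is also independent of the non-null $p$-values in the specific sense needed, because the calibration set $\gC_{0t}$ only ever incorporates points with $\theta_i=0$, so the comparison defining $p_t$ in \eqref{eq:conf_p} involves only null scores, and the non-null data enter solely as conditioning variables under Assumption~\ref{assump:conformal_setting}. This yields the conditional statement $\Pr(p_t\le\alpha_t\mid\mathcal F_{t-1})\le\alpha_t$ for $t\in\mathcal H_0$, which is exactly \eqref{csuag} specialized to the conformal construction. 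Summing over null indices and taking expectations then gives $\E\{(1-\theta_t)\delta_t\mid\mathcal F_{t-1}\}\le(1-\theta_t)\alpha_t$ on the null event, and hence the displayed inequality after telescoping.

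Finally I would invoke the OCTF threshold inequality itself: because the $\{\alpha_t\}$ are constructed (via the LFS/SFS rules) precisely so that $\widehat\FDP_{\rm GAIF}(t)\le\alpha$ with the null-rejection denominator $1\vee\sum_{j\le t}\delta_j$, substituting the bound on $\E\{V(t)\}$ into $\widehat\FDP_{\rm GAIF}(t)\le\alpha$ and using the weight-function condition \eqref{eq:FDP_GAIF_condition} delivers $\E\{V(t)\}\le\alpha\,\E\{1\vee R(t)\}$, which is the desired $\mFDR(t)\le\alpha$. The step I expect to be the main obstacle is the conditional super-uniformity claim: one must verify carefully that conditioning on $\mathcal F_{t-1}$ does \emph{not} destroy the exchangeability of the null scores entering $p_t$, since the dynamically updated calibration set $\gC_{0t}$ and the tie-breaking variables $\xi_t$ interact with the feedback-driven threshold $\alpha_t$. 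The resolution should be that, by restricting the threshold construction to $\tilde\tau_j$ (null rejections only), $\alpha_t$ becomes a \emph{predictable} function of quantities that are independent of the fresh null score $V_t$, so the uniformity of $p_t$ given the past is preserved — this is exactly why the ``safe'' modification is needed, and why naive GAIF applied to the raw conformal $p$-values lacks a theoretical guarantee.
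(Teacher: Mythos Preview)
Your proposal follows the same high-level route as the paper: bound $\E\{V(t)\}$ by $\E\{\sum_{j\le t}(1-\theta_j)\alpha_j\}$ via a conditional super-uniformity argument, then use the LFS/SFS construction to close the loop. Two points deserve sharpening.

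First, the conditioning set. You condition on $\mathcal F_{t-1}=\sigma(\delta_1,\dots,\delta_{t-1};\theta_1,\dots,\theta_{t-1})$, but this sigma-field contains the past \emph{non-null} decisions $\delta_i$ with $\theta_i=1$, and each such $\delta_i$ is a function of the non-null $p_i$, which ranks the non-null score $V_i$ against the null calibration scores in $\gC_{0i}$. Conditioning on these decisions therefore leaks partial information about the null calibration scores that also enter $p_t$, and the uniformity of $p_t$ given $\mathcal F_{t-1}$ is not immediate. The paper instead conditions on $\Psi_t=\bigl((p_k:k\in\gC_{0t}),(\theta_k:k<t),(V_k:k\in\gC_{1t})\bigr)$, i.e.\ the past null $p$-values, all state indicators, and the \emph{raw} non-null scores. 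This choice does two things at once: (a) $\alpha_t^{\rm LFS}$ is $\Psi_t$-measurable, because the $\tilde\tau_j$'s are determined by null decisions only, which in turn are determined by null $p$-values; and (b) given $\Psi_t$ and $\theta_t=0$, the residual randomness is exactly the exchangeable ordering in $\gC_{0t}\cup\{t\}$, so Proposition~\ref{prop-inde-p} delivers uniformity of $p_t$. Your paragraph on the ``main obstacle'' correctly diagnoses the issue but stops short of naming the right sigma-field.

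Second, the closing inequality. You invoke $\widehat{\FDP}_{\rm GAIF}(t)\le\alpha$ and condition~\eqref{eq:FDP_GAIF_condition}, but LFS/SFS are not instances of Definition~\ref{def:GAIF} with denominator $1\vee\sum_{j\le t}\delta_j$; condition~\eqref{eq:FDP_GAIF_condition} concerns expectations of ratios and is tailored to FDR, not mFDR. What the $\tilde\tau_j$-modification actually buys is the \emph{pointwise} inequality
\[
\sum_{j\le t-1}(1-\theta_j)\alpha_j+\alpha_t\;\le\;\alpha\Bigl(1\vee\sum_{j\le t}\delta_j(1-\theta_j)\Bigr),
\]
with only null rejections in the denominator. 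Taking expectations and then using $\delta_j(1-\theta_j)\le\delta_j$ gives $\E\{\sum(1-\theta_j)\alpha_j\}\le\alpha\,\E\{1\vee R(t)\}$ directly, without appealing to~\eqref{eq:FDP_GAIF_condition}. This is precisely why the ``safe'' modification is needed: it makes the denominator itself $\Psi_t$-measurable, so the pointwise bound survives the conditioning.
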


{\color{black}

Theorem~\ref{the:FDR-OCTF} is established under a dependence regime that naturally arises from online conformal $p$-values: the null $p$-values are mutually independent, whereas their dependence with non-null $p$-values is generally unspecified. This regime is not directly covered by the standard assumptions used in classical online testing procedures such as LOND, LORD, and SAFFRON. 
In this setting, LFS and SFS  trade some potential power for a rigorous theoretical guarantee. LF and SF retain the more operationally natural reward structure and serve as more aggressive alternatives when empirical power is the primary consideration. Although they maintain empirical FDR control in our simulations (Section~\ref{sec:simu}), their finite-sample validity under this dependence regime remains open.

\begin{remark}
%We briefly discuss why Theorem~\ref{the:FDR-OCTF} provides mFDR rather than FDR control. 
Theorem~\ref{the:FDR-OCTF} controls mFDR rather than FDR. Although the safe construction removes past non-null rejections from the threshold update, they remain in the random FDP denominator and may be dependent on a current null $p$-value. The resulting dependence makes extending mFDR control to FDR nontrivial %, and a similar difficulty has been noted in asynchronous setting 
\citep{Zrnic2021asynchronous}. Finite-sample FDR control under the present assumptions therefore remains open.
\end{remark}

}

\subsection{Optimized online conformal testing with feedback}\label{subsec:Opt-OCTF}

The performance of online conformal testing depends on the choice of the non-conformity scores \( V(\X) \) and the prediction model, e.g., random forests, neural networks, support vector machines, or regularized linear models \citep{bai2024optimized,gasparin2024conformal}. In practice, the most effective score may also change over time, especially when the non-null distribution drifts \citep{gang2021structure,huo2024realtime}. This motivates a feedback-driven score-selection step built on top of the OCTF framework.%We collectively refer to these choices as \emph{model selection}. Moreover, In practice, data distributions in non-null may shift over time: the $\{(\X_t,Y_t)\}_{t\in\mathbb{N}}$ remains exchangeable under the null but may change in distribution under the alternative \citep{gang2021structure,huo2024realtime}. Dynamically selecting models adapted to the current distribution can substantially alleviate the shifting effects. This motivates us to design a versatile framework that ensures error rate control in online conformal testing while allowing flexible data reuse for model optimization and ultimately enhancing power.

{\color{black}Suppose there are \( K \) pre-trained candidate score functions \( \{V(\cdot; k) : \gX  \to \mathbb{R} \}_{k=1}^K \). At each time $t$, a score is chosen as 
\[
\widehat{k}_t = \underset{k \in [K]}{\arg\min} \;\gM(\gD_t;k),
\]
where $\gM(\gD_t;k)$ is a given criterion evaluated by the currently observed data $\gD_t=\{(\X_i,Y_i):-n< i< t\}\cup\{\X_t\}$. %, for $k$-th predictive model. 
%And the $p$-value is generated by the optimized score function $V(\cdot; \widehat{k}_t)$ for our online testing procedure. 
Our goal is to choose the score that is most favorable for detecting non-nulls, which motivates the design of a reliable criterion $\gM(\gD_t;k)$. Since the testing threshold \(\alpha_t\) depends recursively on the past rejection path, directly optimizing rejection probability is difficult. We therefore use the proxy $$\E[p_{k,t} \mid \theta_t=1],$$
where $p_{k,t}$ is the $t$-th conformal $p$-value obtained from the score function $V(\cdot;k)$. Smaller non-null \(p\)-values generally correspond to higher rejection probability $\Pr(p_{k,t}\le \alpha_t\mid \theta_t=1)$, so this provides a natural threshold-free criterion.}

%Accordingly, we take $\gM(k,\gD_t)$ to be an empirical approximation of $$\E[p_t^k \mid \theta_t=1],$$
%where $p_t^k$ denotes the conformal $p$-value obtained from the score function $V(\cdot;k)$. This criterion serves as a proxy for power, since smaller non-null $p$-values generally lead to larger rejection probabilities $\Pr(p_t^k\le \alpha_t\mid \theta_t=1)$. We use this threshold-free proxy rather than the rejection probability itself because the testing threshold \(\alpha_t\) is a complicated function of the past data and rejection history.  Incorporating $\alpha_t$ directly into model selection may break the symmetry with respect to the current null samples, thereby invalidating the resulting conformal $p$-values  \citep{zhangautoms}.

{\color{black}To estimate $\E[p_{k,t}\mid\theta_t=1]$ under shifted non-null distribution while preserving symmetry, we employ an exponentially weighted moving average (EWMA) of past auxiliary non-null \(p\)-values. Specifically, at time $t$, for each score function $k$, we construct \textit{auxiliary non-null $p$-values} as
\begin{equation}\label{eq:aux_nonnull_p_value}
\tilde{p}_{k,j} = \frac{\sum_{s \in \gC_{0t} \cup \{t\}} \mathbb{I}\{V(\mathbf{X}_s; k) \leq V(\mathbf{X}_j; k)\}}{1 + |\gC_{0t}|}, \quad j \in \gC_{1t},
\end{equation}
where $\gC_{1t}=\{-n<i< t:\theta_i=1\}$ denotes the set of online non-null samples observed prior to time $t$. When the non-null distribution varies slowly over time, these auxiliary non-null $p$-values are approximately distributed like the current non-null $p$-value $p_{k,t}$. The construction in \eqref{eq:aux_nonnull_p_value} is carefully designed so that the resulting auxiliary non-null $p$-values are invariant under permutations of $\gC_{0t} \cup \{t\}$, which is crucial for ensuring valid inference after score selection. We then define the EWMA criterion as
\begin{equation*}
\gM^{\mathrm{EWMA}}_t(\gD_t;k)=\frac{\sum_{j\in\gC_{1t}} \rho^{t-1-j}\, \tilde{p}_{k,j}\,}{\sum_{j\in\gC_{1t}}\rho^{t-1-j}},
\end{equation*}
where \(\rho\in (0,1)\) is the user-specified decay parameter that downweights past observations, enabling dynamic adaptation to the recent non‑null distribution and precise estimation of $\E[p_{k,t}\mid\theta_t=1]$. The selected score function is accordingly
as $\widehat{k}_t={\arg\min}_{k \in [K]}\gM^{\mathrm{EWMA}}_t(\gD_t;k).$ 

The online conformal \(p\)-value after score selection, denoted as $p^{\rm opt}_t$, is then computed by \eqref{eq:conf_p} using the selected score $V(\mathbf{X}_t;\widehat{k}_t)$. %instead of $V_t$, and 
Then running the OCTF procedures in Section \ref{subsec:OCTF} with these optimized $p$-values yields the whole optimized procedure, which we refer to as Opt-OCTF algorithm. }

%{\color{black} Intuitively, this approach helps the $p$-value construction dynamically adapt as new data arrives, while ensuring that the past samples are still considered, but with diminishing influence. By continually learning the current non‑null pattern, our procedure remains powerful against potential non‑null distribution shifts. [Haojie: i think this paragraph can be deleted.]}

\begin{corollary}[{\color{black}Finite-sample online mFDR control for Opt-OCTF}]\label{the:FDR-Opt} Under the assumptions in Theorem \ref{the:FDR-OCTF}, the Opt-OCTF procedure by $\gM^{\rm EWMA}_t$ satisfies 
$\mFDR(t)\leq \alpha$ for all $t\in\mathbb{N}.$
\end{corollary}

{\color{black}

Although Corollary~\ref{the:FDR-Opt} is stated for the EWMA criterion, the underlying framework for optimized online conformal testing can be extended more broadly. In particular, we allow more general selection criteria \(\gM\) by using a full permutation strategy over \(\gC_{0t}\cup\{t\}\); see Appendix~\ref{appen:model-selection}. In addition, data-driven score functions obtained by online updating %the predictive model online using past observations 
are also compatible with the framework under a suitable symmetry-preserving training scheme. However, we focus on selection among \(K\) pre-trained candidates in the main text for computational reasons; %, the resulting online conformal \(p\)-values remain valid; 
see Appendix~\ref{appen:OCTF-onlinelearning} for details. %We focus on selection among \(K\) pre-trained candidates in the main text for computational reasons, and defer comparisons with online retraining to Appendix~\ref{sec:online_update_exp}.

\begin{remark}
In \eqref{eq:aux_nonnull_p_value}, we include the current time point \(t\) in the null calibration set to preserve symmetry over \(\gC_{0t}\cup\{t\}\), which is essential for the validity of the optimized conformal \(p\)-value after score selection. Although this may slightly affect calibration when \(t\) is non-null, the effect is typically minor in practice. A simple truncated variant can further reduce this influence while preserving validity; see Appendix~\ref{appen_subsec:simu_auxi_modelsel} for details.
\end{remark}

\subsection{Optimality analysis of EWMA criterion}\label{subsec:optimality}

We now give a supporting analysis for the EWMA criterion by establishing certain optimality of the online score selection step.
%To our knowledge, no general oracle power framework exists for GAI-type online multiple testing. % analogous to that in the offline setting. 
%This is mainly because the threshold $\alpha_t$ depends recursively on the past rejections, and is therefore highly path-dependent, making optimality in rejection power hard to define.
%a clear asymptotic notion of optimality in terms of rejection power difficult to define.
%Therefore, we shift attention to the optimality of online score selection step. %, which is closely tied to power performance. 
To facilitate analysis, we work under the standard two-group model, which is widely used to study the performance of online testing procedures \citep{gang2021structure,humbert2025onlineselectiveconformalinference}.

\begin{assumption}[Two-group model]\label{assum:two group}
    Suppose the score functions follow the two-group model for each $k\in[K]$:
$$V(\X_t;k)\overset{\rm indep}{\sim} \pi_t F_{k}^{(0)}+(1-\pi_t)F^{(1)}_{k,t},\quad \theta_t\overset{\rm indep}{\sim} \text{Bernoulli}(1-\pi_t),$$
where $F_{k}^{(0)}$ and $F^{(1)}_{k,t}$ are the null and (time-varying) non-null score distributions for $k$-th model. 
\end{assumption}

Under this model, the oracle conformal $p$-value associated with score $V(\cdot;k)$ is $F_{k}^{(0)}(V(\X_t;k))$. Given $\theta_t=1$, a smaller value of $F_{k}^{(0)}(V(\X_t;k))$ indicates higher detection power. And we define the optimal score function at time  $t$ as
$$k_t^*=\arg\min_{k\in[K]}\E[F_{k}^{(0)}(V(\X_t;k))\mid\theta_t=1].$$
Note that $\E[F_{k}^{(0)}(V(\X_t;k))\mid\theta_t=1]$ is the population analogue of the empirical proxy $\E[p_{k,t}\mid\theta_t=1]$. %, which helps formalize the optimality. 
To show that the EWMA criterion tracks this oracle target and selects ${k}^*_t$
 reliably, we impose standard regularity conditions.

\begin{assumption}[Slowly varying distribution]\label{assum:slowdrift}
    For each $k\in[K]$ and $t\in\mathbb{N}$, $F^{(1)}_{k,t}$ has a slowly varying distribution $\|F^{(1)}_{k,t}-F^{(1)}_{k,t-1}\|_\infty\leq \gamma$ such that $\gamma\to 0$.
\end{assumption}

\begin{assumption}[Non-degeneracy]\label{assum:non_degen}
There exist constants \(0<\underline{\pi}\le \overline{\pi}<1\) such that null proportion satisfies $\pi_t\in[\underline{\pi},\overline{\pi}]$.
\end{assumption}

\begin{assumption}[Margin]\label{assum:margin}
    For any $t$ and a constant $c>0$, $\min_{i\neq k_t^*}|\E[F_i^{(0)}(V(\X_t;i))\mid\theta_t=1]-\E[F_{k_t^*}^{(0)}(V(\X_t;k_t^*))\mid\theta_t=1]|\geq c$.
\end{assumption}
Assumption~\ref{assum:slowdrift} allows the non-null distribution $F^{(1)}_{k,t}$ to drift over time. %formalizes alternative-specific distribution shift: $F_{1,t}^k$ may drift
%{\color{black}This is a standard modeling choice. For example, it is implied by continuity of an underlying time-rescaled density as in \citet{gang2021structure}. [Haojie: delete?]} 
Assumption~\ref{assum:non_degen} ensures that neither nulls nor non-nulls become too rare over time. Assumption~\ref{assum:margin} guarantees the optimal score is identifiable. %is an identifiability condition: the optimal model is separated from competitors by a non-trivial gap, so consistent selection is possible.

\begin{theorem}[Optimality of online score selection]\label{the:opt-FDR-OCTF}
Suppose Assumptions \ref{assum:two group}-\ref{assum:non_degen} hold.
\begin{enumerate}
    \item[(i)]   If $t+n\geq 1+1/(1-\rho)$, then for any $\varepsilon_1\in\left(0,\sqrt{\underline{\pi}/2}\right)$ and $\varepsilon_2\in(0,1-\rho)$,
     \begin{align*}
        &\Pr\left(|\gM^{\rm EWMA}(k,\gD_t)-\E[F^k_0(V(\X_t;k))\mid\theta_t=1]|\leq \varepsilon_1+\varepsilon_2+\frac{4\gamma}{(1-\overline{\pi})(1-\rho)}+\frac{2}{(t+n-1)\underline{\pi}}\right)\\
        &\geq 1-3\exp\left\{-(t+n-1)\underline{\pi}\varepsilon_1^2\right\}-4\exp\left\{-\frac{\varepsilon_2^2({1-\overline{\pi}})^2}{6(1-\rho)}\right\}.
    \end{align*} 

Further assume that Assumption \ref{assum:margin} holds. Then the following results in (ii)-(iii) hold.
\item[(ii)] If $t+n\geq 1+{16}/({c\underline{\pi}})$ and ${\gamma}/({1-\rho})\leq {(1-\overline{\pi})c}/{32}$, we have the finite-sample error bound for the score selection
$$\Pr(\hat k_t\neq k_t^*\mid \theta_t=1)
\le\;
3K\exp\!\left\{-(t+n-1)\underline\pi\frac{c^2}{64}\right\}+
4K\exp\!\left\{-\frac{c^2(1-\overline\pi)^2}{384(1-\rho)}\right\}.$$

\item[(iii)] Moreover, the score selection consistency for all non-null points holds as
$$\Pr(\exists i\le t:\theta_i=1,\ \hat k_i\neq k_i^*)\leq\frac{3K(1-\underline{\pi})\exp\!\left\{-n\underline\pi{c^2}/{64}\right\}}{1-\exp\!\left\{-\underline\pi{c^2}/{64}\right\}}
+4K(1-\underline{\pi})t\exp\!\left\{-\frac{c^2(1-\overline\pi)^2}{384(1-\rho)}\right\}.$$
In particular, if $(1-\rho)\log (t)\to0$, $\gamma/(1-\rho)\to 0$ and $n\to\infty$, $\Pr(\exists i\le t:\theta_i=1,\ \hat k_i\neq k_i^*)\to 0$.
\end{enumerate}
 
\end{theorem}
Theorem \ref{the:opt-FDR-OCTF} (i) shows that the EWMA criterion concentrates around the oracle power proxy. The bound decomposes naturally into: an estimation term scaling with $t+n$, a non-null effective size term in EWMA operation by $(1-\rho)^{-1}$ and a drift bias of order $\gamma/(1-\rho)$ reflecting how well the EWMA estimator can track a drifting non-null distribution. Parts (ii)–(iii) translate this concentration into optimal score selection guarantees. The rate conditions reflect the classic EWMA trade-off: $\rho$ must be sufficiently close to 1 to ensure an adequate effective sample size, yet small enough to allow the procedure to adapt to distribution drift. %(i.e., ensure $\gamma/(1-\rho)$ remains small). 

%\begin{remark}
%    On null instances ($\theta_t=0$), $F_0^k(V(\X_t;k))$ is uniform for every $k$,
%and hence there is no canonical notion of an ``optimal'' score under the null.
%This implies that a pathwise event such as $\{\forall i\le t:\hat k_i=k_i^*\}$ is not
%well-defined in our setup.
%Instead, we consider correctness only on non-nulls and study
%$\Pr(\exists i\le t:\theta_i=1,\ \hat k_i\neq k_i^*)$.
%\end{remark}

%}
}

\section{Numerical Experiments}\label{sec:simu}
In this section, we present extensive synthetic experiments to demonstrate the validity and efficiency of our proposed methods. 
First, Section~\ref{subsec:simu-GAIF} studies GAIF and Adaptive GAIF in the non-conformal setting. 
Second, Sections~\ref{subsec:simu-OCTF} and~\ref{subsec:simu-opt-OCTF} present the results for OCTF and Opt-OCTF under the conformal setting, respectively. 
% Finally, the performance of the adaptive version of OCTF under distribution shifts is demonstrated in Subsection~\ref{subsec:simu-shift}. 
%For the conformal setting, we focus on binary classification scenarios; additional results, including a regression scenario, are provided in Appendix~\ref{appen:add-experiment}. 
Following the setup in \citet{robertson2023online}, we fix $\lambda = 0.5$, $\gamma_j \propto j^{-1.6}$ for all $j\in\mathbb{N}$, and $s_0 = \alpha/2$ throughout. 
{\color{black}Code for implementing our methods and reproducing the experiments and figures in our paper is
available at \url{https://github.com/lulin2023/GAIF}.}
%, and use these settings consistently across all experiments.

\subsection{Synthetic experiments under non-conformal settings}\label{subsec:simu-GAIF}
%\subsection{Results on Synthetic Data}
% In this section, we present additional experimental results to further demonstrate the superior performance of our proposed algorithms. Subsection~\ref{subsec:simu-GAIF} includes synthetic experiments for online multiple testing under the non-conformal setting. Subsection~\ref{subsec:add-conformal-experiments} provides additional results for online conformal testing.
{\color{black}We begin by presenting experiments on online multiple testing problems under three scenarios. Throughout, let $\pi_1$ denote the non-null proportion parameter in the data-generating model; its value varies across experiments.}% , spanning five distinct scenarios. Scenarios I--III correspond to settings with full and instant feedback, while the last two are designed to evaluate performance under bandit or delayed feedback.[Haojie: maybe later]} %(I) Gaussian observations, (II) $p$-values under the alternative drawn from a Beta distribution, and (III) locally positively dependent observations. 
\begin{itemize}
    \item {\bf Scenario I (Testing with Gaussian observations)}
We simulate $T$ independent test statistics $Z_t\sim N(\mu_t,1)$ and test hypotheses $\mathbb{H}_{0t}:\mu_t=0$ using one-sided $p$-values $p_t=\Phi(-Z_t)$, where $\Phi$ is the standard Gaussian CDF. The signal strengths $\mu_t$ follow a mixture model:
 \begin{equation}
    \mu_t = \begin{cases}
        0 \quad \text{with probability}\; 1-\pi_1\\
        F_1 \quad \text{with probability}\; \pi_1,
    \end{cases}
\end{equation}
 where the random variable $F_1\sim N(2.5,1)$.
 %$H_{0t}$, $X_i\sim N(0,1)$ and under $H_{1t}$, $X_i\sim N(\mu_t,1)$, where $\mu_t$ is the signal strength. $\pi_1$ is the non-null proportion. We transform p-values by $p_i=1-\Phi(X_i)$.  
 \item {\bf Scenario II (Testing with Beta alternatives)} We generate $p$-values according to the following model:
\begin{equation}
    p_t \sim \begin{cases}
        \text{Unif}[0,1] \quad \text{with probability}\; 1-\pi_1\\
        \text{Beta}(0.5,4) \quad \text{with probability}\; \pi_1.
    \end{cases}
\end{equation}
\item {\bf Scenario III (Testing under local dependence)} We simulate correlated test statistics $(Z_1, \dots, Z_T)^\top \sim N(\boldsymbol{\mu}, \Sigma)$ where $\boldsymbol{\mu}=(\mu_1,\ldots,\mu_T)^\top$ with $\mu_t=2$ for a randomly chosen fraction $\pi_1$ of indices and $\mu_t=0$ otherwise. The covariance matrix $\Sigma$ has a block-diagonal structure: coordinates are grouped into blocks of size $n_{\mathrm{block}}=10$, with within-block correlation $\rho=0.8$ and diagonal elements as 1.
We test hypotheses $\mathbb{H}_{0t}:\mu_t=0$ using one-sided $p$-values $p_t=\Phi(-Z_t)$.
%{\color{black}\item {\bf Scenario IV (Bandit and instant feedback).}  
%The data generation process is identical to Scenario~I, except that feedback $\theta_t$ is revealed only when a rejection is made, i.e., only when $\delta_t = 1$.

%\item {\bf Scenario V (Full and delayed feedback).}  
%The data generation process is the same as in Scenario~I, but feedback for hypothesis $j$ arrives with a fixed delay~$d$, meaning that at time $t$ one only observes $\{\theta_j : j \le t-d-1\}$.

%}
\end{itemize}

We set the stopping time as $T = 1000$ and the FDR level at \(\alpha = 0.1\). We evaluate the performance via empirical FDR and power averaged over 500 independent replications. Appendix~\ref{appen:add-experiment} confirms that mFDR and FDR show similar trends.  

{\color{black}Firstly, we consider the full and instant feedback setting, and compare our methods \text{SF} and \text{LF} with state-of-the-art algorithms for online FDR control, namely \text{LOND} \citep{javanmard2015online}, \text{LORD++}, and \text{SAFFRON}, using their default parameters (\(\lambda = 0.5\) chosen for \text{SAFFRON}).} 
%{\color{black}
%Figure~\ref{fig:GAIF_Gaussian} reports the empirical FDR and power for varying non-null proportions $\pi_1\in[0.1,0.8]$ under Scenarios I and II. Overall, all the benchmark procedures exhibit empirical FDR values close to the target level $\alpha=0.1$ across the considered settings. The proposed SF and LF procedures achieve substantially higher power than their respective baselines while generally maintaining empirical FDR values near the nominal level. In particular, SF consistently outperforms SAFFRON in power, whereas LF improves upon LORD++ and exceeds SAFFRON when $\pi_1$ is large. By comparison, LORD++ and LOND remain noticeably conservative across different values of $\pi_1$. Error bars represent $\pm 1$ standard error over 500 independent runs and are nearly invisible due to the small variability. 
%}

Figure \ref{fig:GAIF_Gaussian} shows results for varying non-null proportion $\pi_1\in[0.1,0.8]$ under Scenarios I and II. {\color{black}All benchmark methods %—LOND, LORD++, and SAFFRON—along with our proposed procedures
maintain valid empirical FDR control across different $\pi_1$ under both Scenarios.} Our feedback-based SF and LF methods enhance detection power while preserving valid empirical online FDR control:  SF consistently outperforms SAFFRON, while LF yields higher power than LORD++. {\color{black}When \(\pi_1\) is small, SF benefits from adaptive wealth allocation and tends to outperform LF. When \(\pi_1\) is large, LF gains more from frequent non-null feedback and can recycle more released \(\alpha\)-wealth and thereby surpassing SF in power.} In contrast, LORD++ and LOND remain conservative across different values of \(\pi_1\). {\color{black}Numerical values together with Monte Carlo standard errors are reported in Appendix \ref{appen:add-experiment} (Table \ref{tab:sim_results}).
}

\begin{figure}[h!]
		\centering
		\includegraphics[width=0.75\textwidth]{Fig-GAIF/plot_FDR_Power.pdf}
		\caption{\small Results for Scenario I and Scenario II under full and instant feedback: FDR and Power at stopping time with non-null proportion $\pi_1$ ranging from $0.1$ to $0.8$ across $500$ replications;  The black dashed lines denote the FDR level $\alpha=0.1$. Shaded areas show $\pm 1$ standard error.}
  \label{fig:GAIF_Gaussian}
	\end{figure}

Figure~\ref{fig:GAIF_dep_only2} presents results under a positive local dependence structure in Scenario III. The baseline methods—LOND, LORD++, and SAFFRON—are included for comparison. The proposed procedures, $\text{SF}_{\rm dep}$ and $\text{LF}_{\rm dep}$, correspond to $\text{Ada\text{-}GAIF}_{\rm dep}$ and $\rm GAIF_{\rm dep}$ under full and instant feedback, respectively. In addition, we include the existing methods \(\text{SAFFRON}_{\rm dep}\) and \(\text{LORD}_{\rm dep}\) from \cite{Zrnic2021asynchronous}.  We find that SF, LF, and SAFFRON fail to control the FDR under dependence. In contrast, the dependence-aware procedures \(\mathrm{SF}_{\mathrm{dep}}\) and \(\mathrm{LF}_{\mathrm{dep}}\) successfully achieve valid FDR control {\color{black}and attain higher power than their feedback-ignoring counterparts, \(\mathrm{SAFFRON}_{\mathrm{dep}}\) and \(\mathrm{LORD}_{\mathrm{dep}}\). See more detailed numerical values in Appendix \ref{appen:add-experiment} (Table \ref{tab:sim_results_dep}).
}

    \begin{figure}[h!]
  \centering
  \includegraphics[width=0.8\textwidth]{Fig-GAIF/GAIF_dep2.pdf}
  \caption{\small Results for Scenario III under full and instant feedback: FDR and Power at stopping time with non-null proportion $\pi_1$ ranging from 0.1 to 0.8  across 500 replications. The black dashed line indicates the target FDR level $\alpha = 0.1$. Shaded areas show $\pm 1$ standard error.}
  \label{fig:GAIF_dep_only2}
\end{figure}

Next, we evaluate the performance of our proposed methods under other feedback settings: bandit and instant, and full and delayed. Taking the classical methods LOND, LORD++, and SAFFRON as benchmarks, we implement the proposed \text{LF-BI} and \text{SF-BI} under bandit and instant feedback setting. The results in Scenario I are shown in Figure~\ref{fig:GAIF_Bandit}. All methods maintain valid empirical FDR control. {\color{black} Under bandit feedback, LF-BI remains more powerful than LORD++ by recycling
wealth from rejected non-nulls. SF-BI coincides with SAFFRON because every
rejected hypothesis satisfies \(p_t\le\alpha_t\le\lambda\), so the feedback term
\(\mathbf{1}\{p_t>\lambda\}\) is zero. %Under bandit feedback, \text{LF-BI} can still improve power by recycling wealth from rejected non-nulls, whereas \text{SF-BI} reduces to SAFFRON. This is because rejected hypotheses satisfy \(p_t\le\alpha_t\le\lambda\), so they do not incur the SAFFRON-type cost \(\mathbb I\{p_t>\lambda\}/(1-\lambda)\) and hence cannot release additional feedback wealth. 
Numerical results are reported in Table \ref{tab:sim_results_bi}.} %Importantly, bandit feedback continues to enhance detection power, although the improvement is less pronounced than that under full and instant feedback. 

 \begin{figure}[h!]
  \centering
  \includegraphics[width=0.8\textwidth]{Fig-GAIF/GAIF_Gaussian_Bandit.pdf}
  \caption{\small {\color{black}Results for Scenario I under bandit and instant feedback: FDR and Power at stopping time with non-null proportion $\pi_1$ ranging from 0.1 to 0.8 across 500 replications. The black dashed line indicates the target FDR level $\alpha = 0.1$. Shaded areas show $\pm 1$ standard error.}}
  \label{fig:GAIF_Bandit}
\end{figure}

For the full and delayed feedback setting, we consider delays {\color{black}$d \in \{0, 10, 100\}$} and compare the proposed SF-FD and LF-FD against feedback-free baselines. Figure~\ref{fig:GAIF_Delayed} displays the results under Scenario~I. As the delay $d$ increases, the power of our methods decreases gradually, yet they continue to outperform the baselines by a substantial margin while maintaining valid empirical FDR control. {\color{black}When $d = 0$ and the non-null proportion is large, LF achieves the highest power, as immediate feedback enables aggressive $\alpha$-wealth recycling from the abundant revealed non-nulls. Numerical results are reported in Table~\ref{tab:sim_results_fd_scenario1}.}

 \begin{figure}[h!]
  \centering
  \includegraphics[width=0.8\textwidth]{Fig-GAIF/GAIF_Gaussian_Delayed.pdf}
  \caption{\small {\color{black}Results for Scenario I under full and delayed feedback: FDR and power at stopping time with non-null proportion $\pi_1$ ranging from 0.1 to 0.8 across 500 replications. The black dashed line indicates the target FDR level $\alpha = 0.1$. Shaded areas show $\pm 1$ standard error.}}
  \label{fig:GAIF_Delayed}
\end{figure}

\subsection{Results for online conformal testing with feedback}\label{subsec:simu-OCTF}

We next evaluate OCTF with a fixed score in a real-time binary classification task. 
%\paragraph{Data Description.}

\begin{itemize}
    \item \textbf{{Scenario IV}}: Data are from
$\mathbf{X}\mid Y=0 \sim \mathcal{N}_4\left(\boldsymbol{\mu}_1, \mathbf{I}_4\right)$, and $\mathbf{X}\mid Y=1 \sim \mathcal{N}_4\left(\boldsymbol{\mu}_2, \mathbf{I}_4\right)$, where $\boldsymbol{\mu}_1=(2,0,0,0)^{\top}, \boldsymbol{\mu}_2=(0,0,-1,-1)^{\top}$.   The target region is $\mathcal{A}=\{1\}$. 
\end{itemize}
We set the stopping time to $T = 1000$ and the target $\operatorname{FDR}$ level to $\alpha = 0.2$. The initial calibration size is $n_{\text{cal}} = 1000$.  We consider the score as $V(\mathbf{X}) = 1 - \hat{\mu}(\mathbf{X})$, where the predictive model $\hat{\mu}(\mathbf{X})$ is a support vector machine classifier trained on $n_{\text{tr}} = 1000$. Appendix~\ref{appen:add-experiment} reports additional results for varying \(n_{\mathrm{cal}}\), alternative training algorithms, and a regression setting.

We evaluate the proposed LF and SF against the benchmark procedures  LOND, LORD++, and SAFFRON, all implemented within the OCTF workflow using their own testing levels $\{\alpha_t\}$. Figure~\ref{fig:cla_stop_A_vary_prop} reports the empirical FDR and power as the non-null proportion $\pi_1$ varies over $[0.1, 0.8]$. All five procedures control the empirical FDR below the nominal level $\alpha$ across the entire range. As expected, the feedback-enhanced procedures deliver substantial power gains over their non-feedback counterparts: SF improves consistently over SAFFRON across all $\pi_1$, {\color{black}while LF trails SF for small $\pi_1$ but overtakes it once $\pi_1$ exceeds roughly $0.5$, ultimately attaining the highest power among all methods considered. We also report results for the safe variants \text{SFS} and \text{LFS} in Appendix~\ref{add:numericalvalue}; these procedures are more conservative, %keeping empirical FDR well below the nominal level, 
consistent with their design of prioritizing finite-sample validity via a more restrictive wealth update.
}

\begin{figure}[h!]
		\centering
		\includegraphics[width=0.8\textwidth]{Fig-GAIF/cla_plots_vary_prop.pdf}
        \caption{\small Results for Scenario {IV}: FDR and power at stopping time with non-null proportion $\pi_1$ ranging from 0.1 to 0.8 across 500 replications. The black dashed lines denote the FDR level $\alpha=0.2$. Shaded areas show $\pm 1$ standard error.}
		%\caption{\small Results for Scenario {IV}: the values of $\FDR(T)$  and $\operatorname{Power}(T)$ at stopping time $T$ across different non-null proportion $\pi_1$. The black dashed lines denote the FDR level $\alpha=0.2$. Shaded areas show $\pm 1$ standard error. }
  \label{fig:cla_stop_A_vary_prop}
	\end{figure}

\subsection{Results for optimized online conformal testing with feedback}\label{subsec:simu-opt-OCTF}

We now demonstrate the effectiveness of the proposed score selection strategy from Section \ref{subsec:Opt-OCTF}. The goal here is to examine whether feedback can be used not only for valid testing, but also for adaptive score selection when non-null distributions shift. 

{\color{black}We employ the binary classification example in Scenario IV with $\boldsymbol{\mu}_1=(3,0,0,0)^{\top}, \boldsymbol{\mu}_2=(0,0,-1,-1)^{\top}$.} For the historical labeled dataset, the non-null proportion is fixed as $0.2$. For the testing data, we consider a \text{sine pattern} shift, where the non-null proportion varies as \(\pi_t=\operatorname{Pr}(Y_t=1)= \{\sin(100\pi t/T) + 1\}/4\), oscillating between 0 and 0.5. 

%for selecting predictive models in online conformal testing with FDR control, under scenarios where the non-null distribution shifts over time. %involving distribution shifts in the non-nulls. Consider a binary classification example with varied non-null proportions. 

{\color{black}For the implementation of the score selection}, we consider $K=3$ candidate models: random forest (RF), neural network (NN), and support vector machine (SVM) and employ the EWMA criterion $\gM^{\text{EWMA}}_t(k,\gD_t)$ with exponential weight parameter $\rho=0.95$. To reduce computational cost, the auxiliary non-null $p$-value is computed by a sliding window of fixed length $L=100$: for $t > L$, we define $\gC_{1t} = \{i : t - L \leq i < t, \theta_i = 1\}$; for $t \leq L$, we use the full set $\gC_{1t} = \{i < t : \theta_i = 1\}$. %We set window length $L=200$.

{\color{black}To evaluate the score-selection step, we compare the proposed optimized rule with random score selection under different testing procedures. Here, ``Opt-'' means that the score model is selected adaptively by the proposed EWMA criterion, while ``Ran-'' means that the score model is chosen uniformly at random from the candidate library. For example, Opt-LF denotes the optimized version of LF, whereas Ran-LF uses the same testing rule with random score selection.}

%\paragraph{Results.} 
{\color{black}Figure~\ref{fig:cla_stop_A_vary_prop_opt_sine} shows results across time $t$ under distribution shift; results for the \text{SFS}- and \text{LFS}-based variants are deferred to Appendix~\ref{appen:add-experiment}. All methods successfully control empirical FDR at or below the nominal level $\alpha = 0.05$.} In terms of power, the optimized (\text{Opt}) variants consistently outperform their randomly selected (\text{Ran}) counterparts, confirming that optimized score selection is beneficial under distribution shift. Additional results for scenarios without distribution shifts are presented in Appendix~\ref{appen:add-experiment}, and similarly highlight the advantage of optimized score selection.

\begin{figure}[h!]
		\centering
		\includegraphics[width=1.0\textwidth]{Fig-GAIF/plot_cla_Opt_vary_prop_sine_shift.pdf}
		\caption{\small Results for Scenario {IV} (sine pattern shifts): Empirical $\FDR(t)$  and $\operatorname{Power}(t)$ across different time $t$. The black dashed lines denote the FDR level $\alpha=0.05$. Shaded areas show $\pm 1$ standard error.}
  \label{fig:cla_stop_A_vary_prop_opt_sine}
	\end{figure}

\section{Real Data Applications}\label{sec:real-data}
%In this section, we evaluate the performance of the proposed methods in two real-data applications:  online sample selection (Subsection \ref{subsec:real-conformal-select}) and time series anomaly detection (Subsection \ref{subsec:real-data-time}). Additionally, the real-time LLM abstention procedure using our framework is summarized in Appendix \ref{appen:LLM}.

%\subsection{Online conformal selection with feedback}\label{subsec:real-conformal-select}
In this section, we evaluate our proposed methods on four real-world datasets to illustrate their practical benefits in diverse online decision-making tasks. 
%We first compare the OCTF procedure with three benchmarks from online multiple testing, i.e., LOND, LORD++ and SAFFRON. All methods are implemented with the conformal  $p$-values according to the OCTF or OCT framework we have summarized in Section \ref{sec:OCT}.  The prediction algorithm $\mathcal{M}$ is random forest with defaulted parameters.

%We compare the Opt-OCTF with 
%to demonstrate the validity and effectiveness of model selection. 

%\paragraph{Datasets.}  
%We consider four real-data applications as follows:
\begin{itemize}
    \item  \textbf{Task 1: Online Candidate Screening.} The first task is real-time candidate screening for selecting the candidates who can get into the interview process. %In this application, the company may be interested in controlling the online FDR to enhance resource efficiency. 
    We use the recruitment dataset \citep{Candidate}, which contains $N=45,372$ candidates with 11 attributes, including education status, handicapped or not, and gender. The target binary variable is whether the candidate passes the job interview. \textcolor{black}{In this task, the non-null proportion is  $\pi_1=0.69$.}
    %to demonstrate the performance of our method for the task of online candidate screening. The dataset contains $N=45372$ observations with 11 attributes including education status, handicapped or not and gender. The target binary variable is whether the candidate passes the job interview.
    \item \textbf{Task 2: High-Risk Diabetes Identification.} The second task focuses on health screening, where selecting individuals at high risk of diabetes is critical for early intervention. We use the diabetes health indicators dataset \citep{Healthcare_Diabetes_Data}, which contains $N=70,692$ samples and 22 covariates, including demographic attributes (e.g., sex, age, BMI), lifestyle-related features, and several binary health indicators. The target binary variable is whether an individual has diabetes. \textcolor{black}{The non-null proportion is $\pi_1=0.50$ in the dataset.}
    \item \textbf{Task 3: High-Income Individual Selection.}  The third task involves using the 1994 Census Bureau dataset \citep{misc_adult_2} to identify a subset of individuals with high incomes (i.e., income $>50K$) for precision marketing. This dataset contains $N=32,561$ records with 14 attributes, including gender, race, marital status, education level, and more.  \textcolor{black}{The non-null proportion is $\pi_1=0.25$.}
    \item \textbf{Task 4: Airfoil Noise Detection.}  The final task involves using the airfoil dataset \citep{misc_airfoil_self-noise_291} from the UCI Machine Learning Repository to identify samples with high sound pressure. This dataset contains $N = 1,503$ observations with five physical covariates (log frequency, angle of attack, chord length, free-stream velocity, and suction side log displacement thickness). The response variable $Y$ represents the scaled sound pressure, and we aim to test $\mathbb{H}_{0t}: Y_t \in [-\infty,c)$, where $c$ is the $(1 - \pi_1)$-quantile of $Y$, with \textcolor{black}{$\pi_1=0.60$}.
\end{itemize}

For Tasks 1-3, we randomly sample three subsets from the full dataset: $n_{\text{tr}} = 1{,}000$ training, $n_{\text{cal}} = 1{,}000$ calibration, and $n_{\text{te}} = 1{,}000$ test samples. For Task 4, we set $n_{\text{tr}} = 300$, $n_{\text{cal}} = 300$, and $n_{\text{te}} = 1{,}000$. {\color{black}We compare Opt-SF and Opt-LF with three feedback-free benchmarks, SAFFRON, LORD++, and LOND.}
%We compare four optimized procedures \text{Opt-SF} \text{Opt-SFS}, \text{Opt-LF} and \text{Opt-LFS}with two benchmarks SAFFRON and LORD++. 
At each time step, $K=3$ candidate training algorithms (RF, NN, and SVM) are employed, and the score selection criterion is $\gM^{\text{EWMA}}_t(k,\gD_t)$ with $\rho=0.9$. Our methods use the optimized conformity score, while the LORD++ and SAFFRON use a randomly selected score at each time $t$. % and for the benchmarks, we randomly choose a score at each time $t$. 
All results are averaged over 500 replications. 

%randomly partition the data into three parts: $n_{\text{tr}}=1000$ training data, $n_{\text{cal}}=1000$ calibration data and the rest are used as the online testing data.

% https://www.kaggle.com/datasets/alexteboul/diabetes-health-indicators-dataset

%\paragraph{Results.} 
Figure~\ref{fig:FDR_power_real-data} summarizes the empirical FDR and power over time $t$, and Table~\ref{tab:all_fdr_power} reports the corresponding values at the stopping time $T = 1{,}000$ at the target FDR level $\alpha=0.3$. {\color{black}Across all four tasks, all methods maintain empirical FDR below the nominal level. 

The relative performance of \text{Opt-LF} and \text{Opt-SF} is consistent with our earlier simulation findings. For Tasks 1, 2, and 4, where the non-null proportion is relatively large ($\pi_1 = 0.69$, $0.50$, and $0.60$ respectively), \text{Opt-LF} achieves the highest power ($0.308$, $0.163$, and $0.684$), as abundant non-null feedback enables more aggressive $\alpha$-wealth recycling. For Task 3 ($\pi_1 = 0.25$), \text{Opt-SF} instead achieves the best power ($0.151$), consistent with the observation that adaptive $\alpha$-wealth allocation becomes more consequential when non-nulls are sparse and feedback-recovered wealth is limited.} %Notably, 
\text{SAFFRON} and \text{LORD++} exhibit substantially lower power across all tasks, with \text{LORD++} and \text{LOND} achieving near-zero power in most settings. {\color{black}Results for the safe variants \text{Opt-SFS} and \text{Opt-LFS} are deferred to Appendix~\ref{appen:add-experiment} (Table~\ref{tab:appendix_safe_fdr_power}); both achieve empirical FDR control and have conservative power across all tasks.}%, as the more restrictive wealth updates required for finite-sample mFDR validity come at a cost to power.}

%{\color{black}We also conduct additional experiments for task 4 across varied $\pi_1$, the results are presented in Table  in Appendix xxx.}

\begin{figure}[htbp!]
		\centering
		\includegraphics[width=1.0\textwidth]{Fig-GAIF/plot_FDR_Power_Opt.pdf}
		\caption{\small  Results for real-data applications: the values of $\operatorname{FDR}(\boldsymbol{\delta}^t)$ and $\operatorname{Power}(\boldsymbol{\delta}^t)$ over time $t$ for five methods across four tasks (Task~1: candidate screening, $\pi_1=0.69$; Task~2: diabetes identification, $\pi_1=0.50$; Task~3: income selection, $\pi_1=0.25$; Task~4: airfoil noise detection, $\pi_1=0.60$). The black dashed lines indicate the FDR level $\alpha = 0.3$. Shaded areas show $\pm 1$ standard error.}
  \label{fig:FDR_power_real-data}
	\end{figure}

\begin{table}[!ht]
\centering
\setlength{\heavyrulewidth}{0.5pt}
\setlength{\lightrulewidth}{0.3pt}
\caption{\small $\operatorname{FDR}(T)$ and $\operatorname{Power}(T)$ for different tasks across four datasets (Candidate, Diabetes, Income, Airfoil). The target FDR level is $\alpha = 0.3$. Bold numbers represent the best results.}
\label{tab:all_fdr_power}
\resizebox{\textwidth}{!}{
{\tiny
\begin{tabular}{lcccccccc}
\toprule
\textbf{Method} 
& \multicolumn{2}{c}{\textbf{Task 1}} 
& \multicolumn{2}{c}{\textbf{Task 2}} 
& \multicolumn{2}{c}{\textbf{Task 3}} 
& \multicolumn{2}{c}{\textbf{Task 4}} \\
\cmidrule(lr){2-3} \cmidrule(lr){4-5} \cmidrule(lr){6-7} \cmidrule(lr){8-9}
& FDR & Power & FDR & Power & FDR & Power & FDR & Power \\
\midrule
Opt-SF  & .118\tiny{(.014)} & .181\tiny{(.018)} & .135\tiny{(.016)} & .121\tiny{(.015)} & .166\tiny{(.014)} & \textbf{.151}\tiny{(.012)} & .166\tiny{(.008)} & {.499}\tiny{(.028)} \\
Opt-LF  & .152\tiny{(.014)} & \textbf{.308}\tiny{(.026)} & .138\tiny{(.016)} & \textbf{.163}\tiny{(.019)} & .181\tiny{(.017)} & {.126}\tiny{(.012)} & .228\tiny{(.007)} & \textbf{.684}\tiny{(.024)} \\
SAFFRON & .034\tiny{(.010)} & .001\tiny{(.000)} & .102\tiny{(.016)} & .022\tiny{(.005)} & .095\tiny{(.013)} & .031\tiny{(.004)} & .050\tiny{(.007)} & .136\tiny{(.015)} \\
LORD++  & .000\tiny{(.000)} & .000\tiny{(.000)} & .010\tiny{(.007)} & .000\tiny{(.000)} & .011\tiny{(.007)} & .001\tiny{(.000)} & .002\tiny{(.001)} & .012\tiny{(.004)} \\
LOND    & .015\tiny{(.009)} & .000\tiny{(.000)} & .045\tiny{(.014)} & .000\tiny{(.000)} & .035\tiny{(.011)} & .004\tiny{(.000)} & .007\tiny{(.004)} & .003\tiny{(.000)} \\
\bottomrule
\end{tabular}
}
}
\end{table}

\section{Concluding Remarks}\label{sec:conclu}

We study online multiple testing with feedback, aiming to develop reliable machine learning methods for real-time decision-making with rigorous FDR and mFDR control. Our key contribution is \text{GAIF} \textcolor{black}{and its adaptive variants}, a {feedback-enhanced Generalized Alpha-Investing framework} that guarantees online FDR control under the standard assumption of independence between null and all other \(p\)-values. Building on GAIF, we further develop \text{OCTF} for online conformal testing by constructing independent online conformal \(p\)-values for null hypotheses, which achieves finite-sample mFDR control. To further enhance performance and address distribution shifts among non-nulls, we propose a feedback-driven score selection strategy  \textcolor{black}{and provide theoretical analysis characterizing its optimality properties}. 

%We highlight two potential directions for future work. First, the current framework mainly focuses on distribution shifts in the non-null data. To address alpha-death and piggybacking under such shifts, one can consider controlling weighted FDR using forgetting factors or decaying memory mechanisms \citep{ramdas2017online}. {\color{black}Second, although the safe GAIF procedures are supported by rigorous theoretical guarantees, they rely on an exchangeability assumption. Developing procedures that retain finite-sample guarantees under arbitrary distribution shift remains an important open problem. Promising directions include incorporating adaptive strategies that exploit local information, in the spirit of \citet{gibbs2021adaptive}, and developing weaker yet more appropriate error-rate criteria tailored to arbitrary distribution shift. We leave these directions for future research.}
We highlight several potential directions for future work. 
First, the current framework mainly focuses on distribution shifts in the non-null data. To address alpha-death and piggybacking under such shifts, one can consider controlling weighted FDR using forgetting factors or decaying memory mechanism \citep{ramdas2017online}. Second, extending our framework to accommodate more general forms of distribution shift is another promising direction. One potential approach is to incorporate adaptive strategies that exploit local information, in the spirit of the adaptive conformal inference \citep{gibbs2021adaptive}. Finally, to establish guarantees under weaker assumptions, one could relax the definition of valid online FDR control and develop new error-rate notions, analogous to the average-coverage criterion studied in online conformal inference \citep{gibbs2021adaptive,humbert2025onlineselectiveconformalinference}.

\section*{Acknowledgments}

We would like to acknowledge the action editor and three anonymous referees for their valuable comments and suggestions, which have improved the manuscript greatly.

% One potential approach is to construct online weighted conformal \(p\)-values \citep{prinster2025watch} and apply them within the OCTF framework.

% Acknowledgements and Disclosure of Funding should go at the end, before appendices and references

%\acks{All acknowledgements go at the end of the paper before appendices and references.Moreover, you are required to declare funding (financial activities supporting the submitted work) and competing interests (related financial activities outside the submitted work). More information about this disclosure can be found on the JMLR website.}

% Manual newpage inserted to improve layout of sample file - not
% needed in general before appendices/bibliography.

\newpage

\appendix
% \section{}
% \label{app:theorem}
\newpage
\begin{center}
    \Large{\bf Supplementary Material for ``Feedback-Enhanced Online Multiple Testing with Applications to Conformal Selection''}
\end{center}

{\color{black}This supplementary material includes:
\begin{itemize}
    \item Notation and   Algorithmic Details (Appendix \ref{appen:notation})
    \item Additional Discussions on GAIF and Adaptive GAIF (Appendix \ref{appen:add-details})
    \item Additional Discussions on OCTF (Appendix \ref{appen:OCTF})
     \item Applications on Real-time LLM Alignment (Appendix \ref{appen:LLM})
    % {\color{black}\item Online Conformal Testing under General Distribution Shifts (Appendix \ref{appen:null-shift})
       \item Additional Experimental Results (Appendix \ref{appen:add-experiment})
      \item  Proofs of all the theoretical results (Appendix \ref{appen:proofs})
\end{itemize}
}

\section{Notation and   Algorithmic Details}\label{appen:notation}
This appendix gathers the notation and additional algorithmic details used throughout the paper. 
For completeness, we first review several preliminary terms and benchmark procedures, and then 
provide the concrete forms of the proposed testing levels and related implementation details.

\subsection{Preliminary terms for self-containment}
\label{appen:terms}
{
Here, we list the preliminary terms we use in the paper for the sake of clarity and self-containment.
\begin{itemize}
    \item \textbf{$\operatorname{FDR}$} \citep{Benjamini1995ControllingTF}, false discovery rate, a widely-adopted error rate notion in the field of multiple testing, is defined as the expected proportion of incorrectly rejected null hypotheses as follows:
   \begin{align*}
    \FDP(t) = \frac{V_t}{1\vee R(t)}:=\frac{\sum_{j=1}^t \delta_j(1-\theta_j)}{1 \vee \sum_{j=1}^t \delta_j}\quad\text{and}\quad \FDR(t):=\E[\FDP(t)],
\end{align*}
    where $R(t)$ represents the number of rejections until time $t$ and $V(t)$ is the number of false discoveries.
    \item \textbf{$\mFDR$}, modified false discovery rate, or marginal false discovery rate, is defined as:
    \[\operatorname{mFDR}(t):=\frac{\E[V(t)]}{\E[1\vee R(t)]}=\frac{\E[\sum_{j=1}^t \delta_j(1-\theta_j)]}{\E[1\vee \sum_{j=1}^t \delta_j]}.\]
   \item \textbf{Testing levels of existing online multiple testing methods.} For the online methods, denote the decision rule as $\delta_t=\mathbb\{p_t\leq \alpha_t\}$, where $p_t$ is the corresponding conformal $p$-value at time $t$ for our problem. The test levels $\{\alpha_t\}$ for LOND \citep{javanmard2015online}, LORD++ \citep{ramdas2017online}, SAFFRON \citep{ramdas2018saffron} and ADDIS \citep{tian2019addis}, $\text{LORD}_{\text{dep}}$ \citep{Zrnic2021asynchronous}, $\text{SAFFRON}_{\text{dep}}$ \citep{Zrnic2021asynchronous} are listed as follows:
   \begin{itemize}
    \item[1.] LOND: $\alpha_t=\gamma_t(R(t-1)+1)$, where $\{\gamma_t\}_{t=1}^{\infty}$ is a given infinite non-increasing sequence of positive constants  that sums to $\alpha$ and $R(n)=\sum_{t=1}^{n}\delta_t$ denotes the number of discoveries in the first $n$ hypotheses tested.
    \item[2.] LORD++:   $$\alpha_t^{\text{LORD++}}=\gamma_t s_0+(\alpha-s_0)\gamma_{t-\tau_1}\mathbb{I}\{\tau_1<t\}+\alpha\sum_{j:\tau_j<t,\tau_j\neq \tau_1}\gamma_{t-\tau_j},$$
    where $\{\gamma_t\}_{t=1}^{\infty}$ is a given infinite non-increasing sequence of positive constants that sums to one; $\tau_j$ is the time of the $j$-th rejection. 
    \item[3.] SAFFRON: 	At each time $t$, define $C_{j+}=C_{j+}(t)=\sum_{i=\tau_j+1}^{t-1}C_i$, where $C_t=\mathbb{I}\{p_t\leq \lambda\}$. For $t=1$, $\alpha_1=\min\{\gamma_1(1-\lambda)s_0,\lambda\}$; For $t=2,3,\dots$, $\alpha_t:=\min\{\lambda,\tilde{\alpha}_t\}$, where 
		\[\tilde{\alpha}_t=(1-\lambda)s_0\gamma_{t-C_{0+}}+((1-\lambda)(\alpha-s_0))\gamma_{t-\tau_1-C_{1+}}+(1-\lambda)\alpha\sum_{j\geq2}\gamma_{t-\tau_j-C_{j+}}.\]
    \item[4.] ADDIS: The testing levels for ADDIS are given by $\alpha_t=\min\{\lambda,\hat{\alpha}_t\}$, where 
    \[\hat{\alpha}_t=(\eta-\lambda)[\omega_0\gamma_{S^t-C_{0+}}+(\alpha-\omega_0)\gamma_{S^t-\tau_1^*-C_{1+}}+\alpha\sum_{j\geq 2}\gamma_{S^t-\tau_j^*-C_{j+}}]\]
    and $S^t=\sum_{i<t}\mathbb{I}\{p_i\leq \eta\}$, $\tau_j^*=\sum_{i\leq \tau_j}\mathbb{I}\{p_i\leq \eta\}$.
    \item[5.] $\text{LORD}_{\text{dep}}$ and $\text{SAFFRON}_{\text{dep}}$: Define $r_k$ under local dependence as:
\[r_k=\min\{i\in[t]: \sum_{j=1}^{i-L_{i+1}}\delta_j\geq k\}.\] The corresponding test levels for $\text{LORD}_{\text{dep}}$ and $\text{SAFFRON}_{\text{dep}}$ are as follows:

    \begin{equation*}\label{eq:alpha-LORD-dep}
    \alpha_t^{\text{LORD}_{\text{dep}}}=\gamma_t s_0+(\alpha-s_0)\gamma_{t-r_1}\mathbb{I}\{r_1<t\}+\alpha\sum_{j=2}^\infty\gamma_{t-r_j}.
\end{equation*}
    \begin{equation*}\label{eq:alpha-SAFFRON-dep}
    \alpha_t^{\text{SAFFRON}_{\text{dep}}}:=\min\left\{\lambda, (1-\lambda)\left(s_0\gamma_{t-C_{0+}}+(\alpha-s_0)\gamma_{t-r_1-C_{1+}}+\alpha(\sum_{j\geq 2}\gamma_{t-r_j-C_{j+}})\right)\right\},
\end{equation*}
where $C_{j+}=\sum_{i=r_j+1}^{t-L_t+1}C_i$.
% $\tilde{\alpha}_t^{\text{SAFFRON}_{\text{dep}}}=(1-\lambda)\left(w_0\gamma_{t-C_{0+}}+(\alpha-w_0)\gamma_{t-r_1-C_{1+}}+\alpha(\sum_{j\geq 2}\gamma_{t-r_j-C_{j+}})\right)$
\end{itemize}

\item \textbf{Conformal $p$-values.} The notion of conformal \( p \)-values was originally introduced by \citet{vovk2005algorithmic} for constructing prediction intervals. A conformal \( p \)-value quantifies how well a new observation conforms to a reference set, based on a chosen nonconformity score function. More recently, several works have applied conformal \( p \)-values to sample selection from a multiple testing perspective~\citep{jin2025model,bates2021testing,wang2024conformalized}. The conformal $p$-values are defined as
\begin{equation}
    p_t = \frac{ \sum_{i \in \gC_{0}} \I\{V_{i} < V_{t}\}+\xi_t[1+\sum_{i\in\gC_0}\I\{V_i=V_t\}]}{1 + |\gC_{0}|},
\end{equation}  
where $\gC_0$ is a hold-out calibration dataset and $V(\cdot)$ is a nonconformity score function. However, in these approaches, the conformal \( p \)-values are constructed using a fixed offline calibration dataset, which limits their flexibility in online or adaptive settings.

\end{itemize}}

\subsection{Testing levels of the proposed approaches}\label{appendix:testing-levels}
\begin{itemize}
    \item[1.] Testing levels for $\text{GAIF}_{\text{dep}}$ and $\text{Ada-GAIF}_{\text{dep}}$:
    
    Define $r_k$ under local dependence as:
\[r_k=\min\{i\in[t]: \sum_{j=1}^{i-L_{i+1}}\delta_j\geq k\}.\] The corresponding test levels for $\text{LF}_{\text{dep}}$ and $\text{SF}_{\text{dep}}$ are constructed as follows:
\begin{equation*}\label{eq:alpha-LORDF-dep}
    \alpha_t^{\text{GAIF}_{\text{dep}}}=\gamma_t s_0+(\alpha-s_0)\gamma_{t-r_1}\mathbb{I}\{r_1<t\}+\alpha\sum_{j=2}^\infty\gamma_{t-r_j}+\textcolor{black}{\sum_{j:j\in\mathcal{I}_t, j<t-L_t}\gamma_{t-j}\alpha_{j}\theta_{j}},
\end{equation*}
\begin{equation*}\label{eq:alpha-SAFFRONF-dep}
    \begin{aligned}
        \alpha_t^{\text{Ada-GAIF}_{\text{dep}}} := \min\Biggl\{ & \lambda, (1-\lambda)\biggl( s_0\gamma_{t-C_{0+}} + (\alpha-s_0)\gamma_{t-r_1-C_{1+}} + \alpha\sum_{j\geq 2}\gamma_{t-r_j-C_{j+}} \biggr) \\
        & + \textcolor{black}{\sum_{j \,:\, j \in \mathcal{I}_t,j<t-L_t} \gamma_{t - j-C_{j+}^{\#}}\, \alpha_j\, \theta_j\mathbb{I}\{p_j>\lambda\}} \Biggr\},
    \end{aligned}
\end{equation*}
where $C_{j+}=\sum_{i=r_j+1}^{t-L_t}C_i$, {\color{black}$C_{j+}^{\#}=\sum_{i=j+1}^{t-L_t}C_i$.}
\item[2.] Testing levels for $\text{LFS}$ and $\text{SFS}$:
Recall that the proposed test levels of GAIF under full and instant feedback, i.e., $\alpha_t^{\rm LF}$, are given by:
\begin{equation}
\alpha_t^{\text{LF}}=\gamma_t s_0+(\alpha-s_0)\gamma_{t-\tau_1}\mathbb{I}\{\tau_1<t\}+\alpha\sum_{j:\tau_j<t,\tau_j\neq \tau_1}\gamma_{t-\tau_j}+\sum_{j:j<t}\gamma_{t-j}\alpha_{j}\theta_j,
\end{equation}
where \( \tau_j \) denotes the time of the \( j \)-th rejection. We revise this rule to:
\begin{equation}\label{eq:alpha-LORDF-conservative}
{\alpha}_t^{\text{LFS}}=\gamma_t s_0+(\alpha-s_0)\gamma_{t-\tilde{\tau}_1}\mathbb{I}\{{\tilde\tau}_1<t\}+\alpha\sum_{j:\tilde{\tau}_j<t,\tilde{\tau}_j\neq \tilde{\tau}_1}\gamma_{t-\tilde{\tau}_j}+\sum_{j:j<t}\gamma_{t-j}{\alpha}_{j}\theta_j,
\end{equation}
where \( \tilde{\tau}_j \) denotes the time of the \( j \)-th rejection under the null, defined as
\[
\tilde{\tau}_j = \inf\left\{t\in\mathbb{N}:\sum_{i\leq t}\delta_i(1-\theta_i)\geq j\right\}.
\]
And similarly, the test levels of SFS are:
\begin{equation}\label{eq:alpha-SAFFRONF-conservative}
    \alpha_t^{\text{SFS}}:=\min\{\lambda,\tilde{\alpha}_t^{\text{SFS}}\},
\end{equation}
where $$
       \textcolor{black}{\tilde{\alpha}_t^{\text{SFS}}=(1-\lambda)\Bigl[
            s_0\, \gamma_{t - C_{0+}}
            + (\alpha - s_0)\, \gamma_{t - \tilde\tau_1 - C_{1+}}
            + \alpha \sum_{j \geq 2} \gamma_{t - \tilde\tau_j - C_{j+}}
        \Bigr]+ \sum_{j \,:\, j <t} \gamma_{t - j-C_{j+}^{*}}\, \alpha_j\, \theta_j\mathbb{I}\{p_j>\lambda\}. }
   $$
\end{itemize}

\subsection{Choice of non-conformity score function $V$}

In terms of the non-conformity score function, denote $W_t=\widehat{\mu}_t(\X_t)$, in classification settings, we set $V(W_t)=1-W_t$.  In regression settings, if $\mathcal{A}=[b,+\infty)$, we can use $V(W_j)=b-W_j$. If $\mathcal{A}=(-\infty,a]\cup[b,+\infty)$, then we can choose $V(W_t)=\max\{W_t-a,b-W_t\}$.

\section{Additional Discussions on GAIF and Adaptive GAIF} \label{appen:add-details}

In this section, we provide additional details of our GAIF and Adaptive GAIF  algorithms in Sections \ref{sec:GAIF} and \ref{sec:exten-GAIF}.

 %We provide the detailed pseudo-codes %for Ada-OCTF in Algorithm \ref{alg:ada-OCTF} below. 
% of the adaptively tuned online learning procedure in Algorithm \ref{alg:optOCTF-SD}. And the explanation about utilizing feedback in SAFFRON-type algorithms is provided in Appendix \ref{subsec:explain-saffronf}.

\subsection{Explanation about utilizing feedback in SAFFRON}\label{subsec:explain-saffronf}

From another perspective, one may propose using feedback to estimate the null proportion and operate in a manner similar to SAFFRON, but in a form different from ours in Section \ref{subsec:Adaptive-GAIF}, where the construction of $\alpha_t$ is
\begin{equation}
       {\alpha}^{\rm SF-variant}_t=s_0\gamma_{t-D_{0+}}+(\alpha-s_0)\gamma_{t-\tau_1-D_{1+}}+\alpha\sum_{j\geq2}\gamma_{t-\tau_j-D_{j+}}, 
   \end{equation}
here $D_{j+}=\sum_{i=\tau_j+1}^{t-1}\I\{\theta_i=1\}$. This approach constructs $\alpha_t$ using $\{\theta_i\}_{i=1}^{t-1}$ only, and can provide the same finite-sample online FDR guarantee.

Here, we explain why this form is not adopted in the proposed GAIF. In practice, we typically have $D_j^+<C_j^+$, since the set filtered by $p$-values is larger than the true non-null set. Consequently, due to the exponential decay design of $\gamma_t$, we have $$\gamma_{t-\tau_j-D_{j+}}\ll\gamma_{t-\tau_j-C_{j+}}.$$ 
Although the original SAFFRON includes an adjustment factor $1-\lambda$ which reduces the $\alpha_t$, this term becomes negligible compared to the exponential decay of $\gamma_{t-\tau_1-C_{1+}}$. As a result, the original SAFFRON achieves higher power than this naive feedback-based variant, especially when $\lambda$ is large. 

Our empirical results in Figure~\ref{fig:SAFFRON_lambda_vary} further confirm this phenomenon. In the Gaussian setting described in Section~\ref{subsec:simu-GAIF} with $\mu=2$ and $\pi_1=0.3$, we observe that when $\lambda>0.3$, the original SAFFRON outperforms the variant in terms of power. Moreover, its highest power occurs at $\lambda=0.8$, which is relatively large. This highlights the trade-off between $\gamma_{t-\tau_j-C_{j+}}$ and $1-\lambda$, where the rapidly decaying $\gamma_{t-\tau_j-C_{j+}}$ dominates, rendering $1-\lambda$ less influential.

In conclusion, directly using feedback to estimate the null proportion is not advisable, as it leads to a poor construction of $\alpha_t$. By contrast, our SF strategy leverages the $p$-values to achieve adaptive $\alpha$-wealth allocation and demonstrates superior performance.

\begin{figure}[h!]
		\centering
\includegraphics[width=0.7\textwidth]{Fig-GAIF/SAFFRON_lambda_vary.pdf}
		\caption{\small The FDR and Power for SAFFRON at stopping time $600$ under different $\lambda$ value for target FDR level $\alpha=0.1$. }
  \label{fig:SAFFRON_lambda_vary}
	\end{figure}

\subsection{Extensions of GAIF based on $e$-values}\label{app:exten-GAIF}

{\color{black}Although feedback cannot be directly used to improve $e$-LOND \citep{xu2024online}, it can enhance $e$-LORD and $e$-SAFFRON \citep{zhang2025egai} through a feedback-driven approach analogous to the extension from GAI to GAIF. Denote $\delta_j=\mathbb{I}\{e_j\geq 1/\alpha_j\}$. Denote $$\operatorname{FDP}^*(t)=\sum_{j\leq t,j\in \mathcal{H}_0}\frac{\alpha_j}{1+R({j-1})}\leq \sum_{j\leq t}\frac{\alpha_j}{1+R({j-1})}=\widehat{\operatorname{FDP}}_{\text{e-LORD}} .$$ 

Similar to the GAIF framework, we propose the estimates $\widehat{\FDP}_{\text{e-GAIF}}$ and $\widehat{\FDP}_{\text{e-Ada-GAIF}}$ as follows:
$$\operatorname{FDP}^*(t)=\frac{\sum_{j\leq t,j\in \mathcal{H}_0}\alpha_j}{1\vee \sum_{j\leq t}\delta_j}\leq \widehat{\operatorname{FDP}}_{\text{e-GAIF}} =\sum_{j\in \mathcal{I}_t}\frac{\alpha_j(1-\theta_j)}{1+R({j-1})}+\sum_{j\in\bar{\mathcal{I}}_t}\frac{\alpha_j}{1+R({j-1})}.$$

$$\widehat{\FDP}(t)_{ \text{e-Ada-GAIF}}=\sum_{j\in \mathcal{I}_t}\frac{\alpha_j(1-\theta_j)}{1+R({j-1})}\frac{\mathbb{I}\{e_j\leq 1/\lambda\}}{1-\lambda}+\sum_{j\in\bar{\mathcal{I}}_t}\frac{\alpha_j}{1+R({j-1})}\frac{\mathbb{I}\{e_j\leq 1/\lambda\}}{1-\lambda}.$$

Then we require $\widehat{\FDP}_{\text{e-GAIF}}\leq \alpha$ or $\widehat{\FDP}_{\text{e-Ada-GAIF}}\leq \alpha$ when constructing the testing levels. The corresponding testing levels are as follows:
\begin{equation}\label{eq:alpha-e-LF}
    \alpha_t^{\text{e-GAIF}}=\omega_t\left(\alpha-\sum_{j\in\mathcal{I}_t}\frac{\alpha_j(1-\theta_j)}{1+R({j-1})}-\sum_{j\in\bar{\mathcal{I}}_t}\frac{\alpha_j}{1+R({j-1})}\right)(R({t-1})+1),
\end{equation}
\begin{equation}\label{eq:alpha-e-SF}
    \alpha_t^{\text{e-Ada-GAIF}}=\omega_t\left(\alpha(1-\lambda)-\sum_{j\in\mathcal{I}_t}\frac{\alpha_j(1-\theta_j)\mathbb{I}\{e_j<1/\lambda\}}{1+R({j-1})}-\sum_{j\in\bar{\mathcal{I}}_t}\frac{\alpha_j\mathbb{I}\{e_j<1/\lambda\}}{1+R({j-1})}\right)(R({t-1})+1),
\end{equation}
where $\omega_t\in(0,1)$ is updated by
\begin{equation}\label{eq:omega_update}
    \omega_{t+1}=\omega_t+\omega_1\varphi^{t-R(t)}(1-\delta_t)-\omega_1\psi^{R(t)}\delta_t
\end{equation}
with a user-defined initial allocation coefficient $\omega_1\in(0,1)$, and user-defined parameters $\varphi>0,\psi>0$.

The Generalized Alpha-Investing with Feedback procedure based on $e$-values is summarized in Algorithm \ref{alg:e-GAIF}. The $e$-GAIF and $e$-Ada-GAIF can control online FDR validly  if the null $e$-values are conditional valid.

\begin{algorithm}[h!]
    \small
    \captionsetup{font=small}
    \caption{$e$-GAIF ($e$-GAIF and $e$-Ada-GAIF)}
    \label{alg:e-GAIF}
\begin{algorithmic}[1]
    \REQUIRE{Target $\operatorname{FDR}$ level $\alpha$, parameters $\lambda,\varphi$ and $\psi\in(0,1)$, initial allocation coefficient $\omega_1\in(0,1)$.} \\[0.5ex]
  %\STATE \textbf{Initialization}: $t=1$.
    \FOR{$t=1,2,\dots$}  
\STATE  Observe $e$-value $e_t$\;
\STATE  Update $\alpha_t=\alpha_t^{\text{e-GAIF}}$ in Eq.(\ref{eq:alpha-e-LF}) (or $\alpha_t=\alpha_t^{\text{e-Ada-GAIF}}$ in Eq.(\ref{eq:alpha-e-SF})) \; 
\STATE \textbf{if} $e_t\geq 1/\alpha_t$ \textbf{then} $\delta_t=1$, \textbf{else} $\delta_t=0$ 
\STATE Update $R(t)=R(t-1)+\delta_t$ and $\omega_{t+1}$ by (\ref{eq:omega_update})
% \IF{ $p_t\leq \alpha_t$}
% \STATE $\delta_t=1$
% \ELSE 
% \STATE $\delta_t=0$ 
% \ENDIF
 \STATE Obtain the revealed feedback $\theta_t$
    \ENDFOR 
   
    \ENSURE{Rejection set $\{t:\delta_t=1\}$.}
\end{algorithmic}
\end{algorithm}
    
\begin{theorem}[Online FDR control for e-GAIF and e-Ada-GAIF]\label{the:FDR_e-GAIF}
%  under arbitrary dependence
\begin{itemize}
    \item[(a.)] If the null $e$-values are conditional valid, i.e.,
    \begin{equation}
        \E\left[e_t\mid \mathcal{F}_{t-1}\right]\leq 1\; \text{for all}\; t\in\mathcal{H}_0,
    \end{equation}
     where $\mathcal{F}_{t-1}=\sigma(\delta_{1:t-1};\{\theta_j\}_{j\in\mathcal{I}_{t}})$ is the sigma field generated from past rejections and feedback, then if the parameters $\{\alpha_t\}_{t\in \mathbb{N}}$ are selected such that $\widehat{\operatorname{FDP}}_{\operatorname{e-GAIF}}(t)\leq \alpha$, then we have
    \[\operatorname{FDR}(t)\leq \alpha\quad \text{for all} \quad t\in\mathbb{N}.\]
    \item[(b.)] Denote $C_t=\mathbb{I}\{e_t<1/\lambda\}.$ If the null $e$-values are conditional valid, i.e.,
    \begin{equation}
        \E\left[e_t\mid \mathcal{J}_{t-1}\right]\leq 1\; \text{for all}\; t\in\mathcal{H}_0,
    \end{equation}
     where $\mathcal{J}_{t-1}=\sigma(\delta_{1:t-1};\{\theta_j\}_{j\in\mathcal{I}_{t}};C_{1:t-1})$ is the enlarged filtration, then if the parameters $\{\alpha_t\}_{t\in \mathbb{N}}$ are selected such that $\widehat{\operatorname{FDP}}_{\operatorname{e-Ada-GAIF}}(t)\leq \alpha$ , then we have
    \[\operatorname{FDR}(t)\leq \alpha\quad \text{for all} \quad t\in\mathbb{N}.\]
\end{itemize}
\end{theorem}
}

\subsection{Discussion about improving LOND/e-LOND with feedback}
\label{appen:LOND_disscussion}
In this subsection, we discuss the feasibility of improving LOND or $e$-LOND through our feedback-enhanced framework.

Recall the setup of the LOND or $e$-LOND algorithm. Given a non-negative sequence $\{\gamma_j\}_{j=1}^\infty$ such that $\sum_{j=1}^\infty \gamma_j=1$, the test levels are set as
\[\alpha_t^{\text{LOND}}=\alpha\gamma_t\left(\sum_{j=1}^{t-1}\delta_j\vee 1\right)=\alpha\gamma_t\left({R}(t-1)\vee 1\right).\]
 \cite{Zrnic2021asynchronous} proved that LOND controls the FDR under PRDS. 
% % \begin{proof}
%    Note that for any $t \in \mathbb{N}$ :
% $$
% \operatorname{FDR}(t)=\mathbb{E}\left[\frac{\sum_{i \leq t, i \in \mathcal{H}_0} \mathbb{I}\left\{P_i \leq \alpha_i\right\}}{{R}(t) \vee 1}\right] \leq \sum_{i \leq t, i \in \mathcal{H}_0} \mathbb{E}\left[\frac{1\left\{P_i \leq \alpha_i\right\}}{{R}(i-1) \vee 1}\right]=\sum_{i \leq t, i \in \mathcal{H}_0} \gamma_i \alpha \mathbb{E}\left[\frac{\mathbb{I}\left\{P_i \leq \alpha_i\right\}}{\alpha_i}\right]
% $$
% where the first equality follows by definition of FDR, the sole inequality follows because the number of rejections can only increase with time, and the second equality follows by definition of the LOND rule for $\alpha_i$. Lemma 1 from \cite{ramdas2019unified} now asserts that the term in the expectation is bounded by one under PRDS. Hence, by also noting that $\sum_{i<t} \gamma_i \leq 1$  completes the proof. 
% \end{proof}

%\begin{proof}
\paragraph{Improve LOND or $e$-LOND with feedback? (negative result)} 

Here we prove that the feedback cannot be used to improve LOND and $e$-LOND.
% ${R}(t)\geq {R}(i-1)$ for all $i\leq t$ (LOND).
 %   ${R}(t)\geq {V}(t-1)+{S}(i-1)\geq {V}(i-1)+{S}(i-1)={R}(i-1)$ for all $i\leq t$ (LOND with feedback). However, for $i<t$, at time $i$, we can not obtain $V(t-1)$. So we cannot adopt this.

% Define
% \[\alpha_t^{\text{LONDF}}=\alpha\gamma_t\cdot g(V(t-1), S(t-1)).\]
% To improve power, we require $g(V(t-1), S(t-1))>R(t-1)$.
 
        Denote \( a_j = \sum_{i=1}^{j-1} \delta_i \theta_i \) and \( b_j = \sum_{i=1}^{j-1} \delta_i (1 - \theta_i) \), so that \({R}({j-1}) = a_j + b_j \).  
For LOND or e-LOND, the significance level is given by  
\[
\alpha_j = \alpha \gamma_j (a_j + b_j + 1).
\]  
If \( \theta_1, \dots, \theta_{j-1} \) are known, we consider the modified significance level  
\[
\tilde{\alpha}_j = \alpha \gamma_j (2w_j a_j + 2(1 - w_j) b_j + 1),
\]  
where \( w_j \in [0,1] \) is a weight that differentiates between true and false discoveries. Setting \( w_j = 1/2 \) recovers \( \tilde{\alpha}_j = \alpha_j \).  
To control \(\operatorname{FDR}(t) \leq \alpha\), we require  
\[
\sum_{j \in \mathcal{H}_0 \cap [t]} \frac{\tilde{\alpha}_j}{{R}({j-1}) + 1} \leq \alpha.
\]  
This implies  
\[
2w_j a_j + 2(1 - w_j) b_j + 1 \leq a_j + b_j + 1.
\]  
To improve power, we require  
\[
\tilde{\alpha}_j > \alpha_j,
\]  
which translates to  
\[
2w_j a_j + 2(1 - w_j) b_j + 1 > a_j + b_j + 1.
\]  
However, this contradicts the FDR control condition, making the approach infeasible.

\section{Additional Discussions on OCTF}\label{appen:OCTF}
{\color{black}This section collects several supplementary discussions and extensions of OCTF that complement Section~\ref{sec:OCT}. 
These include delayed feedback (Appendix~\ref{appen:OCTF-delay}), a general framework for score selection (Appendix~\ref{appen:model-selection}), extensions to online-updated score functions (Appendix~\ref{appen:OCTF-onlinelearning}), and the discussion of OCTF under null distribution shift (Appendix~\ref{appen:nullshift}).}

{\color{black}\subsection{Online conformal selection with delayed Feedback}\label{appen:OCTF-delay}

In this section, we study how to adapt OCTF to settings with delayed feedback with rigorous guarantees.
We focus on the case of a fixed delay $d$, where at time $t$ only the feedback
$\{\theta_j\}_{j=1}^{t-d-1}$ is available. To maintain finite-sample mFDR control
under delayed feedback, both the construction of the test levels $\alpha_t$
and the calibration data sets must be modified accordingly.

Specifically, note that the feedback associated with test $t$ is revealed at time
$t+d+1$. To accommodate this delay, we partition the time indices into $(d+1)$
disjoint sub-streams:
\[
S_1=\{1,\,2+d,\,3+2d,\,4+3d,\ldots\},
\]
\[
S_2=\{2,\,3+d,\,4+2d,\,5+3d,\ldots\},
\]
\[
\vdots
\]
\[
S_{d+1}=\{1+d,\,2+2d,\,3+3d,\,4+4d,\ldots\}.
\]
Within each sub-stream, the feedback for a test is revealed at the next time point
belonging to the same sub-stream. For each sub-stream $j\in\{1,2,\ldots,d+1\}$, we maintain an independent online
multiple testing procedure based on safe GAIF {\color{black}with an adjusted target level $\alpha/(d+1)$}, which is used to compute and update the
test levels for that sub-stream. In parallel, we maintain a separate calibration
data set
\[
\gC_{0t}^j=\{\, i< t : i\in S_j,\ \theta_i=0 \,\}.
\]

For a test observation $\X_t$ arriving at time $t\in S_j$, the test level $\alpha_t$
is determined by the current state of the GAIF procedure associated with
sub-stream $j$. The corresponding online conformal $p$-value is defined as
\begin{equation}\label{eq:conf_p_delay}
    p_t
    =
    \frac{
        \sum_{i\in \gC_{0t}^j} \mathbb{I}\{V_i < V_t\}
        + \xi_t \left(1 + \sum_{i\in \gC_{0t}^j} \mathbb{I}\{V_i = V_t\}\right)
    }{
        1 + |\gC_{0t}^j|
    },
\end{equation}
where $\xi_t \overset{\text{i.i.d.}}{\sim} \mathrm{Unif}[0,1]$ is an independent
random variable used for tie-breaking.

The resulting online conformal testing procedure with delayed feedback,
referred to as OCTF-delay, is summarized in Algorithm~\ref{alg:OCTF-delay}.
Finite-sample mFDR control for OCTF-delay is established in
Corollary~\ref{the:FDR-OCTF-delay}.

\begin{corollary}[Finite-sample online mFDR control for OCTF-delay]\label{the:FDR-OCTF-delay}
Suppose that Assumption~\ref{assump:conformal_setting} holds.
Then the rejection set $\gR$ produced by Algorithm~\ref{alg:OCTF-delay} satisfies
\[
\mFDR(t)\le \alpha \quad \text{for all } t\in\mathbb{N}.
\]
\end{corollary}

The proof relies on a sub-stream decomposition argument.
Under fixed delay $d$, the time indices are partitioned into $(d+1)$ disjoint
sub-streams. Within each sub-stream, feedback for a test is revealed before the
next test in the same sub-stream, so the corresponding online conformal $p$-values are
super-uniform and independent under the null conditional on the sub-stream filtration.
Applying safe GAIF separately to each sub-stream therefore guarantees finite-sample
mFDR control at level $\alpha/(d+1)$ within every sub-stream. \textcolor{black}{Since the
sub-streams are disjoint, summing the corresponding bounds and using
$\sum_{r=1}^{d+1}(1\vee R_r(t))\le (d+1)(1\vee R(t))$ yields
$\E[V(t)]\le \alpha\E[1\vee R(t)]$, and therefore the combined procedure
controls the overall mFDR at level $\alpha$.} %Since false discoveries and rejections are additive across disjoint sub-streams, the overall mFDR is a weighted average of the sub-stream–wise mFDRs, and is thus controlled at level $\alpha$ at all times.

}

{\color{black}

% \begin{algorithm}[h!]
% \small
% \captionsetup{font=small}
% \caption{Online conformal testing with delayed feedback (OCTF-delay)}
% \label{alg:OCTF-delay}
% \begin{algorithmic}[1]
% \REQUIRE Initial calibration data $\gD_\gC=\{(\X_i,Y_i)\}_{i=-n+1}^0$, 
% target region $\mathcal{A}$, FDR target level $\alpha\in(0,1)$, 
% non-conformity score function $V(\cdot)$, initial wealth $s_0$, 
% sequence $\{\gamma_t\}_{t\in\mathbb{N}}$, delay $d$.

% \STATE Partition $\mathbb{N}$ into $(d+1)$ sub-streams $\{S_j\}_{j=1}^{d+1}$.
% \STATE Initialize calibration sets $\gC_{01}^j=\{i\le 0: \theta_i=0\}$ for $j=1,\ldots,d+1$.

% \FOR{$t=1,2,\ldots$}
%     \STATE Observe test data $\X_t$
%     \STATE Let $j$ be the index such that $t\in S_j$
%     \STATE Compute conformity score $V_t$
%     \STATE Compute conformal $p$-value $p_t$ using $\gC_{0t}^j$ via \eqref{eq:conf_p_delay}
%     \STATE Set test level $\alpha_t=\alpha_{t,j}^{\mathrm{LFS\text{-}FD}}$ (or $\alpha_{t,j}^{\mathrm{SFS\text{-}FD}}$)
%     \STATE \textbf{if} $p_t\le \alpha_t$ \textbf{then} $\delta_t=1$ \textbf{else} $\delta_t=0$
%     \STATE Record decision $\delta_t$ and wait for delayed feedback
%     \STATE \textbf{if} feedback for some $i<t$ is revealed at time $t$:
%         \STATE \quad update $\theta_i$ and, if $\theta_i=0$, add $i$ to $\gC_{0t}^j$
% \ENDFOR

% \ENSURE Rejection set $\mathcal{R}=\{t:\delta_t=1\}$.
% \end{algorithmic}
% \end{algorithm}

\begin{algorithm}[h!]
\small
\captionsetup{font=small}
\caption{Online conformal testing with delayed feedback (OCTF-delay)}
\label{alg:OCTF-delay}
\begin{algorithmic}[1]
\REQUIRE Initial calibration data $\gD_\gC=\{(\X_i,Y_i)\}_{i=-n+1}^0$, 
target region $\mathcal{A}$, FDR target level $\alpha\in(0,1)$, 
non-conformity score function $V(\cdot)$, initial wealth $s_0$, 
sequence $\{\gamma_t\}_{t\in\mathbb{N}}$, delay $d$.

\STATE Partition $\mathbb{N}$ into $(d+1)$ sub-streams $\{S_j\}_{j=1}^{d+1}$.
\STATE For each sub-stream $j$, initialize an independent GAIF procedure with initial wealth {\color{black}$s_0/(d+1)$ and target mFDR level $\alpha/(d+1)$}.
\STATE Initialize calibration sets $\gC_{01}^j=\{i\le 0: \theta_i=0\}$ for $j=1,\ldots,d+1$.

\FOR{$t=1,2,\ldots$}
    \STATE Observe test data $\X_t$
    \STATE Let $j$ be the index such that $t\in S_j$
    \STATE Compute conformity score $V_t$
    \STATE Compute conformal $p$-value $p_t$ using $\gC_{0t}^j$ via \eqref{eq:conf_p_delay}
    \STATE Compute test level $\alpha_t$ using the SFS (or LFS) update rule of sub-stream $j$ {\color{black}at level $\alpha/(d+1)$}
    \STATE \textbf{if} $p_t\le \alpha_t$ \textbf{then} $\delta_t=1$ \textbf{else} $\delta_t=0$
    \STATE Record decision $\delta_t$
    \STATE \textbf{if} feedback for some $i<t$ is revealed at time $t$:
        \STATE \quad update $\theta_i$ and the GAIF state of the corresponding sub-stream
        \STATE \quad if $\theta_i=0$, add $i$ to $\gC_{0t}^j$
\ENDFOR

\ENSURE Rejection set $\mathcal{R}=\{t:\delta_t=1\}$.
\end{algorithmic}
\end{algorithm}

%We evaluate the proposed procedures for Scenario~IV with delayed feedback. We compare the sub-stream-based procedures with natural but naive benchmarks that lack theoretical guarantees: these benchmarks compute online conformal $p$-values using the delayed updated calibration set and directly apply the GAIF rules designed for the full and delayed feedback settings. The results are presents in in Figure~\ref{fig:cla-FDR-delayed}. We observe that the sub-stream-based procedures (SF-sub, LF-sub, SFS-sub, and LFS-sub) continue to achieve valid empirical FDR control under delayed feedback. The naive benchmarks also exhibit empirical FDR control in our experiments and SF-FD, LF-FD achieve higher power than the sub-stream-based procedures. However, they lack finite-sample theoretical guarantees.
%while the SF-delay and LF-delay attain higher power compared to LORD++.

We evaluate the proposed procedures in Scenario~IV under delayed feedback. We compare the sub-stream-based procedures with direct delayed-feedback implementations. At time $t$, these implementations form the calibration set from the null samples whose feedback has already been revealed, compute the corresponding online conformal $p$-value, and then apply LF-FD or SF-FD without the sub-stream modification. 

Figure~\ref{fig:cla-FDR-delayed} reports the results. All sub-stream-based procedures (SF-sub, LF-sub, SFS-sub, and LFS-sub) maintain empirical FDR below the target level under delayed feedback. Notably, SF-sub and LF-sub retain substantial power despite the sub-stream partition. Although the direct SF-FD and LF-FD implementations do not admit a finite-sample guarantee, they also maintain empirical FDR control in these simulations and attain higher power than their sub-stream-based counterparts. Thus, the direct implementations may be attractive when empirical power is the main concern and the lack of guarantee is acceptable, whereas the sub-stream construction provides a principled alternative when rigorous finite-sample error control is required.

\begin{figure}[htbp!]
		\centering
		\includegraphics[width=0.8\textwidth]{Fig-GAIF/cla_plots_vary_prop_delayed.pdf}
		\caption{\small Results for Scenario {IV} with delayed feedback:  Line charts of FDR and Power with varying non-null proportion $\pi_1$ from $0.1$ to $0.8$ after $500$ replications; The black dashed lines denote the target FDR level $\alpha=0.2$. {Delay $d=5$}. }
  \label{fig:cla-FDR-delayed}
	\end{figure}

\subsection{A general validity framework and practical principles for optimized OCTF}\label{appen:model-selection}

We present a general validity-preserving framework for online conformal testing after score selection. The key idea is that, even when the score selection criterion is arbitrary, one can still restore the symmetry required for conformal validity by applying the same score selection rule under all permutations of the current null calibration set and test point.

At time $t$, let $\Omega_t$ denote the set of all permutations of $\gC_{0t}\cup\{t\}$ that leave indices outside $\gC_{0t}\cup\{t\}$ unchanged. For each $\sigma\in\Omega_t$, we apply the same score selection rule to the permuted data $(\gD_t)_\sigma$ and obtain
\[
\hat{k}_t^\sigma=\arg\min_{k\in[K]}\gM\bigl((\gD_t)_\sigma;k\bigr).
\]
We then define the permutation-based conformal $p$-value
\begin{equation}\label{eq:opt__perm_pvalue}
p_t^{\rm opt}
=
\frac{1}{|\Omega_t|}
\left[
\sum_{\sigma\in\Omega_t}
\mathbb{I}\{V(\X_{\sigma(t)};\hat{k}_t^\sigma)<V(\X_t;\hat{k}_t)\}
+
\xi_t\sum_{\sigma\in\Omega_t}
\mathbb{I}\{V(\X_{\sigma(t)};\hat{k}_t^\sigma)=V(\X_t;\hat{k}_t)\}
\right].
\end{equation}
Intuitively, under each permutation, the candidate test point $\sigma(t)$ is evaluated using the score function that would have been selected under that same permutation. In this way, the score selection step and the conformal ranking step are coupled symmetrically across all elements of $\gC_{0t}\cup\{t\}$, restoring the exchangeability needed for conformal validity.

With this construction, arbitrary score selection criteria can be incorporated without sacrificing finite-sample error control.

\begin{theorem}[{\color{black}Finite-sample online mFDR control based on full permutations}]\label{the:FDR-Opt-perm}
Under the same assumptions as in Theorem~\ref{the:FDR-OCTF}, for any score selection criterion $\gM$, the procedure in Algorithm~\ref{alg:OCTF-ind-conservative} with $\{p_t\}$ replaced by $\{p_t^{\rm opt}\}$ satisfies
\[
\mFDR(t)\le \alpha,\qquad \forall t\in\mathbb{N}.
\]
\end{theorem}

\paragraph{Practical principles for score selection}

Theorem~\ref{the:FDR-Opt-perm} shows that online conformal testing with score selection is possible in full generality. However, this result is mainly a theoretical benchmark: the full-permutation correction can be computationally infeasible in online settings and may also degrade power when the selected score function varies substantially across permutations. This observation motivates the search for score selection criteria that are not only valid in principle, but also practically effective.

Guided by this perspective, we highlight three principles that a practically useful criterion $\gM(\gD_t;\cdot)$ should satisfy.

\begin{itemize}
    \item \textbf{Power alignment.}
    At time $t$, a natural oracle objective for power is to choose the score function $k$ maximizing
    \[
    \Pr(p_{k,t}\le \alpha_t\mid \theta_t=1).
    \]
    A useful criterion should therefore be aligned with this target, in the sense that models with stronger non-null rejection ability receive better scores.

    \item \textbf{Permutation robustness.}
    The criterion should not vary drastically under permutation of $\gC_{0t}\cup\{t\}$. Otherwise, the selected score function $\hat{k}_t^\sigma$ may differ substantially across $\sigma$, so that $p_t^{\rm opt}$ compares scores computed under different score models. Such cross-model comparisons can inject extra variability, weaken the non-null signal, and reduce power. A sufficient condition avoiding this issue is that $\gM(\gD_t;\cdot)$ is symmetric with respect to $\gC_{0t}\cup\{t\}$, in which case $\hat{k}_t^\sigma\equiv \hat{k}_t$ for all $\sigma$.

    \item \textbf{Computational efficiency.}
    In principle, full permutation requires enumerating $|\Omega_t|=(|\gC_{0t}|+1)!$,
    which is prohibitive in online applications. Criteria with sufficient symmetry, especially permutation-invariant ones, allow one to avoid explicit enumeration and reduce computation essentially to the standard conformal ranking over $|\gC_{0t}|+1$ scores.
\end{itemize}

\paragraph{Alternative criteria}

These principles clarify why some seemingly natural criteria are unattractive in practice.

First, threshold-dependent criteria involving $\alpha_t$ typically violate both permutation robustness and computational efficiency. Since $\alpha_t$ is generated by the online testing rule and depends on the past rejection trajectory, such criteria are generally not symmetric with respect to $\gC_{0t}\cup\{t\}$. Moreover, a full-permutation correction would require recomputing the threshold process under each permutation, making the procedure both computationally burdensome and statistically unstable.

A second illustrative example is the criterion $\gM(\gD_t;k)=p_{k,t}$. At first glance, this choice appears attractive, since smaller $p$-values are more likely to lead to rejection. However, if used directly for score selection, it leads to severe validity issues: under the null, the procedure  selects the minimum of 
$K$ candidate $p$-values, which is generally not super-uniform.

Although Theorem~\ref{the:FDR-Opt-perm} implies that validity can still be restored through full permutation, this criterion remains practically problematic. Indeed, under this choice, for each permutation $\sigma\in\Omega_t$,
\[
\gM((\gD_t)_\sigma;k)
=
\frac{\sum_{i\in \gC_{0t}\cup\{t\}}\I\{V(\X_i;k)<V(\X_{\sigma(t)};k)\}
+\xi_t\cdot\sum_{i\in \gC_{0t}\cup\{t\}}\I\{V(\X_i;k)=V(\X_{\sigma(t)};k)\}}
{1+|\gC_{0t}|}
.
\]
Hence, when $\sigma(t)\neq t$ and denote $i=\sigma(t)$, the quantity $\gM((\gD_t)_\sigma;k)$ is simply a null conformal $p$-value for $V(\X_i;k)$ based on $\{V(\X_s;k)\}_{s\in\gC_{0t}\cup\{t\}}$. Score selection is then driven by null $p$-values, which contain little useful information for identifying a powerful model. As a result, $\hat{k}_t^\sigma$ may fluctuate nearly randomly across permutations, and the final $p_t^{\rm opt}$ in \eqref{eq:opt__perm_pvalue} is formed by ranking scores obtained under different selected score functions, possibly on different scales. This can substantially degrade power; empirically, we find that such instability may even make the procedure perform worse than random score selection.

In contrast, our EWMA criterion is designed to satisfy the above three principles: it is aligned with power through a threshold-free proxy, robust to the required permutation symmetry, and computationally efficient in online implementation. For this reason, although Theorem~\ref{the:FDR-Opt-perm} provides a general validity result for arbitrary criteria, we recommend the EWMA criterion for practical use.

Our framework also allows other power-aligned surrogates. For example, one may consider oracle-aligned objectives of the form $\E[\phi(p_{k,t})\mid \theta_t=1]$, where $\phi(\cdot)$ is a deterministic nondecreasing function that emphasizes small $p$-values, such as $\phi(p)=\min(p,c)$ for some constant $c$. Such choices could move the objective closer to the tail probability $\Pr(p_{k,t} \le \alpha_t \mid \theta_t=1)$. However, selecting $c$ appropriately, especially when $\alpha_t$ varies over time, would require additional tuning and theoretical analysis. In this paper, we therefore focus on $\mathcal{M}_t^{\mathrm{EWMA}}$ as a simple choice.

\subsection{Optimized OCTF with online-updated score functions}\label{appen:OCTF-onlinelearning}

Beyond selecting from a pool of pre-trained score functions, our framework also accommodates
\emph{data-driven score functions} that are updated online. The key requirement is the same as in Proposition~\ref{prop:online_conf_val_exch}: at time $t$, the score function used to construct the conformal $p$-value must be symmetric with respect to the samples in $\gC_{0t}\cup\{t\}$. Equivalently, when fitting the score model for the $t$-th test point, the current sample must be inserted as a provisional null sample and treated exchangeably with the historical null samples in $\gC_{0t}$.

We focus on the binary classification setting with $Y\in\{0,1\}$. The regression case can be handled similarly by replacing the response with the binary label $\mathbb{I}\{Y\in\mathcal{A}\}$. At each time $t$, for each candidate setup $k\in[K]$, we fit a one-step-ahead predictive model $\hat{\mu}_k^{(t)}$ by distinguishing the non-null samples $\{(\X_i,Y_i)\}_{i\in\gC_{1t}}$ from the null samples $\{(\X_i,Y_i)\}_{i\in\gC_{0t}}\cup\{(\X_t,0)\}$. The associated score function is then given by $V(\cdot;\hat{\mu}_k^{(t)})$.

This construction is valid as long as the training algorithm is symmetric to its input data (or more precisely, symmetric to the treated null samples $\{(\X_i,Y_i)\}_{i\in\gC_{0t}}\cup\{(\X_t,0)\}$), which is satisfied by most batch learning procedures, such as random forests and neural networks. In contrast, order-dependent updating rules, such as standard online gradient descent, generally violate this symmetry requirement and therefore cannot be directly used here.

In this perspective, we can extend our framework of score selection. The $K$ candidate setups may correspond to different score function classes or to the same score function with different hyper-parameters. When $K=1$, the procedure reduces to pure online model/score updating. When $K>1$, we update all $K$ score functions in parallel and then select the score function used for inference at time $t$. Specifically, we choose
\[
\hat{k}_t=\arg\min_{k\in[K]} \gM_t^{\mathrm{EWMA}}(\gD_t;k),
\]
where $\gM_t^{\mathrm{EWMA}}(\gD_t;k)$ is the same evaluation criterion as in Section~\ref{subsec:Opt-OCTF}. The optimized conformal $p$-value is then constructed using the selected score function based on $\hat{\mu}_{\hat{k}_t}^{(t)}$. Since each candidate score function is symmetric in $\gC_{0t}\cup\{t\}$, and the selection criterion depends only on permutation-invariant quantities, the selected score function remains symmetric as well. Therefore, the same argument as in Corollary~\ref{the:FDR-Opt} yields finite-sample mFDR control.

The detailed pseudo-code is given in Algorithm~\ref{alg:optOCTF-SD}.

\begin{algorithm}[h!]
    \small
    \captionsetup{font=small}
    \caption{Optimized OCTF with online-updated score functions}
    \label{alg:optOCTF-SD}
\begin{algorithmic}[1]
    \REQUIRE Initial data $\gD_1=\{(\X_i,Y_i)\}_{i=-N}^0$, target region $\mathcal{A}$, FDR target level $\alpha\in(0,1)$, $K$ candidate model setups, evaluation criterion $\mathcal{M}$, parameter $s_0$, parameter sequence $\{\gamma_t\}$, stopping time $T$.
    \FOR{$t=1,\dots,T$}
        \STATE Observe test covariate $\X_t$.
        \FOR{$k=1,\dots,K$}
            \STATE Fit the one-step-ahead model $\hat{\mu}_k^{(t)}$ by distinguishing
            $\{(\X_i,Y_i)\}_{i\in\gC_{1t}}$ from
            $\{(\X_i,Y_i)\}_{i\in\gC_{0t}}\cup\{(\X_t,0)\}$.
            \STATE Construct the score function $V(\cdot;\hat{\mu}_k^{(t)})$.
        \ENDFOR
        \STATE Select the predictive model by
        \[
        \hat{k}_t=\arg\min_{k\in[K]}\gM_t^{\mathrm{EWMA}}(\gD_t;k).
        \]
        \STATE Construct the optimized conformal $p$-value
        \[
        p_t^{\mathrm{opt}}
        =
        \frac{
        \sum_{i\in\gC_{0t}}
        \mathbb{I}\!\left\{
        V(\X_i;\hat{\mu}_{\hat{k}_t}^{(t)})
        <
        V(\X_t;\hat{\mu}_{\hat{k}_t}^{(t)})
        \right\}+\xi_t\left[1+\sum_{i\in\gC_{0t}}
        \mathbb{I}\!\left\{
        V(\X_i;\hat{\mu}_{\hat{k}_t}^{(t)})
        =
        V(\X_t;\hat{\mu}_{\hat{k}_t}^{(t)})
        \right\}\right]
        }{1+|\gC_{0t}|}.
        \]
        \STATE Update $\alpha_t=\alpha_t^{\mathrm{LFS}}$ in \eqref{eq:alpha-LORDF-conservative} or $\alpha_t=\alpha_t^{\mathrm{SFS}}$ in \eqref{eq:alpha-SAFFRONF-conservative}.
        \STATE Make a decision $\delta_t=\mathbb{I}\{p_t^{\mathrm{opt}}\le \alpha_t\}$.
        \STATE Observe the revealed feedback $Y_t$ (and hence $\theta_t$).
        \STATE Update the calibration sets $\gC_{0t}$ and $\gC_{1t}$.
    \ENDFOR
    \ENSURE Rejection set $\mathcal{R}_{\mathrm{opt}}=\{i:\delta_i=1,\ \delta_i\in\hdelta^T\}$.
\end{algorithmic}
\end{algorithm}

Although online updating of score functions improves data utilization and adaptivity, it is substantially more expensive than selecting from a fixed pool of pre-trained models. To preserve the required symmetry, each candidate model typically needs to be refit from scratch after adding the provisional null sample $(\X_t,0)$. For this reason, in the main text we focus on score selection among pre-trained score functions, and view online updating as a valid but computationally heavier extension.

\paragraph{Experimental Evaluations}
\label{sec:online_update_exp}
We illustrate this through an additional experiment.
In particular, we compare our approach with an alternative procedure that retrains the score function (predictive model) at every time step using all previously observed data. We adopt the same data-generating process as Scenario IV with sine pattern shift in Section \ref{subsec:simu-opt-OCTF}.  We set the target level $\alpha=0.2$.

Consider three strategies for constructing the predictive score function:

\begin{itemize}
    \item \textbf{Fixed score function.} A score function based on Random Forest is trained once using historical data and then kept fixed throughout the online testing process.
    
    \item \textbf{Score selection (ours).} We pre-train $K=3$ candidate score functions (Neural network, SVM, and Random Forest) using historical data, which induce $K$ candidate score functions. During the online testing process, the algorithm dynamically selects one candidate at each time step according to the feedback-based EWMA criterion described in Section~\ref{subsec:Opt-OCTF}.
    
    \item \textbf{Online retraining/updating.} At each time step $t$, the score function based on Random Forest is retrained using all previously observed data. To maintain the validity of conformal $p$-values, samples with feedback $\theta_t=0$ are incorporated into the null calibration set in the same way as other null observations.
\end{itemize}

For each strategy, we evaluate the empirical FDR and statistical power across the testing horizon. In addition, we record the total running time required to process a sequence of $T=300$ test samples, which measures the computational cost of the corresponding score construction method.

Table~\ref{tab:online_update} reports the empirical FDR and power for the three strategies. All methods control the empirical FDR at the target level. The online retraining (score updating) strategy achieves slightly higher power than the fixed model, but its improvement over the proposed score-selection approach is marginal. From a computational perspective, however, the difference is substantial. Online retraining requires fitting a predictive model at every time step, leading to a computational cost that grows linearly with the length of the testing sequence. In contrast, the proposed method only evaluates $K=3$ pre-trained candidate score functions and updates the EWMA statistics, which incurs negligible additional overhead.

\begin{table}[t]
\centering
\caption{Comparison of score construction strategies for SF.}
\label{tab:online_update}
\begin{tabular}{lccc}
\toprule
Strategy & FDR & Power & Runtime (s) \\
\midrule
Fixed score function & 0.162\scriptsize{(0.011)} & 0.829\scriptsize{(0.035)} & 0.621\scriptsize{(0.018)} \\
Score selection (ours) & 0.161\scriptsize{(0.009)} & 0.837\scriptsize{(0.031)} & 0.644\scriptsize{(0.028)} \\
Online retraining & 0.179\scriptsize{(0.010)} & 0.866\scriptsize{(0.026)} & 122.000\scriptsize{(1.650)} \\
\bottomrule
\end{tabular}
\end{table}

% \begin{table}[t]
% \centering
% \caption{Comparison of score construction strategies for SF.}
% \label{tab:online_update}
% \begin{tabular}{lccc}
% \toprule
% Strategy & FDR & Power & Runtime (s) \\
% \midrule
% Fixed score function & 0.172\scriptsize{(0.007)} & 0.821\scriptsize{(0.024)} & 0.341\scriptsize{(0.008)} \\
% Score selection (ours) & 0.166\scriptsize{(0.010)} & 0.831\scriptsize{(0.029)} & 0.642\scriptsize{(0.027)} \\
% Online retraining & 0.171\scriptsize{(0.007)} & 0.834\scriptsize{(0.020)} & 76.600\scriptsize{(1.090)} \\
% \bottomrule
% \end{tabular}
% \end{table}

%Overall, these results highlight an important trade-off between statistical adaptivity and computational efficiency. While online model updating is feasible within our framework, its computational burden can be substantial for large-scale or real-time applications. The proposed model-selection strategy offers a practical compromise: it adapts to distributional changes through feedback while avoiding the cost of repeated model training. For this reason, we adopt model selection as the primary approach in the main methodology.

}

{\color{black}\subsection{OCTF under null distribution shifts}\label{appen:nullshift}
In this subsection, we examine how shifts in the null distribution affect OCTF and discuss possible remedies.

\paragraph{Performance evaluation under null distribution shift}
To assess robustness under non-stationarity, we consider a binary classification setting similar to Scenario IV in Section~\ref{subsec:simu-OCTF}, but introduce a time-varying drift in the null distribution. The initial training and calibration sets are sampled from the distribution at time \(t=0\), while the test stream evolves over time:

\begin{itemize}
    \item Null (\(Y_t=0\)): \(\mathbf X_t \sim \mathcal N(\bm\mu_t,\mathbf I_4)\), where
    \[
    \bm\mu_t=\Bigl(2-\Delta\frac{t}{T},0,0,0\Bigr)^\top,
    \]
    and \(\Delta\in\mathbb{R}\) controls the drift magnitude. 
    \item Alternative (\(Y_t=1\)): \(\mathbf X_t \sim \mathcal N(\bm\mu_1,\mathbf I_4)\) with fixed mean
    \[
    \bm\mu_1=(0,0,0,0)^\top.
    \]
\end{itemize}

We set \(T=1000\) and \(\pi_1=0.5\). To adapt to distribution shift, a natural idea is to use a sliding calibration window. We therefore compare:
\begin{itemize}
    \item \textbf{Growing window:} \(\gC_{0t}=\{i<t:\theta_i=0\}\).
    \item \textbf{Sliding window:} \(\gC_{0t}=\{i:t-W\le i<t,\theta_i=0\}\), with \(W=200\).
\end{itemize}
Under exchangeability, the growing window preserves the mutual independence of null online conformal \(p\)-values in Proposition~\ref{prop-inde-p}, whereas the sliding window introduces local dependence among consecutive conformal \(p\)-values.

We consider two drift directions, \(\Delta\in\{-2,2\}\). The results are shown in Figures~\ref{fig:cla-nullshift1}--\ref{fig:cla-nullshift2}. Our main findings are:

 \begin{figure}[htbp!]
		\centering
		\includegraphics[width=0.7\textwidth]{Fig-GAIF/plot_cla_Opt_vary_null_shift.pdf}
		\caption{\small Results under null distribution shift:  Line charts of FDR and Power across time $t$; The black dashed lines denote the target FDR level $\alpha=0.2$. We set the drift parameter $\Delta=-2$ so that the null shifts away from the non-null.}
  \label{fig:cla-nullshift1}
	\end{figure}

 \begin{figure}[htbp!]
		\centering
		\includegraphics[width=0.7\textwidth]{Fig-GAIF/plot_cla_Opt_vary_null_shift1.pdf}
		\caption{\small Results under null distribution shift:  Line charts of FDR and Power across time $t$; The black dashed lines denote the target FDR level $\alpha=0.2$. We set the drift magnitude $\Delta=2$ so that the null drifts towards the alternative/non-null region.}
  \label{fig:cla-nullshift2}
	\end{figure}

\begin{itemize}
    \item \textbf{Dual Effects of Drift:} Null drift does not necessarily invalidate OCTF. When the null moves away from the alternative, the procedure remains valid but becomes conservative. When the null moves toward the alternative, FDR inflation may occur.
    \item \textbf{Sliding windows may hurt validity.} Although sliding windows are intended to improve adaptation, in our experiments they exacerbate FDR inflation relative to the growing-window baseline.
    
    We attribute this to the loss of independence: the sliding window creates strong local dependence among nearby conformal \(p\)-values, and standard OCTF rules applied to such dependent \(p\)-values can inflate FDR. This suggests that simply discarding old data is not enough; the gain in adaptation may be outweighed by the loss of error-rate control. In contrast, the dependence-corrected \(\mathrm{LF}_{\mathrm{dep}}\) variant restores control, albeit at a substantial power cost.
\end{itemize}

\paragraph{Online weighted conformal $p$-values for handling null covariate shift}
These observations suggest that handling null distribution shift requires more principled corrections, such as \emph{online weighted conformal \(p\)-values}. However, obtaining exactly valid online \(p\)-values becomes highly nontrivial when the null distribution changes over time.

To illustrate this difficulty, consider a stylized setting in which all observations up to time \(t\) are null, but follow different distributions:
\[
V_s \sim F_s, \quad s=1,\ldots,t,
\]
with $F_s \neq F_{s'}$ in general. Let $\Omega_t$ denote the set of all permutations of $\{1,\ldots,t\}$. Given an unordered realization $\{V_1,\ldots,V_t\}=\{v_1,\ldots,v_t\}$, an oracle posterior over permutations can be defined as
\[
w(\{v_i\}_{i=1}^t;\sigma)
\;\propto\;
\prod_{s=1}^t f_s\!\bigl(v_{\sigma(s)}\bigr),
\qquad \sigma\in\Omega_t,
\]
where $f_s$ is the null density of $F_s$ and the weights are normalized so that
$\sum_{\sigma\in\Omega_t} w(\{v_i\}_{i=1}^t;\sigma)=1$. Intuitively, $w(\{v_i\};\sigma)$ quantifies how plausible it is that the time labels $1,\ldots,t$ are assigned to the observed $\{v_i\}_{i=1}^t$ according to $\sigma$.

Using this oracle weighting, an online conformal $p$-value could be constructed by averaging the randomized rank over permutations:
\[
p_t^{w}=
\sum_{\sigma\in\Omega_t}
w(\{V_i\}_{i=1}^t;\sigma)
\left(
\mathbb{I}\{V_{\sigma(t)}<V_t\}
+
\xi_t \mathbb{I}\{V_{\sigma(t)}=V_t\}
\right).
\]
In principle, such oracle weighting can restore super-uniformity and mutual independence by integrating over the uncertainty induced by non-identical null distributions. In offline covariate-shift settings, this weighting may simplify substantially \citep{jin2025model}; for example, if only the test point has a different null distribution while the calibration points remain i.i.d., the weight depends only on the identity of the test point. In the online setting, however, when all \(F_1,\dots,F_t\) may differ, the weight generally depends on the entire permutation, leading to factorial computational complexity. Moreover, estimating the full sequence of drifting null distributions is itself difficult, making practical implementations unstable.

\paragraph{An empirical investigation based on weighted conformal $p$-values}

A practical workaround is to maintain a fixed calibration set drawn from a stable null distribution and compute weighted conformal $p$-values relative to this reference set, similar to the offline covariate-shift approach in \citet{prinster2025watch}, although a rigorous theoretical guarantee remains open. We conduct a real-data application to verify the performance of adopting online weighted conformal \(p\)-values, where oracle weights are used.

We use the airfoil dataset \citep{misc_airfoil_self-noise_291} from the UCI Machine Learning Repository. The task is to identify samples with high sound pressure. The dataset contains \(n=1503\) observations, where the response \(Y\) is the scaled sound pressure and the covariates \(\X\in\mathbb R^5\) include log frequency, angle of attack, chord length, free-stream velocity, and suction-side log displacement thickness. We test
\[
\mathbb H_{0t}: Y_t\in (-\infty,c),
\]
where \(c\) is the \((1-\pi_1)\)-quantile of \(Y\), and \(\pi_1\in\{0.2,\dots,0.8\}\) controls the non-null proportion.

To simulate covariate shift, we construct a shifted test set \(D_{\mathrm{shift}}\) by resampling \(25\%\) of the points from \(D_{\mathrm{test}}\) with probabilities proportional to
\[
\omega(x)=\exp(x^\top\beta),
\qquad
\beta=(2,0,0,0,-3).
\]

Following \citet{prinster2025watch}, we use a fixed calibration set and compare weighted conformal \(p\)-values with their unweighted counterparts. Figure~\ref{fig:airfoil_prop} reports the results for weighted methods (wSFS, wLFS, wSF, and wLF) and unweighted methods (SFS, LFS, SF, and LF). The weighted safe procedures wSFS and wLFS consistently maintain valid empirical FDR control under covariate shift, whereas SF, LF, and SFS suffer from FDR inflation due to their reliance on unweighted conformal \(p\)-values.

\begin{figure}[h!]
		\centering
		\includegraphics[width=0.8\textwidth]{Fig-GAIF/reg_plots_vary_prop_airfoil.pdf}
		\caption{\small Results for airfoil data: the values of $\FDR(T)$  and $\operatorname{Power}(T)$ for all methods. The training algorithm is random forest. The black dashed lines denote the FDR level $\alpha=0.3$. The non-null proportion is $\pi_1\in\{0.2,\dots,0.8\}$. }
  \label{fig:airfoil_prop}
	\end{figure}
}

\section{Applications on Real-time LLM Alignment}\label{appen:LLM}

In this section, we introduce the potential application of our proposed OCTF procedure on the task of real-time LLM alignment. For example, in medical report generation tasks, we may need to sequentially select radiology images \( t \in \{1, \dots, T\} \) for which the generated reports align with expert standards. Similarly, in question-answering tasks, our goal is to identify the generated answer that best matches the true reference answer in an online fashion. Specifically, let \( f: \mathcal{X} \to \mathcal{Y} \) be a pre-trained foundation model that maps a prompt to an output. A holdout set \( \mathcal{D} = (\X_i, E_i)_{i=-n}^0 \) is available, where \( \X_i \in \mathcal{X} \) represents an input prompt, and \( E_i \in \mathcal{E} \) serves as a reference for assessing alignment. The alignment score function \( \mathcal{A}: \mathcal{Y} \times \mathcal{E} \to \mathbb{R} \) maps the generated output \( f(\X) \) and reference \( E \) to an alignment score \( A = \mathcal{A}(f(\X), E) \). For example, \( A \) may represent the similarity score between the machine-generated report \( f(\X) \) and a human expert report \( E \). The test data \( \{\X_{t}\}_{t=1}^T \) arrive sequentially, forming the following online multiple hypothesis testing problem at time \( t \):
\[
\mathbb{H}_{0t}: A_{t} \leq c \quad \text{versus} \quad \mathbb{H}_{1t}: A_{t} > c,
\]
where \( c \in \mathbb{R} \) is a pre-specified threshold. After making a decision \( \delta_t \) at time \( t \), the corresponding human expert report is revealed either immediately or with a delay of \( d \) time steps. 

Our goal is to control the online $\operatorname{FDR}$:
\begin{equation}
    \operatorname{FDR}(t)=\E\left[\frac{\sum_{j\leq t}\mathbb{I}\{A_{j}\leq c, \delta_j=1\}}{1\vee \sum_{j\leq t}\delta_j}\right]\leq \alpha.
\end{equation}

Following the conformal alignment framework \cite{guiconformal}, we randomly split \( \mathcal{D} \) into two subsets: a training set \( \mathcal{D}_\gT \) and an initial calibration set \( \mathcal{D}_{\gC} \). Using \( \mathcal{D}_\gT\), we train an alignment predictor $g(\X)$ to estimate the alignment score based on features of LLM outputs \( \X_{t} \) and  compute the predicted alignment scores $\widehat{A}_t=g(\X_t)$ for each $t\in[T]$. Then applying OCTF with \( \mathcal{D}_{\gC_{0t}} \) to select new images whose generated reports are aligned with expert standards ensures finite-sample online mFDR control according to Theorem \ref{the:FDR-OCTF}, where the calibration set $\gC_{0t}$ is updated online as the feedback is revealed. The conformal $p$-value for $t\in[T]$ is:
\begin{equation}\label{eq:conf_p_LLM}
    p_t = \frac{1 + \sum_{i \in \gC_{0t}} \I\{\widehat{A}_i\geq \widehat{A}_{t}\}}{1 + |\gC_{0t}|}.
\end{equation}  

The real-time LLM alignment procedure is summarized in Algorithm~\ref{alg:LLM-alg}, which, similar to Theorem~\ref{the:FDR-OCTF}, guarantees finite-sample mFDR control under the same assumptions.

\begin{algorithm}[h!]
\caption{Real-time LLM conformal Alignment with feedback}
\label{alg:LLM-alg}
\begin{algorithmic}[1]
\REQUIRE Pre-trained foundation model $f$;  alignment score function $\mathcal{A}$; 
reference dataset $\mathcal{D} = (\X_i,E_i)_{i=-n}^0$; 
algorithm for fitting alignment predictor $\mathcal{G}$; 
alignment level $c$; target FDR level $\alpha$.
% \vskip 0.5em
\STATE Compute the alignment score $A_i = \mathcal{A}(f(\X_i),E_i)$,  $\forall i \in \mathcal{D}$.\;
% \vskip 0.5em
\STATE Randomly split $\mathcal{D}$ into two disjoint sets: the training set $\mathcal{D}_{\gT}$ and 
the calibration set $\mathcal{D}_{\gC}$.
\STATE Fit the alignment score predictor with $\mathcal{D}_{\gT}$: 
$g \leftarrow \mathcal{G}(\mathcal{D}_{\gT})$.\; 
% \vskip 0.5 em

% \vskip 0.5 em
 \STATE Initialize $\gC_{0t}=\{i\in\gC:A_i\leq c\}$
\FOR{$t \in [T]$}
\STATE Observe test data $\X_{n+t}$
\STATE Compute the predicted alignment score: $\hat A_i \leftarrow g(\X_i)$, 
 $\forall i\in \gC$ and $\hat A_t \leftarrow g(\X_t)$.\;
\STATE Compute the conformal p-value $p_t$ according to Equation~\eqref{eq:conf_p_LLM}.\;
 \STATE  Update $\alpha_t=\alpha_t^{\text{LFS}}$ in Equation (\ref{eq:alpha-LORDF-conservative}) (or $\alpha_t=\alpha_t^{\text{SFS}}$ in (\ref{eq:alpha-SAFFRONF-conservative})) 
%\STATE Apply GAIF or $\text{GAIF}_{\text{delay}}$ to the conformal p-value $p_t$ and make a decision $\delta_t$ \;
\STATE Obtain the revealed feedback $\theta_{t}$.\;
\STATE Update the calibration dataset $\gC_{0t}$.
% \vskip 0.5 em
\ENDFOR
% \vskip 0.5 em

\ENSURE The selection set $\mathcal{R}=\{t:\delta_t=1,t\in[T]\}$.
\end{algorithmic}
\end{algorithm}

\section{Additional Experimental Results}\label{appen:add-experiment}

%\subsection{Implementation details}
In this section, we provide additional experimental results to further demonstrate the superior performance of our proposed algorithms, with a focus on online conformal testing.  
%Following the setup in \citet{robertson2023online}, we fix the pre-specified parameters as $\lambda = 0.5$, $\gamma_j \propto j^{-1.6}$, and $s_0 = \alpha/2$, and use these settings consistently across all experiments.

{\color{black}Specifically, Appendix \ref{add:numericalvalue} reports numerical results for the experiments in Sections \ref{sec:simu} and \ref{sec:real-data}; Appendix \ref{add:experement-reg} reports additional results for a regression example (Scenario~V); Appendix \ref{add:experement-training-alg} reports results under different training algorithms for \(\hat{\mu}\); Appendix \ref{add:experement-score-selection} reports additional results after score selection; Appendix \ref{add:experement-mFDR} reports empirical mFDR results; and Appendix \ref{appen_subsec:simu_auxi_modelsel} reports results on auxiliary calibration for score selection; Appendix \ref{app:safe-baselines} reports the results for comparison  with alternative finite-sample-valid baselines in online conformal testing.} 

%we report additional results for the classification example (Scenario {IV}) as well as for a regression example (Scenario {{V}}). 

{\color{black}\subsection{Additional Numerical Results for Sections~\ref{sec:simu} and~\ref{sec:real-data}}
\label{add:numericalvalue}

This section provides numerical summaries complementing the simulation experiments in Section~\ref{sec:simu} and the real-data analysis in Section~\ref{sec:real-data}. Tables~\ref{tab:sim_results}--%, \ref{tab:sim_results_dep}, \ref{tab:sim_results_bi}, \ref{tab:sim_results_fd_scenario1}, and 
\ref{tab:sim_results_classification} report the FDR and Power values underlying Figures~\ref{fig:GAIF_Gaussian}--%, \ref{fig:GAIF_dep_only2}, \ref{fig:GAIF_Bandit}, \ref{fig:GAIF_Delayed}, and 
\ref{fig:cla_stop_A_vary_prop}, respectively. Table~\ref{tab:stopping_time_results} reports the results for Scenario {IV} with sine pattern shift at the stopping time ($T=1000$). Table~\ref{tab:appendix_safe_fdr_power} further reports the performance of the safe variants \text{Opt-SFS} and \text{Opt-LFS} on the real-data tasks of Section~\ref{sec:real-data}.
}

\begin{table}[!htbp]
\centering
\caption{FDR and Power of competing methods under Scenario I and Scenario II. Results are averaged over 500 simulation runs. Standard errors are given in parentheses.}
\label{tab:sim_results}
\resizebox{\textwidth}{!}{%
\begin{tabular}{ll rrrrrrrr}
\toprule
& & \multicolumn{8}{c}{$\pi_1$} \\
\cmidrule(lr){3-10}
Scenario & Method & 0.1 & 0.2 & 0.3 & 0.4 & 0.5 & 0.6 & 0.7 & 0.8 \\
\midrule
\multirow{10}{*}{I}
& \multicolumn{9}{l}{\textit{FDR}} \\
& LOND    & .023\tiny{(.005)} & .012\tiny{(.004)} & .008\tiny{(.003)} & .006\tiny{(.001)} & .005\tiny{(.001)} & .005\tiny{(.001)} & .003\tiny{(.001)} & .001\tiny{(.000)} \\
& LORD++  & .021\tiny{(.006)} & .011\tiny{(.003)} & .010\tiny{(.002)} & .013\tiny{(.001)} & .010\tiny{(.001)} & .009\tiny{(.001)} & .007\tiny{(.001)} & .005\tiny{(.000)} \\
& SAFFRON & .043\tiny{(.007)} & .053\tiny{(.005)} & .063\tiny{(.003)} & .073\tiny{(.002)} & .077\tiny{(.001)} & .072\tiny{(.001)} & .063\tiny{(.001)} & .053\tiny{(.001)} \\
& LF      & .063\tiny{(.009)} & .068\tiny{(.007)} & .071\tiny{(.004)} & .086\tiny{(.003)} & .092\tiny{(.001)} & .091\tiny{(.001)} & .090\tiny{(.001)} & .086\tiny{(.000)} \\
& SF      & .065\tiny{(.009)} & .069\tiny{(.007)} & .075\tiny{(.004)} & .086\tiny{(.002)} & .091\tiny{(.001)} & .091\tiny{(.001)} & .088\tiny{(.001)} & .080\tiny{(.001)} \\
\cmidrule(lr){2-10}
& \multicolumn{9}{l}{\textit{Power}} \\
& LOND    & .009\tiny{(.001)} & .012\tiny{(.000)} & .014\tiny{(.000)} & .016\tiny{(.000)} & .018\tiny{(.000)} & .020\tiny{(.000)} & .022\tiny{(.000)} & .024\tiny{(.000)} \\
& LORD++  & .007\tiny{(.001)} & .013\tiny{(.001)} & .021\tiny{(.001)} & .031\tiny{(.001)} & .046\tiny{(.001)} & .059\tiny{(.001)} & .074\tiny{(.001)} & .090\tiny{(.002)} \\
& SAFFRON & .010\tiny{(.001)} & .028\tiny{(.002)} & .080\tiny{(.003)} & .169\tiny{(.004)} & .278\tiny{(.003)} & .381\tiny{(.003)} & .482\tiny{(.003)} & .584\tiny{(.002)} \\
& LF      & .009\tiny{(.001)} & .023\tiny{(.001)} & .076\tiny{(.003)} & \textbf{.184}\tiny{(.004)} & \textbf{.335}\tiny{(.003)} & \textbf{.477}\tiny{(.002)} & \textbf{.620}\tiny{(.002)} & \textbf{.771}\tiny{(.001)} \\
& SF      & \textbf{.011}\tiny{(.001)} & \textbf{.031}\tiny{(.002)} & \textbf{.096}\tiny{(.003)} & .210\tiny{(.004)} & .335\tiny{(.003)} & .465\tiny{(.002)} & .594\tiny{(.002)} & .738\tiny{(.002)} \\
\midrule
\multirow{10}{*}{II}
& \multicolumn{9}{l}{\textit{FDR}} \\
& LOND    & .014\tiny{(.004)} & .012\tiny{(.003)} & .008\tiny{(.002)} & .005\tiny{(.001)} & .002\tiny{(.001)} & .002\tiny{(.001)} & .002\tiny{(.001)} & .003\tiny{(.001)} \\
& LORD++  & .013\tiny{(.004)} & .012\tiny{(.003)} & .012\tiny{(.002)} & .013\tiny{(.001)} & .010\tiny{(.001)} & .009\tiny{(.001)} & .006\tiny{(.001)} & .005\tiny{(.000)} \\
& SAFFRON & .025\tiny{(.005)} & .042\tiny{(.004)} & .063\tiny{(.004)} & .074\tiny{(.002)} & .080\tiny{(.001)} & .084\tiny{(.001)} & .079\tiny{(.001)} & .072\tiny{(.001)} \\
& LF      & .073\tiny{(.010)} & .067\tiny{(.008)} & .077\tiny{(.006)} & .084\tiny{(.002)} & .089\tiny{(.001)} & .092\tiny{(.001)} & .090\tiny{(.001)} & .087\tiny{(.000)} \\
& SF      & .049\tiny{(.007)} & .061\tiny{(.005)} & .074\tiny{(.003)} & .081\tiny{(.002)} & .086\tiny{(.001)} & .085\tiny{(.001)} & .083\tiny{(.001)} & .080\tiny{(.001)} \\
\cmidrule(lr){2-10}
& \multicolumn{9}{l}{\textit{Power}} \\
& LOND    & .011\tiny{(.001)} & .012\tiny{(.000)} & .014\tiny{(.000)} & .015\tiny{(.000)} & .016\tiny{(.000)} & .018\tiny{(.000)} & .020\tiny{(.000)} & .022\tiny{(.000)} \\
& LORD++  & .009\tiny{(.001)} & .013\tiny{(.001)} & .019\tiny{(.001)} & .028\tiny{(.001)} & .036\tiny{(.001)} & .049\tiny{(.001)} & .059\tiny{(.001)} & .070\tiny{(.001)} \\
& SAFFRON & .008\tiny{(.001)} & .024\tiny{(.001)} & .064\tiny{(.003)} & .152\tiny{(.004)} & .280\tiny{(.005)} & .440\tiny{(.005)} & .616\tiny{(.004)} & .785\tiny{(.002)} \\
& LF      & .010\tiny{(.001)} & .022\tiny{(.001)} & .061\tiny{(.002)} & .172\tiny{(.004)} & .328\tiny{(.004)} & \textbf{.517}\tiny{(.003)} & \textbf{.713}\tiny{(.002)} & \textbf{.875}\tiny{(.001)} \\
& SF      & \textbf{.017}\tiny{(.001)} & \textbf{.042}\tiny{(.002)} & \textbf{.104}\tiny{(.003)} & \textbf{.212}\tiny{(.003)} & \textbf{.338}\tiny{(.003)} & .477\tiny{(.003)} & .640\tiny{(.003)} & .800\tiny{(.002)} \\
\bottomrule
\end{tabular}%
}
\end{table}

\begin{table}[!htbp]
\centering
\caption{FDR and Power of competing methods under dependent setting (Scenario III). Results are averaged over 500 simulation runs. Standard errors are given in parentheses.}
\label{tab:sim_results_dep}
\resizebox{\textwidth}{!}{%
\begin{tabular}{ll rrrrrrrr}
\toprule
& & \multicolumn{8}{c}{$\pi_1$} \\
\cmidrule(lr){3-10}
& Method & 0.1 & 0.2 & 0.3 & 0.4 & 0.5 & 0.6 & 0.7 & 0.8 \\
\midrule
\multicolumn{10}{l}{\textit{FDR}} \\
& LOND        & .010\tiny{(.003)} & .007\tiny{(.002)} & .006\tiny{(.002)} & .004\tiny{(.001)} & .002\tiny{(.001)} & .002\tiny{(.000)} & .002\tiny{(.001)} & .001\tiny{(.000)} \\
& LORD++      & .034\tiny{(.006)} & .033\tiny{(.005)} & .030\tiny{(.003)} & .025\tiny{(.003)} & .013\tiny{(.002)} & .015\tiny{(.002)} & .013\tiny{(.001)} & .007\tiny{(.001)} \\
& LORDdep     & .023\tiny{(.004)} & .022\tiny{(.003)} & .020\tiny{(.003)} & .018\tiny{(.002)} & .009\tiny{(.001)} & .009\tiny{(.001)} & .009\tiny{(.001)} & .004\tiny{(.001)} \\
& SAFFRON     & .270\tiny{(.014)} & .312\tiny{(.011)} & .280\tiny{(.008)} & .236\tiny{(.006)} & .191\tiny{(.005)} & .160\tiny{(.003)} & .117\tiny{(.003)} & .085\tiny{(.002)} \\
& SAFFRONdep  & .036\tiny{(.005)} & .041\tiny{(.004)} & .038\tiny{(.003)} & .037\tiny{(.003)} & .025\tiny{(.002)} & .025\tiny{(.002)} & .019\tiny{(.001)} & .012\tiny{(.001)} \\
& LF          & .156\tiny{(.011)} & .177\tiny{(.009)} & .155\tiny{(.006)} & .126\tiny{(.005)} & .105\tiny{(.004)} & .097\tiny{(.002)} & .083\tiny{(.002)} & .071\tiny{(.001)} \\
& LFdep       & .030\tiny{(.005)} & .029\tiny{(.004)} & .026\tiny{(.003)} & .024\tiny{(.002)} & .013\tiny{(.001)} & .014\tiny{(.001)} & .012\tiny{(.001)} & .006\tiny{(.001)} \\
& SF          & .270\tiny{(.014)} & .312\tiny{(.011)} & .281\tiny{(.008)} & .236\tiny{(.006)} & .191\tiny{(.005)} & .160\tiny{(.003)} & .117\tiny{(.003)} & .085\tiny{(.002)} \\
& SFdep       & .031\tiny{(.005)} & .035\tiny{(.004)} & .032\tiny{(.003)} & .033\tiny{(.003)} & .022\tiny{(.002)} & .023\tiny{(.002)} & .019\tiny{(.001)} & .014\tiny{(.001)} \\
\midrule
\multicolumn{10}{l}{\textit{Power}} \\
& LOND        & .126\tiny{(.006)} & .155\tiny{(.006)} & .186\tiny{(.006)} & .209\tiny{(.007)} & .210\tiny{(.007)} & .244\tiny{(.007)} & .248\tiny{(.007)} & .255\tiny{(.008)} \\
& LORD++      & .085\tiny{(.006)} & .128\tiny{(.008)} & .193\tiny{(.009)} & .225\tiny{(.010)} & .237\tiny{(.010)} & .295\tiny{(.011)} & .311\tiny{(.011)} & .310\tiny{(.012)} \\
& LORDdep     & .079\tiny{(.006)} & .117\tiny{(.007)} & .176\tiny{(.009)} & .205\tiny{(.010)} & .217\tiny{(.010)} & .274\tiny{(.011)} & .288\tiny{(.011)} & .285\tiny{(.012)} \\
& SAFFRON     & .282\tiny{(.012)} & .416\tiny{(.012)} & .515\tiny{(.012)} & .588\tiny{(.012)} & .648\tiny{(.011)} & .748\tiny{(.009)} & .788\tiny{(.009)} & .857\tiny{(.007)} \\
& SAFFRONdep  & .150\tiny{(.008)} & .236\tiny{(.009)} & .323\tiny{(.010)} & .387\tiny{(.012)} & .438\tiny{(.011)} & .519\tiny{(.011)} & .542\tiny{(.011)} & .580\tiny{(.010)} \\
& LF          & .188\tiny{(.009)} & .347\tiny{(.011)} & .490\tiny{(.011)} & \textbf{.604}\tiny{(.011)} & \textbf{.698}\tiny{(.010)} & \textbf{.808}\tiny{(.007)} & \textbf{.858}\tiny{(.006)} & \textbf{.905}\tiny{(.005)} \\
& LFdep       & .145\tiny{(.007)} & .234\tiny{(.009)} & .308\tiny{(.009)} & .372\tiny{(.010)} & .417\tiny{(.010)} & .478\tiny{(.009)} & .490\tiny{(.010)} & .520\tiny{(.009)} \\
& SF          & \textbf{.283}\tiny{(.012)} & \textbf{.416}\tiny{(.012)} & \textbf{.515}\tiny{(.012)} & .589\tiny{(.012)} & .650\tiny{(.011)} & .748\tiny{(.009)} & .791\tiny{(.008)} & .862\tiny{(.007)} \\
& SFdep       & .182\tiny{(.009)} & .270\tiny{(.010)} & .351\tiny{(.010)} & .412\tiny{(.011)} & .458\tiny{(.011)} & .542\tiny{(.010)} & .572\tiny{(.010)} & .628\tiny{(.009)} \\
\bottomrule
\end{tabular}%
}
\end{table}
\begin{table}[!htbp]
\centering
\caption{FDR and Power of competing methods under bandit and instant feedback setting. Results are averaged over 500 simulation runs. Standard errors are given in parentheses.}
\label{tab:sim_results_bi}
\resizebox{\textwidth}{!}{%
\begin{tabular}{l rrrrrrrr}
\toprule
& \multicolumn{8}{c}{$\pi_1$} \\
\cmidrule(lr){2-9}
Method & 0.1 & 0.2 & 0.3 & 0.4 & 0.5 & 0.6 & 0.7 & 0.8 \\
\midrule
\multicolumn{9}{l}{\textit{FDR}} \\
LOND    & .014\tiny{(.004)} & .012\tiny{(.003)} & .009\tiny{(.002)} & .008\tiny{(.002)} & .004\tiny{(.001)} & .003\tiny{(.001)} & .003\tiny{(.001)} & .002\tiny{(.000)} \\
LORD++  & .004\tiny{(.002)} & .002\tiny{(.001)} & .006\tiny{(.001)} & .006\tiny{(.001)} & .008\tiny{(.001)} & .006\tiny{(.001)} & .004\tiny{(.000)} & .003\tiny{(.000)} \\
SAFFRON & .049\tiny{(.008)} & .050\tiny{(.006)} & .065\tiny{(.004)} & .071\tiny{(.002)} & .074\tiny{(.001)} & .071\tiny{(.001)} & .064\tiny{(.001)} & .053\tiny{(.000)} \\
LF-BI   & .045\tiny{(.007)} & .058\tiny{(.007)} & .070\tiny{(.005)} & .084\tiny{(.003)} & .089\tiny{(.002)} & .091\tiny{(.001)} & .090\tiny{(.001)} & .086\tiny{(.000)} \\
SF-BI   & .049\tiny{(.008)} & .050\tiny{(.006)} & .065\tiny{(.004)} & .071\tiny{(.002)} & .074\tiny{(.001)} & .071\tiny{(.001)} & .064\tiny{(.001)} & .053\tiny{(.000)} \\
\midrule
\multicolumn{9}{l}{\textit{Power}} \\
LOND    & .009\tiny{(.000)} & .013\tiny{(.000)} & .014\tiny{(.000)} & .016\tiny{(.000)} & .018\tiny{(.000)} & .021\tiny{(.000)} & .022\tiny{(.000)} & .026\tiny{(.001)} \\
LORD++  & .004\tiny{(.000)} & .007\tiny{(.001)} & .012\tiny{(.001)} & .020\tiny{(.001)} & .028\tiny{(.001)} & .039\tiny{(.002)} & .050\tiny{(.002)} & .066\tiny{(.002)} \\
SAFFRON & .011\tiny{(.001)} & .028\tiny{(.002)} & .082\tiny{(.003)} & .171\tiny{(.004)} & .271\tiny{(.004)} & .377\tiny{(.003)} & .483\tiny{(.002)} & .588\tiny{(.002)} \\
LF-BI   & .010\tiny{(.001)} & .024\tiny{(.001)} & .074\tiny{(.003)} & .183\tiny{(.004)} & .313\tiny{(.004)} & .462\tiny{(.004)} & .606\tiny{(.003)} & .765\tiny{(.002)} \\
SF-BI   & .011\tiny{(.001)} & .028\tiny{(.002)} & .082\tiny{(.003)} & .171\tiny{(.004)} & .271\tiny{(.004)} & .377\tiny{(.003)} & .483\tiny{(.002)} & .588\tiny{(.002)} \\
\bottomrule
\end{tabular}%
}
\end{table}

\begin{table}[!htbp]
\centering
\caption{FDR and Power of competing methods under full and delayed feedback setting with delay $d \in \{0, 10, 100\}$ (Scenario~I). Results are averaged over 500 simulation runs. Standard errors are given in parentheses.}
\label{tab:sim_results_fd_scenario1}
\resizebox{\textwidth}{!}{%
\begin{tabular}{ll rrrrrrrr}
\toprule
& & \multicolumn{8}{c}{$\pi_1$} \\
\cmidrule(lr){3-10}
& Method & 0.1 & 0.2 & 0.3 & 0.4 & 0.5 & 0.6 & 0.7 & 0.8 \\
\midrule
\multicolumn{10}{l}{\textit{FDR}} \\
& LOND         & .021\tiny{(.005)} & .014\tiny{(.003)} & .008\tiny{(.003)} & .006\tiny{(.001)} & .003\tiny{(.001)} & .002\tiny{(.001)} & .002\tiny{(.000)} & .001\tiny{(.000)} \\
& LORD++       & .001\tiny{(.001)} & .004\tiny{(.001)} & .003\tiny{(.001)} & .005\tiny{(.001)} & .006\tiny{(.001)} & .006\tiny{(.001)} & .005\tiny{(.001)} & .004\tiny{(.000)} \\
& SAFFRON      & .044\tiny{(.007)} & .052\tiny{(.005)} & .062\tiny{(.004)} & .072\tiny{(.002)} & .074\tiny{(.001)} & .071\tiny{(.001)} & .065\tiny{(.001)} & .053\tiny{(.000)} \\
\cmidrule(lr){2-10}
& LF-FD $d=0$   & .047\tiny{(.008)} & .068\tiny{(.007)} & .066\tiny{(.004)} & .085\tiny{(.003)} & .090\tiny{(.001)} & .091\tiny{(.001)} & .091\tiny{(.001)} & .087\tiny{(.000)} \\
& LF-FD $d=10$  & .044\tiny{(.008)} & .057\tiny{(.007)} & .048\tiny{(.005)} & .046\tiny{(.003)} & .045\tiny{(.001)} & .039\tiny{(.001)} & .032\tiny{(.001)} & .022\tiny{(.000)} \\
& LF-FD $d=100$ & .041\tiny{(.007)} & .055\tiny{(.007)} & .046\tiny{(.005)} & .042\tiny{(.003)} & .040\tiny{(.001)} & .035\tiny{(.001)} & .028\tiny{(.001)} & .020\tiny{(.000)} \\
\cmidrule(lr){2-10}
& SF-FD $d=0$   & .048\tiny{(.007)} & .056\tiny{(.005)} & .068\tiny{(.004)} & .084\tiny{(.002)} & .088\tiny{(.001)} & .090\tiny{(.001)} & .090\tiny{(.001)} & .081\tiny{(.001)} \\
& SF-FD $d=10$  & .045\tiny{(.007)} & .055\tiny{(.005)} & .065\tiny{(.004)} & .076\tiny{(.002)} & .079\tiny{(.001)} & .078\tiny{(.001)} & .074\tiny{(.001)} & .064\tiny{(.001)} \\
& SF-FD $d=100$ & .044\tiny{(.007)} & .054\tiny{(.005)} & .063\tiny{(.004)} & .073\tiny{(.002)} & .075\tiny{(.001)} & .072\tiny{(.001)} & .066\tiny{(.001)} & .054\tiny{(.000)} \\
\midrule
\multicolumn{10}{l}{\textit{Power}} \\
& LOND         & .009\tiny{(.000)} & .012\tiny{(.000)} & .013\tiny{(.000)} & .015\tiny{(.000)} & .017\tiny{(.000)} & .020\tiny{(.000)} & .022\tiny{(.000)} & .025\tiny{(.000)} \\
& LORD++       & .004\tiny{(.000)} & .007\tiny{(.001)} & .010\tiny{(.001)} & .018\tiny{(.001)} & .029\tiny{(.001)} & .039\tiny{(.002)} & .052\tiny{(.002)} & .071\tiny{(.002)} \\
& SAFFRON      & .010\tiny{(.001)} & .031\tiny{(.002)} & .073\tiny{(.003)} & .166\tiny{(.004)} & .275\tiny{(.004)} & .382\tiny{(.003)} & .482\tiny{(.002)} & .588\tiny{(.002)} \\
\cmidrule(lr){2-10}
& LF-FD $d=0$   & .008\tiny{(.001)} & .026\tiny{(.002)} & .070\tiny{(.003)} & .182\tiny{(.004)} & .328\tiny{(.004)} & .476\tiny{(.002)} & .620\tiny{(.002)} & .775\tiny{(.001)} \\
& LF-FD $d=10$  & .007\tiny{(.001)} & .019\tiny{(.001)} & .036\tiny{(.002)} & .080\tiny{(.003)} & .146\tiny{(.003)} & .212\tiny{(.003)} & .269\tiny{(.003)} & .326\tiny{(.002)} \\
& LF-FD $d=100$ & .007\tiny{(.001)} & .017\tiny{(.001)} & .029\tiny{(.002)} & .059\tiny{(.002)} & .114\tiny{(.003)} & .167\tiny{(.003)} & .215\tiny{(.003)} & .266\tiny{(.003)} \\
\cmidrule(lr){2-10}
& SF-FD $d=0$   & .011\tiny{(.001)} & .035\tiny{(.002)} & .087\tiny{(.003)} & \textbf{.202}\tiny{(.004)} & \textbf{.331}\tiny{(.003)} & \textbf{.462}\tiny{(.002)} & \textbf{.596}\tiny{(.002)} & \textbf{.741}\tiny{(.002)} \\
& SF-FD $d=10$  & .011\tiny{(.001)} & .033\tiny{(.002)} & .079\tiny{(.003)} & .181\tiny{(.004)} & .296\tiny{(.004)} & .413\tiny{(.002)} & .528\tiny{(.002)} & .654\tiny{(.002)} \\
& SF-FD $d=100$ & .010\tiny{(.001)} & \textbf{.032}\tiny{(.002)} & .075\tiny{(.003)} & .169\tiny{(.004)} & .279\tiny{(.004)} & .387\tiny{(.003)} & .489\tiny{(.002)} & .599\tiny{(.002)} \\
\bottomrule
\end{tabular}%
}
\end{table}

\begin{table}[!htbp]
\centering
\caption{FDR and Power of competing methods in the classification setting. Results are averaged over 500 simulation runs. Standard errors are given in parentheses.}
\label{tab:sim_results_classification}
\resizebox{\textwidth}{!}{%
\begin{tabular}{l rrrrrrrr}
\toprule
& \multicolumn{8}{c}{$\pi_1$} \\
\cmidrule(lr){2-9}
Method & 0.1 & 0.2 & 0.3 & 0.4 & 0.5 & 0.6 & 0.7 & 0.8 \\
\midrule
\multicolumn{9}{l}{\textit{FDR}} \\
LOND    & .034\tiny{(.007)} & .023\tiny{(.004)} & .027\tiny{(.005)} & .019\tiny{(.004)} & .020\tiny{(.005)} & .015\tiny{(.004)} & .011\tiny{(.003)} & .012\tiny{(.004)} \\
LORD++  & .011\tiny{(.004)} & .008\tiny{(.002)} & .005\tiny{(.001)} & .007\tiny{(.002)} & .005\tiny{(.002)} & .002\tiny{(.001)} & .004\tiny{(.002)} & .004\tiny{(.002)} \\
SAFFRON & .090\tiny{(.007)} & .138\tiny{(.006)} & .137\tiny{(.004)} & .148\tiny{(.004)} & .147\tiny{(.003)} & .142\tiny{(.003)} & .126\tiny{(.004)} & .084\tiny{(.003)} \\
LF      & .132\tiny{(.011)} & .151\tiny{(.008)} & .152\tiny{(.006)} & .168\tiny{(.004)} & .169\tiny{(.004)} & .172\tiny{(.003)} & .175\tiny{(.004)} & .147\tiny{(.003)} \\
SF      & .117\tiny{(.009)} & .153\tiny{(.007)} & .151\tiny{(.005)} & .165\tiny{(.004)} & .162\tiny{(.003)} & .154\tiny{(.002)} & .137\tiny{(.004)} & .094\tiny{(.003)} \\
LFS     & .111\tiny{(.012)} & .083\tiny{(.009)} & .052\tiny{(.007)} & .044\tiny{(.005)} & .041\tiny{(.005)} & .033\tiny{(.004)} & .036\tiny{(.006)} & .031\tiny{(.004)} \\
SFS     & .088\tiny{(.010)} & .071\tiny{(.008)} & .038\tiny{(.005)} & .031\tiny{(.004)} & .033\tiny{(.004)} & .024\tiny{(.003)} & .032\tiny{(.005)} & .024\tiny{(.004)} \\
\midrule
\multicolumn{9}{l}{\textit{Power}} \\
LOND    & .014\tiny{(.001)} & .018\tiny{(.001)} & .017\tiny{(.001)} & .013\tiny{(.001)} & .012\tiny{(.001)} & .007\tiny{(.001)} & .003\tiny{(.000)} & .001\tiny{(.000)} \\
LORD++  & .007\tiny{(.002)} & .020\tiny{(.003)} & .035\tiny{(.005)} & .028\tiny{(.004)} & .023\tiny{(.004)} & .016\tiny{(.003)} & .006\tiny{(.002)} & .000\tiny{(.000)} \\
SAFFRON & .077\tiny{(.006)} & .340\tiny{(.013)} & .533\tiny{(.015)} & .660\tiny{(.015)} & .735\tiny{(.015)} & .779\tiny{(.015)} & .794\tiny{(.016)} & .729\tiny{(.018)} \\
LF      & .049\tiny{(.005)} & .294\tiny{(.012)} & .510\tiny{(.015)} & .719\tiny{(.013)} & .792\tiny{(.014)} & .861\tiny{(.012)} & .881\tiny{(.013)} & .860\tiny{(.014)} \\
SF      & .085\tiny{(.007)} & .372\tiny{(.013)} & .560\tiny{(.014)} & .717\tiny{(.013)} & .784\tiny{(.013)} & .847\tiny{(.012)} & .856\tiny{(.013)} & .820\tiny{(.015)} \\
LFS     & .011\tiny{(.000)} & .011\tiny{(.000)} & .011\tiny{(.000)} & .011\tiny{(.000)} & .011\tiny{(.000)} & .010\tiny{(.000)} & .011\tiny{(.000)} & .009\tiny{(.000)} \\
SFS     & .013\tiny{(.001)} & .014\tiny{(.000)} & .013\tiny{(.000)} & .012\tiny{(.000)} & .012\tiny{(.000)} & .011\tiny{(.000)} & .011\tiny{(.000)} & .009\tiny{(.000)} \\
\bottomrule
\end{tabular}%
}
\end{table}

\begin{table}[!htbp]
\centering
\caption{Results for Scenario {IV} (sine pattern shifts): Performance at the stopping time ($T=1000$). The target level $\alpha=0.05$.}
\label{tab:stopping_time_results}
{\small
\begin{tabular}{llcc}
\toprule
Framework & Method & FDR & Power \\
\midrule
\multirow{5}{*}{Adaptive GAIF}
& SAFFRON & .038\tiny{(.004)} & .280\tiny{(.021)} \\
& Ran-SF  & .037\tiny{(.003)} & .289\tiny{(.021)} \\
& Opt-SF  & \textbf{.046}\tiny{(.007)} & \textbf{.421}\tiny{(.023)} \\
& Ran-SFS & .005\tiny{(.003)} & .007\tiny{(.000)} \\
& Opt-SFS & .019\tiny{(.008)} & .008\tiny{(.001)} \\
\midrule
\multirow{5}{*}{GAIF}
& LORD++  & .007\tiny{(.005)} & .002\tiny{(.001)} \\
& Ran-LF  & .048\tiny{(.007)} & .297\tiny{(.021)} \\
& Opt-LF  & \textbf{.047}\tiny{(.009)} & \textbf{.395}\tiny{(.024)} \\
& Ran-LFS & .015\tiny{(.008)} & .006\tiny{(.000)} \\
& Opt-LFS & .021\tiny{(.009)} & .006\tiny{(.000)} \\
\bottomrule
\end{tabular}
}
\end{table}

% ===== Appendix Table =====
\begin{table}[!ht]
\centering
\setlength{\heavyrulewidth}{0.5pt}
\setlength{\lightrulewidth}{0.3pt}
\caption{\small $\operatorname{FDR}(T)$ and $\operatorname{Power}(T)$ for the safe variants (\text{Opt-SFS} and \text{Opt-LFS}) across four datasets (Candidate, Diabetes, Income, Airfoil). The target FDR level is $\alpha = 0.3$.}
\label{tab:appendix_safe_fdr_power}
\resizebox{\textwidth}{!}{
{\tiny
\begin{tabular}{lcccccccc}
\toprule
\textbf{Method} 
& \multicolumn{2}{c}{\textbf{Task 1}} 
& \multicolumn{2}{c}{\textbf{Task 2}} 
& \multicolumn{2}{c}{\textbf{Task 3}} 
& \multicolumn{2}{c}{\textbf{Task 4}} \\
\cmidrule(lr){2-3} \cmidrule(lr){4-5} \cmidrule(lr){6-7} \cmidrule(lr){8-9}
& FDR & Power & FDR & Power & FDR & Power & FDR & Power \\
\midrule
Opt-SFS & .071\tiny{(.014)} & .002\tiny{(.000)} & .090\tiny{(.017)} & .002\tiny{(.000)} & .077\tiny{(.014)} & .008\tiny{(.000)} & .002\tiny{(.001)} & .007\tiny{(.003)} \\
Opt-LFS & .070\tiny{(.014)} & .002\tiny{(.000)} & .088\tiny{(.016)} & .002\tiny{(.000)} & .105\tiny{(.017)} & .007\tiny{(.000)} & .001\tiny{(.001)} & .005\tiny{(.001)} \\
\bottomrule
\end{tabular}
}
}
\end{table}

{\color{black}\subsection{Results on synthetic data for a regression task}\label{add:experement-reg}}
The corresponding data generation process for {\color{black} the regression task} is detailed below: 
\begin{itemize}
    \item \textbf{Scenario {V} (Regression example)}: $Y=-0.5X_1^2+\exp{X_2}+(X_3+X_4)^2+\varepsilon,$ with $\X\sim \mathcal{N}_4({\bf 0},\mathbf{I}_4)$ and $\varepsilon\sim \mathcal{N}(0,2)$. The target region is $\mathcal{A}=[c,\infty)$, where $c$ is the $1-\pi_1$ quantile of $Y$. 
    % \item Scenario C (Distribution shifts in Sine pattern):  The data generation process is $\mathbf{X}\mid Y=0 \sim \mathcal{N}_4\left(\boldsymbol{\mu}_1, \mathbf{I}_4\right)$, and $\mathbf{X}\mid Y=1 \sim \mathcal{N}_4\left(\boldsymbol{\mu}_2, \mathbf{I}_4\right)$, where $\boldsymbol{\mu}_1=(2,0,0,0)^{\top}, \boldsymbol{\mu}_2=(0,0,-2,-2)^{\top}$.  The target region is $\mathcal{A}=\{1\}$. The non-null proportion varies in a sine pattern according to $\pi_t=\{\sin(8\pi t/T)+1\}/4$, $\pi_t$ ranges between 0 and 0.5. 
\end{itemize}

In terms of the non-conformity score function, denote $W_t=\widehat{\mu}_t(\X_t)$, in classification settings, we set $V(W_t)=1-W_t$.  In regression settings, if $\mathcal{A}=[b,+\infty)$, we can use $V(W_j)=b-W_j$. If $\mathcal{A}=(-\infty,a]\cup[b,+\infty)$, then we can choose $V(W_t)=\max\{W_t-a,b-W_t\}$. 

%\paragraph{Results.}
 The results for the regression example (Scenario {V}) using a fixed training algorithm (random forest) are shown in Figure \ref{fig:reg_stop_B_vary_prop}. The performance trends are similar to those observed in the classification case (Scenario {IV}).

  \begin{figure}[H]
    \centering
    \includegraphics[width=0.8\textwidth]{Fig-GAIF/reg_plots_vary_prop.pdf}
    \caption{\small Results for Scenario {V}: values of $\FDR(T)$ and $\operatorname{Power}(T)$ at stopping time $T$ across different non-null proportions $\pi_1$. The black dashed line denotes the FDR level $\alpha=0.2$.}
    \label{fig:reg_stop_B_vary_prop}
  \end{figure}

{\color{black}\subsection{Additional experiments for different training algorithms}\label{add:experement-training-alg}} 

  The results for Scenarios {IV} and {\color{black}V} under different training algorithms—RF, SVM, and NN—with varying initial calibration sizes are presented in Figure \ref{fig:stop_vary_ncal_scenarioA}-\ref{fig:stop_vary_ncal_scenarioB}. Thanks to the online updating of the calibration dataset, even a small initial calibration size does not significantly impact performance. While the choice of predictive model $\widehat{\mu}$ does affect power, all methods maintain valid FDR control. Notably, our SF and LF methods consistently outperform the baselines across all score functions, benefiting from the distribution-free and model-agnostic nature of online conformal \(p\)-values. The variation in performance across different algorithms further underscores the importance of careful score selection in practice.

  \begin{figure}[htbp!]
    \centering
    \includegraphics[width=0.8\textwidth]{Fig-GAIF/OCTF_ScenarioA_n.pdf}
    \caption{\small Results for Scenario {IV}: $\FDR(T)$ and $\operatorname{Power}(T)$ vs. initial calibration size $n$ ($\pi_1 = 0.5$, $\alpha = 0.2$).}
    \label{fig:stop_vary_ncal_scenarioA}
  \end{figure}

  \begin{figure}[htbp!]
    \centering
    \includegraphics[width=0.8\textwidth]{Fig-GAIF/OCTF_ScenarioB_n.pdf}
    \caption{\small Results for Scenario {V}: $\FDR(T)$ and $\operatorname{Power}(T)$ vs. initial calibration size $n$ ($\pi_1 = 0.5$, $\alpha = 0.2$).}
    \label{fig:stop_vary_ncal_scenarioB}
  \end{figure}

  % \item[3.] Results for Scenario C with a sine-patterned signal are shown below. The Ada-OCTF method effectively avoids spikes in weighted FDR (wFDR) by incorporating a forgetting factor.

  % \begin{figure}[H]
  %   \centering
  %   \includegraphics[width=0.7\textwidth]{Fig-LORDfeedback/plot_shift_sine.pdf}
  %   \caption{\small Scenario C (sine pattern): $\operatorname{wFDR}(\boldsymbol{\delta}^t)$ and $\operatorname{wPower}(\boldsymbol{\delta}^t)$ over time $t$ for Ada-OCTF and OCTF. The black dashed line indicates the nominal level $\alpha=0.1$.}
  %   \label{fig:scenarioC_sinepattern}
  % \end{figure}

{\color{black}\subsection{Additional experiments on score selection}\label{add:experement-score-selection}} 

 The results with score selection for Scenarios {IV} and {V} are shown below Figure \ref{fig:cla_stop_A_vary_prop_opt}-Figure \ref{fig:reg_stop_B_vary_prop_opt}. The target level is $\alpha=0.05$. In both settings, the performance gap between the Opt methods and their randomly selected counterparts is also pronounced.

   \begin{figure}[htbp!]
		\centering
		\includegraphics[width=0.7\textwidth]{Fig-GAIF/plot_cla_Opt_vary_prop1.pdf}
		\caption{\small Results for Scenario {IV}: the values of $\FDR(T)$  and $\operatorname{Power}(T)$ at stopping time $T$ across different non-null proportions $\pi_1$. The black dashed lines denote the FDR level $\alpha=0.05$.}
  \label{fig:cla_stop_A_vary_prop_opt}
	\end{figure}

  \begin{figure}[htbp!]
    \centering
    \includegraphics[width=0.7\textwidth]{Fig-GAIF/plot_reg_Opt_vary_prop.pdf}
    \caption{\small  Results for Scenario {V}: the values of $\FDR(T)$  and $\operatorname{Power}(T)$ at stopping time $T$ across different non-null proportions $\pi_1$. The black dashed lines denote the FDR level $\alpha=0.05$.}
    \label{fig:reg_stop_B_vary_prop_opt}
  \end{figure}

% \begin{figure}[h!]
% 		\centering
% 		\includegraphics[width=0.7\textwidth]{Fig-LORDfeedback/reg_plots_vary_alpha_airfoil.pdf}
% 		\caption{\small Results for airfoil data: the values of $\FDR(T)$  and $\operatorname{Power}(T)$ for all benchmarks. The training algorithm is random forest. The black dashed lines denote the FDR level $\alpha\in\{0.1,\dots,0.5\}$. The non-null proportion is $\pi_1=0.5$.}
%   \label{fig:airfoil_alpha}
% 	\end{figure}

{\color{black}\subsection{Additional experiments results on empirical mFDR}\label{add:experement-mFDR}
 }

   To illustrate the similarity between mFDR and FDR, we present results under various settings below in Figure \ref{fig:GAIF_dep_mFDR}-Figure \ref{fig:cla-reg-mFDR}. We estimate mFDR by computing the ratio of the average number of false discoveries and the average total number of discoveries. Empirical mFDR closely tracks empirical FDR, and both are well controlled by our proposed methods.

  \begin{figure}[htbp!]
		\centering
		\includegraphics[width=0.8\textwidth]{Fig-GAIF/GAIF_dep_mFDR.pdf}
		\caption{\small Results for Scenario III (local dependence): Line charts of mFDR and FDR at stopping time with varying non-null proportion $\pi_1$ from $0.1$ to $0.8$.  The black dashed lines denote the target FDR level $\alpha=0.1$.}
  \label{fig:GAIF_dep_mFDR}
	\end{figure}

 \begin{figure}[htbp!]
		\centering
		\includegraphics[width=0.8\textwidth]{Fig-GAIF/Scenario_IV_V_mFDR_FDR.pdf}
		\caption{\small Results for Scenario {\color{black}IV} and Scenario {\color{black}V} :  Line charts of mFDR and FDR at stopping time with varying non-null proportion $\pi_1$ from $0.1$ to $0.8$ after $500$ replications; The black dashed lines denote the target FDR level $\alpha=0.2$.}
  \label{fig:cla-reg-mFDR}
	\end{figure}

{\color{black}
%\subsection{Additional Experiments for Online-updated Models}

% To illustrated the performance of Opt-OCTF with online learning for the predictive models. we conduct additional experiments.

\subsection{Additional experiments on auxiliary calibration for score selection}\label{appen_subsec:simu_auxi_modelsel}

In our EWMA-based model-selection procedure, we construct auxiliary non-null p-values in \eqref{eq:aux_nonnull_p_value} based on $\gC_{0t}\cup\{t\}$, where the current index $t$ is included. The main purpose of this design is to preserve \emph{finite-sample validity under data-adaptive score selection}. In particular, including $t$ ensures the permutation symmetry of $\gC_{0t}\cup\{t\}$ required by our validity argument.

From a power perspective, one may worry that, when $t$ is non-null and $t$ is small, including the current point could slightly perturb the auxiliary calibration distribution. In practice, however, this effect is limited in our setting. Our procedure starts from a historical calibration dataset $\gC$ of size $n$, so even at early times the inclusion of a single point has only a minor impact, since the effective null calibration size is already $n+t$.

To further reduce the possible adverse effect when \(t\) is non-null, we also consider a simple robustified version based on a truncated calibration set:
\begin{equation}
\tilde{p}_{k,j}
=
\frac{\sum_{s \in \gC_{t}^{\mathrm{trun},k}} \mathbb{I}\{V(\mathbf{X}_s; k) \leq V(\mathbf{X}_j; k)\}}
{|\gC_{t}^{\mathrm{trun},k}|},
\quad j \in \gC_{1t},
\label{eq:tru-cp}
\end{equation}
where
\[
\gC_{t}^{\mathrm{trun},k}
=
\big(\gC_{0t}\cup\{t\}\big)
\setminus
\left\{
\arg\min_{i \in \gC_{0t}\cup\{t\}} V(\mathbf{X}_i;k)
\right\}.
\]
This operation remains permutation-invariant on \( \gC_{0t}\cup\{t\} \), so the symmetry needed for validity is preserved. Intuitively, when \(t\) is non-null, its score \(V(\mathbf{X}_t;k)\) is more likely to be unusually small and hence removed by the truncation, reducing its impact.

To investigate the practical effect of including $t$, we compare the following four variants:
\begin{itemize}
    \item \textbf{Opt-SFS:} Our proposed feedback-enhanced approach using EWMA of past non-null auxiliary $p$-values based on calibration set $\gC_{0t}\cup\{t\}$ %including the current index $t$ in the calibration data.
    \item \textbf{OptEx-SFS:} Similar to Opt-SFS but excluding $t$ from the calibration data, i.e. constructing  auxiliary $p$-values based on $\gC_{0t}$.%but excluding $t$ from the calibration data, used to isolate the effect of including the current observation.
    \item \textbf{OptTr-SFS:} Optimized selection using the truncated calibration set $\gC_{t}^{\mathrm{trun},k}$ defined in Eq.~(\ref{eq:tru-cp}).
    \item \textbf{Ran-SFS:} A baseline that randomly selects a model at each time step to quantify the gain from feedback-driven selection.
\end{itemize}

 We generated data streams of length $T=1000$ characterized by smoothly varying distribution shifts according to Scenario IV in Section \ref{subsec:simu-opt-OCTF}. Conformal $p$-values were computed online using the above proposed EWMA-based selection strategies, and rejections were determined via standard SFS procedures in the main text. For each method, we tracked the empirical online FDR and empirical power across all time points across 200 replications.

\begin{figure}[h!]
    \centering
    \includegraphics[width=0.8\textwidth]{Fig-GAIF/plot_cla_Opt_SAFFRON-response.pdf}
    \caption{\small Online FDR and power for various Opt-SFS variants versus the Ran-SFS baseline over time.}
    \label{fig:cla-Opt-SFS}
\end{figure}

As shown in Figure~\ref{fig:cla-Opt-SFS}, all three optimized methods (Opt-SFS, OptEx-SFS, and OptTr-SFS) maintain empirical FDR control below the nominal level ($\alpha=0.1$) while consistently achieving higher power than the random baseline. 

A comparison between Opt-SFS and OptEx-SFS reveals that at early time points ($t$ small), differences in power among these strategies are relatively minor. As more observations accumulate, the differences among the three Opt methods diminish. %Furthermore, the performance of OptTr-SFS suggests that truncating the most extreme score does not degrade performance, confirming the procedure's robustness. 
These results empirically verify that the inclusion of $t$ has negligible influence on the power performance.
%These results empirically demonstrate that feedback-enhanced model selection significantly improves online detection under distribution shift while upholding rigorous FDR control.
}

{\color{black}\subsection{Comparison with alternative finite-sample-valid baselines}
\label{app:safe-baselines}

%To substantiate the claim in Section~\ref{subsec:simu-OCTF} that LFS and SFS are not merely one arbitrary way of achieving finite-sample mFDR control in the online conformal setting, 

In this section, we compare LFS and SFS against two natural alternatives that are also
finite-sample valid in online conformal testing.

\paragraph{
Fixed-calibration conformal $p$-values and LOND.} %Classic (offline) conformal $p$-values and LOND.}
The first alternative keeps the calibration set fixed throughout the entire test stream, exactly as in standard split-conformal
inference: for each test point $t$ with conformity score $V_t$,
\begin{equation}
    p_t = \frac{ \sum_{i \in \gC_{0}} \I\{V_{i} < V_{t}\}+\xi_t[1+\sum_{i\in\gC_0}\I\{V_i=V_t\}]}{1 + |\gC_{0}|},
    \label{eq:classic-conformal-p}
\end{equation}  
% \begin{equation}
% \widehat u(t) \;=\; \frac{1+\sum_{i\in\mathcal D_{\mathrm{cal}}}\mathbf 1\{\widehat S_i \le \widehat S_t\}}{1+n_{\mathrm{cal}}},
% \end{equation}
where $\gC_0$ is a hold-out calibration dataset. Under the standard exchangeability assumption,  the null conformal \(p\)-values
constructed from a shared fixed calibration set \(\gC_0\) are PRDS, %so that, conditional on $\mathcal C_{0}$, $p_t$ for distinct null test points are only weakly dependent (PRDS),
which is sufficient for the LOND rule
$\alpha_t^{\mathrm{LOND}} := \alpha\gamma_t\bigl(R^{\mathrm{LOND}}{(t-1)}+1\bigr)$
to control the FDR \citep{Zrnic2021asynchronous}. 

\paragraph{Online conformal $e$-values and e-LOND.}
The second alternative retains our online calibration update: each confirmed null is added to the calibration pool as soon as it is observed. It then uses $e$-values in place of $p$-values and applies e-LOND rule of \citep{xu2024online}. Because e-LOND controls the FDR under arbitrary, possibly unknown dependence among $e$-values, it accommodates the dependence induced by an evolving shared calibration set. 
%The second alternative retains the same online calibration updating used by our online conformal $p$-values (a confirmed null is folded into the calibration pool as soon as it is observed), but replaces $p$-values with $e$-values, and pairs them with the e-LOND rule of \citet{xu2024online}, which controls the FDR under \emph{arbitrary} and possibly unknown dependence among the $e$-values — precisely the situation created by an evolving, shared calibration set. 
We use
the rank-based conformal $e$-value
\begin{equation}
e_t \;=\; \frac{n_t+1}{k_t}\cdot \mathbf 1\bigl\{\mathrm{rank}_t \le k_t\bigr\},
\qquad
k_t = \max\bigl\{1,\ \lfloor \rho\,(n_t+1)\rfloor\bigr\},
\label{eq:online-conformal-e}
\end{equation}
where $n_t$ is the calibration set size immediately before test point $t$ is
processed, $\mathrm{rank}_t$ is the (randomized, tie-broken) rank of the test
score among the $n_t$ calibration scores and itself, and $\rho\in(0,1)$ is a
fixed proportion (we use $\rho=0.05$). This is a generalization of the
single-test conformal $e$-value of \citet{vovk2021values} and the
construction underlying \citet{bashari2023derandomized}: since $k_t$ depends
only on the calibration history $n_t$ and not on the test score itself, it is
predictable, and exchangeability of the test score with the calibration pool
under the null gives $\mathbb E[e_t]=1$. Discoveries are then made using
\begin{equation}
\alpha_t^{\mathrm{e\text{-}LOND}} := \alpha\gamma_t\bigl(R^{\mathrm{e\text{-}LOND}}{(t-1)}+1\bigr),
\qquad
\text{reject } \mathbb{H}_{0t} \iff e_t \ge 1/\alpha_t^{\mathrm{e\text{-}LOND}}.
\label{eq:e-lond-rule}
\end{equation}

\paragraph{Experimental setup.}
We employ the binary classification setting of Scenario~IV with $\boldsymbol{\mu}_1=(5,0,0,0)^{\top}$ and $\boldsymbol{\mu}_2=(0,0,-4,-4)^{\top}$, sweeping the non-null proportion $\pi_1 \in \{0.1,0.2,\dots,0.8\}$. FDR and power are averaged over $500$ independent replications at nominal level $\alpha=0.1$, with an initial null calibration set of size $n_{\text{cal}}=50$. The training algorithm is support vector machine.
We compare four procedures:
LOND applied to classic conformal $p$-values~\eqref{eq:classic-conformal-p},
e-LOND applied to online conformal $e$-values~\eqref{eq:online-conformal-e}
via~\eqref{eq:e-lond-rule}, and our proposed LFS and SFS applied to online
conformal $p$-values.

\paragraph{Results.}
Figure~\ref{fig:safe-baselines} reports empirical FDR and power across different $\pi_1$
for all four methods. LOND with a fixed calibration set and e-LOND both control
the empirical FDR at the nominal level across all values of $\pi_1$, confirming
their finite-sample validity. While all four methods maintain valid empirical error
control, LFS and SFS consistently achieve power gains relative to LOND. In contrast, e-LOND remains highly conservative. This indicates that incorporating feedback through the proposed
safe strategy in OCTF can improve detection performance while preserving finite-sample
guarantees.

\begin{figure}[t]
\centering
\includegraphics[width=0.8\textwidth]{Fig-GAIF/cla_plots_vary_prop_safe_baselines.pdf}
\caption{FDR (left) and power (right) across different non-null
proportion $\pi_1$, for LOND with classic (fixed) conformal $p$-values,
e-LOND with online conformal $e$-values, and our proposed LFS/SFS with
online conformal $p$-values. The dashed line marks the nominal level
$\alpha=0.1$.}
\label{fig:safe-baselines}
\end{figure}
}

\section{Technical Details} \label{appen:proofs} 

{\color{black}

% \begin{definition}[Coordinatewise monotonicity]
% We say that the test level sequence \(\{\alpha_t\}\) is \emph{coordinatewise monotone} in its arguments if for every \(t\geq 1\) and for any two sequences 
% \[
% (\delta_1,\dots,\delta_{t-1}) \quad \text{and} \quad (\tilde{\delta}_1,\dots,\tilde{\delta}_{t-1})
% \]
% as well as 
% \[
% (\theta_1,\dots,\theta_{t-1}) \quad \text{and} \quad (\tilde{\theta}_1,\dots,\tilde{\theta}_{t-1}),
% \]
% satisfying
% \[
% \tilde{\delta}_i \geq \delta_i \quad \text{and} \quad \tilde{\theta}_i \geq \theta_i \quad \text{for all } i = 1,\dots,t-1,
% \]
% we have
% \[
% \alpha_t(\tilde{\delta}_1,\dots,\tilde{\delta}_{t-1};\,\tilde{\theta}_1,\dots,\tilde{\theta}_{t-1}) \geq \alpha_t(\delta_1,\dots,\delta_{t-1};\,\theta_1,\dots,\theta_{t-1}).
% \]
% In other words, \(\alpha_t\) is non-decreasing in each coordinate of its arguments.
% \end{definition}
\subsection{Auxiliary lemmas and proof of lemmas}

Our proof of Theorem~\ref{the:FDR_GAIF_ind} and Theorem~\ref{the:FDR_Ada_GAIF_ind} relies on the following lemmas, which are modified versions of Lemma~1 in \cite{ramdas2017online} and Lemma~1 in \cite{ramdas2018saffron}, respectively. The modifications arise because, in our setting, the feedback information {\color{black}$\{\theta_j\}_{j\in\mathcal{I}_t}$} is available at time~$t$, and the test levels $\alpha_t$ in GAIF and Adaptive GAIF procedures depend on both past rejections $\delta_j$ and feedback {\color{black}$\{\theta_j\}_{j\in\mathcal{I}_t}$}.

%, and $\theta_{1:t}=(\theta_1,\dots,\theta_t)$
For simplicity, denote $\delta_{1:t}=(\delta_1,\dots,\delta_t)$. Given a sequence $p_1,p_2,\dots$ of independent $p$-values (i.e., the null $p$-values are independent of all other $p$-values), we define a filtration via the sigma-field of decisions and feedback $\mathcal{F}_{t-1}=\sigma( \delta_{1:t-1};\{\theta_j\}_{j\in\mathcal{I}_t})$, where $\delta_t=\mathbb{I}\{p_t\leq f_t(\delta_{1:t-1};\{\theta_j\}_{j\in\mathcal{I}_t})\}$ for some coordinate-wise non-decreasing function $f_t:\{0,1\}^{t-1+|\mathcal{I}_t|}\to \mathbb{R}$. With this set-up, we have the following guarantees in Lemma \ref{lem:super-uniformity}:

%The proof of Lemma~\ref{lem:super-uniformity} is analogous to that of Lemma~1 in \cite{ramdas2017online} and is omitted; likewise, Lemma~\ref{lem:super-uniformity-saffron} follows Lemma~1 in \cite{ramdas2018saffron} and is also omitted.

\begin{lemma}\label{lem:super-uniformity}
   Let $g:\{0,1\}^{T}\to\mathbb{R}$ be any coordinate-wise non-decreasing function such that $g(\boldsymbol{x})>0$ for any vector $\boldsymbol{x}\neq (0,\dots,0)$.  Then for any index $t\leq T$ such that $H_t\in\mathcal{H}_0$, we have
    \[\E\left[\frac{\mathbb{I}\{p_t\leq f_t(\delta_{1:t-1};\{\theta_j\}_{j\in\mathcal{I}_t})\}}{g(\delta_{1:T})}\mid \mathcal{F}_{t-1}\right]\leq \E\left[\frac{f_t(\delta_{1:t-1};\{\theta_j\}_{j\in\mathcal{I}_t})}{g(\delta_{1:T})}\mid \gF_{t-1}\right].\]
\end{lemma}

\begin{proof}
    {\color{black}
Denote ${p}_{1:T}=(p_1,\dots,p_T)$, and $\tilde{{p}}_{1:T}^{-t}=(\tilde{p}_1,\dots,\tilde{p}_T)$, where 
\begin{equation}
    \tilde{p}_i=\begin{cases}
        0\quad \text{if}\; i=t \\
        p_i\quad \text{if}\; i\neq t,
    \end{cases} \nonumber 
\end{equation}
i.e., the $t$-th component of $\tilde{{p}}_{1:T}^{-t}$ is zero and the other components of $\tilde{{p}}_{1:T}^{-t}$ equals to that of ${p}_{1:T}$. For all $i$, define $\tilde{\delta}_i=\mathbb{I}\{\tilde{{p}}_i\leq f_i(\tilde{\delta}_{1:i-1},\{\theta_j\}_{j\in\mathcal{I}_i})\}$. And let the decision vectors using ${p}_{1:T}$ and $\tilde{{p}}_{1:T}^{-t}$ be ${\delta}_{1:T}=(\delta_1,\dots,\delta_T)$ and $\tilde{{\delta}}_{1:T}^{-t}=(\tilde{\delta}_1,\dots,\tilde{\delta}_T)$. By construction, we have
\[\begin{cases}
    \tilde{\delta}_i=\delta_i \quad \text{for all}\; i<t\\
    \tilde{\delta}_i\geq\delta_i \quad \text{for all}\; i\geq t,\\
\end{cases}\]
from which we obtain $f_i(\delta_{1:i-1};\{\theta_j\}_{j\in\mathcal{I}_i})=f_i(\tilde{\delta}_{1:i-1};\{\theta_j\}_{j\in\mathcal{I}_i})$ for all $i\leq t$. Also, by noting that $\tilde{\delta}_t=1$ by construction and by definition of $g(\cdot)$, we have $g(\tilde{{\delta}}_{1:T}^{-t})>0$. Hence, on the event $\{p_t\leq f_t(\delta_{1:t-1};\{\theta_j\}_{j\in\mathcal{I}_t})\}$, we have $\delta_t=\tilde{\delta}_t=1$ and hence also ${\delta}_{1:T}=\tilde{{\delta}}_{1:T}^{-t}$, which allows us to conclude that
\[\frac{\mathbb{I}\{p_t\leq f_t(\delta_{1:t-1};\{\theta_j\}_{j\in\mathcal{I}_t})\}}{g({\delta}_{1:T})}=\frac{\mathbb{I}\{p_t\leq f_t(\delta_{1:t-1};\{\theta_j\}_{j\in\mathcal{I}_t})\}}{g(\tilde{{\delta}}_{1:T}^{-t})}.\]

Since $\tilde{{\delta}}_{1:T}^{-t}$ is independent of $p_t$, we take conditional expectations to obtain
\begin{align}
    \E\left[\frac{\mathbb{I}\{p_t\leq f_t(\delta_{1:t-1};\{\theta_j\}_{j\in\mathcal{I}_t})\}}{g({\delta}_{1:T})}\mid \gF_{t-1}\right]&=\E\left[\frac{\mathbb{I}\{p_t\leq f_t(\delta_{1:t-1};\{\theta_j\}_{j\in\mathcal{I}_t})\}}{g(\tilde{{\delta}}_{1:T}^{-t})}\mid \gF_{t-1}\right] \nonumber \\
    ~&\Eqmark{i}\leq \E\left[\frac{ f_t(\delta_{1:t-1};\{\theta_j\}_{j\in\mathcal{I}_t})}{g(\tilde{{\delta}}_{1:T}^{-t})}\mid \gF_{t-1}\right] \nonumber \\
    ~&\Eqmark{ii}\leq \E\left[\frac{ f_t(\delta_{1:t-1};\{\theta_j\}_{j\in\mathcal{I}_t})}{g({{\delta}}_{1:T})}\mid \gF_{t-1}\right], \nonumber
\end{align}
where inequality (i) follows by taking expectation only with respect to $p_t$ by the conditional super-uniformity property (\ref{csuag}); and inequality (ii) follows because $g({\delta}_{1:T})\leq g(\tilde{{\delta}}_{1:T}^{-t})$ since $\delta_i\leq \tilde{\delta}_i$ for all $i$ by monotonicity of the online FDR rule. This concludes the proof of the Lemma \ref{lem:super-uniformity}.

}
\end{proof}

Denote $C_{1:t}:=(C_1,\dots,C_t)$, where $C_t=\mathbb{I}\{p_t\leq \lambda\}$, with a fixed constant $\lambda\in(0,1)$. Furthermore, assume $\alpha_t=h_t(\delta_{1:t-1},C_{1:t-1}, \{\theta_{j}\}_{j\in\mathcal{I}_t})$, for some coordinate-wise non-decreasing function $h_t:\{0,1\}^{2(t-1)+|\mathcal{I}_t|}\to[0,\lambda]$. Define the sigma-fields $\mathcal{J}_{t-1}=\sigma(\delta_{1:t-1};C_{1:t-1};\{\theta_j\}_{j\in\mathcal{I}_t})$.
For independent $p$-values, we have the following guarantees.

\begin{lemma}\label{lem:super-uniformity-saffron}
    Let $g:\{0,1\}^T\to\mathbb{R}$ be any coordinate-wise non-decreasing function. 
    Then, for any index $t\leq T$ such that $H_t\in\mathcal{H}_0$, we have
    \begin{align}
       &\E\left[\frac{h_t(\delta_{1:t-1},C_{1:t-1},\{\theta_j\}_{j\in\mathcal{I}_t})\mathbb{I}\{p_t>\lambda\}}{(1-\lambda)g(\delta_{1:T})}\mid \mathcal{J}_{t-1}\right] \nonumber \\
       ~&\geq\E\left[\frac{h_t(\delta_{1:t-1},C_{1:t-1},\{\theta_j\}_{j\in\mathcal{I}_t})}{g(\delta_{1:T})}\mid \mathcal{J}_{t-1}\right] \nonumber \\
        ~&\geq\E\left[\frac{\mathbb{I}\{p_t\leq h_t(\delta_{1:t-1},C_{1:t-1},\{\theta_j\}_{j\in\mathcal{I}_t})\}}{g(\delta_{1:T})}\mid \mathcal{J}_{t-1}\right] \nonumber
    \end{align}
\end{lemma}

\begin{proof}
    {\color{black}
    The second inequality is a consequence of Lemma \ref{lem:super-uniformity}, so we only prove the first inequality.
Denote ${p}_{1:T}=(p_1,\dots,p_T)$, and $\bar{{p}}^{t\to 1}_{1:T}=(\bar{p}_1,\dots,\bar{p}_T)$, where 
\begin{equation}
    \bar{p}_i=\begin{cases}
        1\quad \text{if}\; i=t \\
        p_i\quad \text{if}\; i\neq t,
    \end{cases} \nonumber 
\end{equation}
i.e., the $t$-th component of $\bar{{p}}^{t\to 1}_{1:T}$ is set to one and the other components of $\bar{{p}}^{t\to 1}_{1:T}$ equals to that of ${p}_{1:T}$. Define $\bar{C}_i=\mathbb{I}\{\bar{p}_i\leq \lambda\}$ and $\bar{\delta}_i=\mathbb{I}\{\bar{p}_i\leq h_i(\bar{\delta}_{1:i-1},\bar{C}_{1:i-1},\{\theta_j\}_{j\in\mathcal{I}_i})\}$ respectively. Let $\delta_{1:T}=(\delta_1,\dots,\delta_T)$ and $\bar{\delta}^{t\to 1}_{1:T}=(\bar{\delta}_1,\dots,\bar{\delta}_T)$ denote the decision vectors using $p_{1:T}$ and $\bar{p}_{1:T}$, respectively. Similarly, $C_{1:T}=(C_1,\dots,C_T)$ and $\bar{C}_{1:T}=(\bar{C}_1,\dots,\bar{C}_T)$. By construction, we have
\begin{enumerate}
    \item $\bar{\delta}_i=\delta_i$ and $\bar{C}_i=C_i$ for all $i\leq t$, hence $h_i(\delta_{1:i-1},C_{1:i-1},\{\theta_j\}_{j\in\mathcal{I}_i})=h_i(\bar{\delta}_{1:i-1},\bar{C}_{1:i-1},\{\theta_j\}_{j\in\mathcal{I}_i})$ for all $i\leq t$.
    \item $\bar{\delta}_t=\bar{C}_t=0$, and hence $\bar{\delta}_i\leq \delta_i$ for all $i\geq t$, due to monotonicity of the function $h_i$.
\end{enumerate}
Hence, on the event $\{p_t>\lambda\}$, we have $\delta_t=\bar{\delta}_t=0$ and $C_t=\bar{C}_t=0$, and hence also $\delta_{1:T}=\bar{\delta}_{1:T}^{t\to 1}$. Therefore, we obtain
\[\frac{h_t(\delta_{1:t-1},C_{1:t-1},\{\theta_j\}_{j\in\mathcal{I}_t})\mathbb{I}\{p_t>\lambda\}}{(1-\lambda)g(\delta_{1:T})}=\frac{h_t(\delta_{1:t-1},C_{1:t-1},\{\theta_j\}_{j\in\mathcal{I}_t})\mathbb{I}\{p_t>\lambda\}}{(1-\lambda)g(\bar{\delta}_{1:T})}. \]
Since $\bar{\delta}_{1:T}^{t\to 1}$ is independent of $p_t$, we  may take conditional expectations to obtain:
 \begin{align}
       &\E\left[\frac{h_t(\delta_{1:t-1},C_{1:t-1},\{\theta_j\}_{j\in\mathcal{I}_t})\mathbb{I}\{p_t>\lambda\}}{(1-\lambda)g(\delta_{1:T})}\mid \mathcal{J}_{t-1}\right] \nonumber \\
       &=\E\left[\frac{h_t(\delta_{1:t-1},C_{1:t-1},\{\theta_j\}_{j\in\mathcal{I}_t})\mathbb{I}\{p_t>\lambda\}}{(1-\lambda)g(\bar{\delta}^{t\to 1}_{1:T})}\mid \mathcal{J}_{t-1}\right] \nonumber \\
       ~&\Eqmark{i}\geq\E\left[\frac{h_t(\delta_{1:t-1},C_{1:t-1},\{\theta_j\}_{j\in\mathcal{I}_t})}{g(\bar{\delta}^{t\to 1}_{1:T})}\mid \mathcal{J}_{t-1}\right] \nonumber \\
        ~&\Eqmark{ii}\geq\E\left[\frac{h_t(\delta_{1:t-1},C_{1:t-1},\{\theta_j\}_{j\in\mathcal{I}_t})}{g(\delta_{1:T})}\mid \mathcal{J}_{t-1}\right], \nonumber 
    \end{align}
    where inequality (i) follows by taking an expectation only with respect to $p_t$ by invoking the conditional super-uniformity property (\ref{csuag-SF}); and inequality (ii) follows because $g(\delta_{1:T})\geq g(\bar{\delta}^{t\to 1}_{1:T})$ since $\delta_i\geq \bar{\delta}_i$ for all $i$ by monotonicity of the online FDR rule. This concludes the proof of Lemma \ref{lem:super-uniformity-saffron}.
}
\end{proof}

\begin{lemma}
\label{lem:decondition_event}
Let $\mathcal{F}$ be a sigma-field and let $\mathcal{E},\mathcal{A}\in\mathcal{F}$ be the events and $\mathcal{E}\perp\mathcal{A}$.
Let $Z$ be a random variable and $b\in\mathbb{R}$. 
Assume that
 $\Pr(\mathcal{E})\ge 1-\delta'$ and  on $\mathcal{E}$, $\Pr(Z\le b\mid \mathcal{F})\ge 1-\delta$. 
Then
\[
\Pr(Z\le b\mid \mathcal{A})\ge 1-\delta-\delta'.
\]
\end{lemma}

\begin{proof}
Since $\mathcal{E}\in\mathcal{F}$,
\begin{align*}
    &\Pr(Z\le b\mid\mathcal{A})\ \ge\ \Pr(Z\le b,\mathcal{E}\mid\mathcal{A})\\
=& \E\!\left[\mathbb{I}_{\mathcal{E}}\Pr(Z\le b\mid \mathcal{F})\mid \mathcal{A}\right]
\ge (1-\delta)\Pr(\mathcal{E}\mid\mathcal{A})
\ge 1-\delta-\delta'.
\end{align*}
The last inequality comes from the independence of $\mathcal{E}$ and $\mathcal{A}$ such that $\Pr(\mathcal{E}\mid\mathcal{A})=\Pr(\mathcal{E})\geq1-\delta'$.
\end{proof}

\subsection{Proof of Theorem \ref{the:FDR_GAIF_ind} }\label{proof:them1}

\begin{proof}

 {\it Proof of Theorem \ref{the:FDR_GAIF_ind}} (a). (mFDR control of GAIF procedures.)
For the GAIF procedures,
\begin{eqnarray}
    \E[V(t)]&=&\E\left[\sum_{j\leq t,j\in\mathcal{H}_0}\mathbb{I}\{p_j\leq \alpha_j\}\right]=\sum_{j\leq t}\E\left[(1-\theta_j)\mathbb{I}\{p_j\leq \alpha_j\}\right] \nonumber \\
    &\Eqmark{i}{=}& \sum_{j\leq t}\E\left[\E\left[(1-\theta_j)\mathbb{I}\{p_j\leq \alpha_j\}\mid \gF_{j-1}\right]\right]   \nonumber \\
    &\Eqmark{ii}{\leq}& \E\left[\sum_{j\leq t}\alpha_j(1-\theta_j)\right] \nonumber \\
    &\Eqmark{iii}{\leq}& \E\left[\sum_{j\in\mathcal{I}_t}\alpha_j(1-\theta_j)+\sum_{j\in\bar{\mathcal{I}}_t}\alpha_j\right] \nonumber \\
    &\Eqmark{iv}{\leq}& \alpha \E\left[1\vee \sum_{j\leq t}\delta_j\right], \nonumber 
\end{eqnarray}
where equality (i) is derived by conditioning on $\mathcal{F}_{j-1}$ and applying the law of iterated expectations. Inequality (ii) is a consequence of the conditional super-uniformity property stated in (\ref{csuag}). Subsequent inequalities (iii) follows from the fact that $1-\theta_j\leq1$ for all $j\in\bar{\mathcal{I}}_t$ and (iv) follows from the definition of GAIF procedures. Therefore, we obtain the conclusion that $\mFDR(t)\leq \alpha$ for all GAIF procedures.

{\it Proof of Theorem \ref{the:FDR_GAIF_ind}} (b). (FDR control of GAIF procedures  procedures.)
We first prove online FDR control for GAIF procedures. Under the independence and the monotonicity assumptions, we have
\begin{eqnarray}
    \operatorname{FDR}(t)&=&\E\left[\frac{\sum_{j\leq t,j\in \mathcal{H}_0}\mathbb{I}\{p_j\leq \alpha_j\}}{1\vee\sum_{j\leq t}\delta_j}\right] \nonumber \\
    ~&=& \sum_{j\leq t,j\in\mathcal{H}_0}\E\left[\frac{\mathbb{I}\{p_j\leq \alpha_j\}}{1\vee\sum_{j\leq t}\delta_j}\right] \nonumber \\
    ~&\Eqmark{i}=&\sum_{j\leq t,j\in\mathcal{H}_0}\E\left[\E\left[\frac{\mathbb{I}\{p_j\leq \alpha_j\}}{1\vee\sum_{j\leq t}\delta_j}\mid \gF_{j-1}\right]\right] \nonumber \\
    ~&\Eqmark{ii}\leq& \sum_{j\leq t,j\in\mathcal{H}_0}\E\left[\frac{\alpha_j}{1\vee\sum_{j\leq t}\delta_j}\right] \nonumber \\
     ~&=& \E\left[\frac{\sum_{j\leq t}(1-\theta_j)\alpha_j}{1\vee\sum_{j\leq t}\delta_j}\right] \nonumber \\
     ~&\Eqmark{iii}\leq& \E\left[\frac{\sum_{j\in\mathcal{I}_t}(1-\theta_j)\,\alpha_j
         + \sum_{j\in\bar{\mathcal{I}}_t}\alpha_j
   }{1 \vee \sum_{j = 1}^t \delta_j}\right] \\
~&=&\E\left[\widehat{\operatorname{FDP}}_{\operatorname{GAIF}}(t)\right] \nonumber \\
    ~&\Eqmark{iv}\leq& \alpha, \nonumber 
\end{eqnarray}
where the equality (i) follows from the law of iterated expectations by conditioning on $\mathcal{F}_{j-1}$ and inequality (ii) results from Lemma \ref{lem:super-uniformity} by setting $g(\delta_{1:t})=(1\vee \sum_{j\leq t}\delta_j)$,  and the inequality (iii) holds because $1-\theta_j\leq 1$ for all $j\in\bar{\mathcal{I}}_t$, the inequality (iv) follows from definition of GAIF, which completes the proof of FDR control for monotone GAIF procedures.

%{\it Proof of Theorem \ref{the:FDR_GAIF_ind}} (c).
%The LF, LF-BI, LF-FD, and LF-BD procedures are concrete instances of the GAIF framework, distinguished only by their choices of feedback-available index sets $\mathcal{I}_t$. Similarly, the SF, SF-BI, SF-FD, and SF-BD procedures instantiate the Adaptive GAIF framework. By parts (a) and (b), it remains to verify that these procedures satisfy the required monotonicity condition. From the construction of their test levels ${\alpha_t}$ in Section~\ref{subsec:GAIF} and Section~\ref{subsec:Adaptive-GAIF}, each $\alpha_t$ is a monotone function of the past. For further discussion of the monotonicity of SAFFRON, see~\cite{ramdas2018saffron_arxiv}; the monotonicity of our SF procedures follows by an analogous argument. Therefore, all LF, LF-BI, LF-FD, LF-BD and SF, SF-BI, SF-FD, SF-BD procedures satisfy the monotonicity requirement, completing the proof.
\end{proof}

}

{\color{black}\subsection{Proof of Theorem \ref{the:FDR_Ada_GAIF_ind}\label{proof:them1-Ada-GAIF}}

\begin{proof}

 {\it Proof of Theorem \ref{the:FDR_Ada_GAIF_ind}} (a). (mFDR control of Adaptive GAIF procedures.)

Now we prove mFDR control for the Adaptive GAIF procedures. %The conditional super-uniformity in\eqref{csuag-SF} can be rephrased as:
%\begin{equation}\label{eq:condi-super-SF}
%    \E\left[\frac{\mathbb{I}\{p_t>\alpha_t\}}{1-\alpha_t}\mid \mathcal{J}_{t-1}\right]\geq 1 \geq \E\left[\frac{\mathbb{I}\{p_t\leq \alpha_t\}}{\alpha_t}\mid \mathcal{J}_{t-1}\right].
%\end{equation}
{\color{black}For any $c\in(0,1)$ that is predictable with respect to 
$\mathcal J_{t-1}$, the conditional super-uniformity of null p-values implies that, 
for $t\in\mathcal H_0$,
\begin{equation}\label{eq:condi-super-SF-lambda}
    \E\left[\frac{\mathbb{I}\{p_t>c\}}{1-c}\mid \mathcal{J}_{t-1}\right]\geq 1 \geq \E\left[\frac{\mathbb{I}\{p_t\leq c\}}{c}\mid \mathcal{J}_{t-1}\right].
\end{equation}

}

Note that for any time $t\in[T]$, we have
\begin{eqnarray}
    \E[V(t)]&=&\E\left[\sum_{j\leq t,j\in\mathcal{H}_0}\mathbb{I}\{p_j\leq \alpha_j\}\right]=\sum_{j\leq t}\E\left[(1-\theta_j)\mathbb{I}\{p_j\leq \alpha_j\}\right] \nonumber \\
    &\Eqmark{i}=& \sum_{j\leq t}\E\left[\E\left[(1-\theta_j)\mathbb{I}\{p_j\leq \alpha_j\}\mid \mathcal{J}_{j-1}\right]\right]   \nonumber \\
    &\Eqmark{ii}\leq& \E\left[\sum_{j\leq t}\alpha_j(1-\theta_j)\right] \nonumber \\
     &\Eqmark{iii}\leq& \E\left[\sum_{j\leq t}\alpha_j(1-\theta_j)\frac{\mathbb{I}\{p_j>\lambda\}}{(1-\lambda)}\right] \nonumber \\
    &\Eqmark{iv}\leq& \E\left[\sum_{j\in\mathcal{I}_t}\alpha_j(1-\theta_j)\frac{\mathbb{I}\{p_j>\lambda\}}{(1-\lambda)}+\sum_{j\in\bar{\mathcal{I}}_t}\alpha_j \frac{\mathbb{I}\{p_j>\lambda\}}{(1-\lambda)}\right] \nonumber \\
    &\Eqmark{v}\leq& \alpha \E\left[1\vee \sum_{j\leq t}\delta_j\right],
\end{eqnarray}
where the equality (i) follows from the law of iterated expectations by conditioning on $\mathcal{J}_{j-1}$ and inequality (ii) applying the conditional super-uniformity property in (\ref{csuag-SF}), and the inequality (iii) also follows by the law of iterated expectations by conditioning on $\mathcal{J}_{j-1}$ and then applying the conditional super-uniformity property in \eqref{eq:condi-super-SF-lambda} with $c=\lambda$, and the inequality (iv) holds since the fact that $1-\theta_j\leq 1$ for all $j\in\bar{\mathcal{I}}_t$, and the inequality (v) follows from the construction such that $\widehat{\FDP}_{\text{Adaptive-GAIF}}\leq \alpha$. Therefore, we obtain the conclusion that $\operatorname{mFDR}(t)\leq \alpha$ for Adaptive GAIF procedures.

{\it Proof of Theorem \ref{the:FDR_Ada_GAIF_ind}} (b). (FDR control of Adaptive GAIF procedures.)

We next prove the online FDR control for Adaptive GAIF procedures. Under the independence and the monotonicity assumptions, we have
\begin{eqnarray}
  \operatorname{FDR}(t)&=&\E\left[\frac{\sum_{j\leq t,j\in \mathcal{H}_0}\mathbb{I}\{p_j\leq \alpha_j\}}{1\vee\sum_{j\leq t}\delta_j}\right] \nonumber \\
    ~&=& \sum_{j\leq t,j\in\mathcal{H}_0}\E\left[\frac{\mathbb{I}\{p_j\leq \alpha_j\}}{1\vee\sum_{j\leq t}\delta_j}\right] \nonumber \\
    ~&\Eqmark{i}=&\sum_{j\leq t,j\in\mathcal{H}_0}\E\left[\E\left[\frac{\mathbb{I}\{p_j\leq \alpha_j\}}{1\vee\sum_{j\leq t}\delta_j}\mid \mathcal{J}_{j-1}\right]\right] \nonumber \\
    ~&\Eqmark{ii}\leq& \sum_{j\leq t,j\in\mathcal{H}_0}\E\left[\frac{\alpha_j}{1\vee\sum_{j\leq t}\delta_j}\right] \nonumber \\
    ~&\Eqmark{iii}\leq& \sum_{j\leq t,j\in\mathcal{H}_0}\E\left[\frac{\alpha_j}{1\vee\sum_{j\leq t}\delta_j}\cdot\frac{\mathbb{I}\{p_j>\lambda\}}{(1-\lambda)}\right] \nonumber \\
    % ~&=& \E\left[\frac{\sum_{j\leq t}(1-\theta_j)\alpha_j}{1\vee\sum_{j\leq t}\delta_j}\right] \nonumber \\
    %~&\leq&\E\left[\frac{\sum_{j=1}^{t-1}\alpha_j(1-\theta_j)\frac{\mathbb{I}\{p_j>\lambda\}}{(1-\lambda)}+\alpha_t(1-\theta_t)\frac{\mathbb{I}\{p_t>\lambda\}}{(1-\lambda)}}{1\vee\sum_{j\leq t}\delta_j}\right] \nonumber \\
    ~&=&\E\left[\frac{\sum_{j=1}^{t}\alpha_j(1-\theta_j)\frac{\mathbb{I}\{p_j>\lambda\}}{(1-\lambda)}}{1\vee\sum_{j\leq t}\delta_j}\right] \nonumber \\
    ~&\Eqmark{iv}\leq&\E\left[\frac{\sum_{j\in\mathcal{I}_t}\alpha_j(1-\theta_j)\frac{\mathbb{I}\{p_j>\lambda\}}{(1-\lambda)}+\sum_{j\in\bar{\mathcal{I}}_t}\alpha_j\frac{\mathbb{I}\{p_j>\lambda\}}{(1-\lambda)}}{1\vee\sum_{j\leq t}\delta_j}\right] \nonumber \\
    ~&=& \E\left[\widehat{\operatorname{FDP}}_{\operatorname{Ada-GAIF}}(t)\right] \nonumber \\
    ~&\Eqmark{v}\leq& \alpha, \nonumber 
\end{eqnarray}
where (i) and (ii) follows form the law of iterated expectations by conditioning on $\mathcal{J}_{j-1}$ and applying Lemma \ref{lem:super-uniformity-saffron} by setting $g(\delta_{1:t})=(1\vee \sum_{j\leq t}\delta_j)$,  the inequality (iii) also follows from the law of iterated expectations by conditioning on $\mathcal{J}_{j-1}$ and applying Lemma \ref{lem:super-uniformity-saffron} by setting $g(\delta_{1:t})=(1\vee \sum_{j\leq t}\delta_j)$, the inequality (iv) holds since $1-\theta_j\leq 1$ for all $j\in\bar{\mathcal{I}}_t$, and the  inequality (v) follows from the definition of Adaptive GAIF,  which completes the proof of online FDR control for monotone Adaptive GAIF.

%{\it Proof of Theorem \ref{the:FDR_GAIF_ind}} (c).
%The LF, LF-BI, LF-FD, and LF-BD procedures are concrete instances of the GAIF framework, distinguished only by their choices of feedback-available index sets $\mathcal{I}_t$. Similarly, the SF, SF-BI, SF-FD, and SF-BD procedures instantiate the Adaptive GAIF framework. By parts (a) and (b), it remains to verify that these procedures satisfy the required monotonicity condition. From the construction of their test levels ${\alpha_t}$ in Section~\ref{subsec:GAIF} and Section~\ref{subsec:Adaptive-GAIF}, each $\alpha_t$ is a monotone function of the past. For further discussion of the monotonicity of SAFFRON, see~\cite{ramdas2018saffron_arxiv}; the monotonicity of our SF procedures follows by an analogous argument. Therefore, all LF, LF-BI, LF-FD, LF-BD and SF, SF-BI, SF-FD, SF-BD procedures satisfy the monotonicity requirement, completing the proof.
\end{proof}
}

\subsection{Proof of Theorem \ref{the:mFDR-control-dep}}
\begin{proof}
{\color{black}
Define locally conditional super-uniformity as follows: if the null hypothesis $H_t$ is true, then for all $\alpha_t\in[0,1]$,
\[\Pr\left(p_t\leq \alpha_t\mid \gF_{\text{dep}}^{-\gX^t}\right)\leq \alpha_t,\]
where $\gX^t:=\{t-L_t,\dots,t-1\}$, and $\gF_{\text{dep}}^{-\gX^t}:=\sigma(\delta_{1:t-L_t-1};\{\theta_j\}_{j\in\mathcal{I}_{t-L_t-1},1\le j\le t-L_t-1})$. 

This condition is immediately true by local dependence \citep{Zrnic2021asynchronous}. 
    Note that for any time $t\in[T]$, we have
\begin{eqnarray}
    \E[V(t)]&=&\E\left[\sum_{j\leq t,j\in\mathcal{H}_0}\mathbb{I}\{p_j\leq \alpha_j\}\right]=\sum_{j\leq t}\E\left[(1-\theta_j)\mathbb{I}\{p_j\leq \alpha_j\}\right] \nonumber \\
    &=& \sum_{j\leq t}\E\left[\E\left[(1-\theta_j)\mathbb{I}\{p_j\leq \alpha_j\}\mid \gF_{\text{dep}}^{-\gX^j}\right]\right]   \nonumber \\
    &\leq& \sum_{j\leq t} \E[(1-\theta_j)\alpha_j]=\E\left[\sum_{j\leq t}\alpha_j(1-\theta_j)\right] \nonumber \\
    &\leq& \E\left[\sum_{j\in\mathcal{I}_t}\alpha_j(1-\theta_j)+\sum_{j\in\bar{\mathcal{I}}_t}\alpha_j\right] \nonumber \\
    &\leq& \alpha \E\left[1\vee \sum_{j\leq t,j\notin\{t-L_t,\dots,t-1\}}\delta_j\right],\ \nonumber \\
    &\leq& \alpha\E\left[1\vee \sum_{j\leq t}\delta_j\right],
\end{eqnarray}

where the first inequality follows from the law of iterated expectations by conditioning on $\mathcal{F}_{\text{dep}}^{-\gX^t}$ and then applying the conditional super-uniformity property and by noticing that the measurability of $\alpha_j$ with respect to $\mathcal{F}_{\text{dep}}^{-\gX^t}$, and the second inequality follows by the fact that $1-\theta_t\leq 1$, and the third inequality follows from the construction such that $\widehat{\FDP}_{\text{GAIF}_{\text{dep}}}\leq \alpha$. Therefore, we obtain the conclusion that $\operatorname{mFDR}(t)\leq \alpha$ for $\text{GAIF}_{\text{dep}}$. 

Define $\mathcal{J}_{\text{dep}}^{-\gX^t}:=\sigma(\delta_{1:t-L_t-1};C_{1:t-L_t-1};\{\theta_j\}_{j\in\mathcal{I}_{t-L_t-1},1\le j\le t-L_t-1})$. Under local dependence, we have the conditional super-uniformity for null $p$-values:
{\color{black}
\begin{equation}%\label{eq:condi-super-SF}
    \E\left[\frac{\mathbb{I}\{p_t>c\}}{1-c}\mid \mathcal{J}_{\text{dep}}^{-\gX_t}\right]\geq 1 \geq \E\left[\frac{\mathbb{I}\{p_t\leq c\}}{c}\mid \mathcal{J}_{\text{dep}}^{-\gX_t}\right],\nonumber
\end{equation}
where $c\in(0,1)$ is predictable with respect to $\mathcal{J}_{\text{dep}}^{-\gX_t}$.
}
%\begin{equation}%\label{eq:condi-super-SF}
%    \E\left[\frac{\mathbb{I}\{p_t>\alpha_t\}}{1-\alpha_t}\mid \mathcal{J}_{\text{dep}}^{-\gX_t}\right]\geq 1 \geq \E\left[\frac{\mathbb{I}\{p_t\leq \alpha_t\}}{\alpha_t}\mid \mathcal{J}_{\text{dep}}^{-\gX_t}\right].
%\end{equation}
The mFDR control can be obtained for $\text{Ada-GAIF}_{\text{dep}}$ as follows. 

\begin{eqnarray}
    \E[V(t)]&=&\E\left[\sum_{j\leq t,j\in\mathcal{H}_0}\mathbb{I}\{p_j\leq \alpha_j\}\right]=\sum_{j\leq t}\E\left[(1-\theta_j)\mathbb{I}\{p_j\leq \alpha_j\}\right] \nonumber \\
    &\Eqmark{i}=& \sum_{j\leq t}\E\left[\E\left[(1-\theta_j)\mathbb{I}\{p_j\leq \alpha_j\}\mid \mathcal{J}_{\text{dep}}^{-\gX^j}\right]\right]   \nonumber \\
    &\Eqmark{ii}\leq& \sum_{j\leq t} \E[(1-\theta_j)\alpha_j]=\E\left[\sum_{j\leq t}\alpha_j(1-\theta_j)\right] \nonumber \\
    &\Eqmark{iii}\leq& \E\left[\sum_{j\in\mathcal{I}_t}\alpha_j(1-\theta_j)\frac{\mathbb{I}\{p_j>\lambda\}}{(1-\lambda)}+\sum_{j\in\bar{\mathcal{I}}_t}\alpha_j\frac{\mathbb{I}\{p_j>\lambda\}}{(1-\lambda)}\right] \nonumber \\
    &\Eqmark{iv}\leq& \E\left[\sum_{j < t-L_t} \frac{\alpha_j \mathbb{I}\{p_j>\lambda\}}{1-\lambda} \mathcal{W}_j + \sum_{t-L_t \le j \le t} \frac{\alpha_j}{1-\lambda} \mathcal{W}_j\right] \nonumber \\
     &\Eqmark{v}\leq& \alpha \E\left[1\vee \sum_{j\leq t,j\notin\{t-L_t,\dots,t-1\}}\delta_j\right]\ \nonumber \\
    &\Eqmark{vi}\leq& \alpha \E\left[1\vee \sum_{j\leq t}\delta_j\right],
\end{eqnarray}
where $\mathcal{W}_j = (1-\theta_j)\mathbb{I}\{j \in \mathcal{I}_t\} + \mathbb{I}\{j \in \bar{\mathcal{I}}_t\}$ and the equality (i) follows from the law of iterated expectations by conditioning on $\mathcal{J}_{\rm dep}^{-\gX^j}$, and the inequality (ii) applying the conditional super-uniformity, and the inequality (iii) also follows by the law of iterated expectations by conditioning on $\mathcal{J}_{\rm dep}^{-\gX^t}$ and then applying the conditional super-uniformity. The inequality (iv) follows from the fact that $\mathbb{I}\{p_j>\lambda\}\leq 1$ for all $j\in\{t-L_t,\dots,t\}$ and $1-\theta_t\leq 1$. The inequality (v) follows from the construction such that $\widehat{\FDP}_{\text{Ada-GAIF}_{\text{dep}}}\leq \alpha$, which concludes the proof.
}
\end{proof}

\subsection{Proof of Proposition \ref{prop-inde-p} }\label{appen_sub:online_cp}
%The results of Theorem \ref{the:FDR-OCTF} then follow directly from Proposition \ref{prop:online_conf_val_exch} and Theorem \ref{the:FDR_GAIF_ind}. 
We first restate Proposition \ref{prop-inde-p} as follows and finish the proof. The proof of Proposition \ref{prop:online_conf_val_exch} essentially follows the argument of Theorem 8.2 in \cite{angelopoulos2024theoretical}, with an extension to our online conformal testing setting.
\begin{proposition}[Validity and Mutual Independence of Online Conformal $p$-values under Exchangeability and Symmetric Scores]
\label{prop:online_conf_val_exch}
%Assume that the data $\{(\X_i,Y_i)\}$, for $i \in \gC \cup \{1, 2, \dots, T\},$ are exchangeable, and note that under the null hypothesis each sample satisfies $Y_i \notin \mathcal{A}$. Define the dynamically updated index set of null calibration samples as
%\[\gC_{0t} \;=\; \{ i \in \gC \cup \{1,\dots, t-1\} : Y_i \notin \mathcal{A} \}.\]
%For test data $\X_{N+t}$ at time stamp $t$, define the online conformal $p$-value as 
%\begin{equation}\label{eq:conf_p-appen}
%    p_t=\frac{\sum_{i\in{\gC}_{0t}} \I\{V_{i}<V_{t}\}+\xi_t\cdot (1+\sum_{i\in \gC_{0t}} \mathbb{I}\{V_{i}=V_{t}\})}{1+|\gC_{0t}|},
%\end{equation}
%where $V_i$ is a conformity score computed from $(\X_i,Y_i)$.
%Then, under the null hypothesis:
Suppose at each time $t$, the score function $V(\cdot;\gD_t)$ is constructed through the current data $\gD_t=\big((\X_i,Y_i):-n\leq i\leq t\big)$. The p-value of each time $t$ is constructed as
\begin{equation}\label{eq:conf_p-appen}
    p_t=\frac{\sum_{i\in{\gC}_{0t}} \I\{V(\X_i;\gD_t)<V(\X_t;\gD_t)\}+\xi_t\cdot (1+\sum_{i\in \gC_{0t}} \mathbb{I}\{V(\X_i;\gD_t)=V(\X_t;\gD_t)\})}{1+|\gC_{0t}|}.
\end{equation}
Suppose Assumption \ref{assump:conformal_setting} holds and the score function $V(\cdot;\gD_t)$ is symmetric to $\{(\X_i,Y_i):-n\leq i\leq t, \theta_i=0\}$. Then under the null hypothesis,
\begin{enumerate}
    \item Each null $p$-value $p_t$ with $\theta_t=0$ is marginally uniformly distributed on $[0,1]$.
    \item The null $p$-values $\{p_t:t\in\mathbb{N},\theta_t=0\}$ are mutually independent.
\end{enumerate}
\end{proposition}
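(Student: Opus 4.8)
The plan is to work conditionally on the non-null data $\mathcal{N}:=\{(\X_i,Y_i):\theta_i=1\}$ (together with the partition of indices into null and non-null). By Assumption~\ref{assump:conformal_setting} the null data then form an exchangeable collection, and since the desired conclusion — uniformity and mutual independence of the null $p$-values — does not depend on $\mathcal{N}$, establishing it conditionally on $\mathcal{N}$ suffices. The independent tie-breakers $\{\xi_t\}$ supply the fresh randomness that makes each smoothed rank exactly uniform rather than uniform on a grid.

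The first ingredient is a within-step uniformity statement. Fix a null time $t$ and let $S_t$ denote the multiset of null data at the indices $\gC_{0t}\cup\{t\}$, of cardinality $1+|\gC_{0t}|$. Because $V(\cdot;\gD_t)$ is symmetric in the null data up to time $t$, the score function, and hence the multiset of scores $\{V(\X_i;\gD_t):i\in\gC_{0t}\cup\{t\}\}$, is determined by $S_t$ and $\mathcal{N}$ alone. Conditioning on $S_t$ as an unordered bag and on $\mathcal{N}$, exchangeability makes the identity of the current point uniform over the $1+|\gC_{0t}|$ elements of $S_t$; substituting this uniformly-chosen occupant, together with the fresh $\xi_t\sim\mathrm{Unif}[0,1]$, into the smoothed-rank formula \eqref{eq:conf_p-appen} gives $p_t\sim\mathrm{Unif}[0,1]$. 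The structural point I would emphasize is that this conditional law equals $\mathrm{Unif}[0,1]$ \emph{whatever} bag $S_t$ was conditioned on.

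To upgrade marginal uniformity to mutual independence I would run a backward peeling induction over a finite horizon and then let the horizon tend to infinity. Let $t_1<t_2<\cdots$ be the null times and condition on $\mathcal{N}$ and on the total bag $B_N$ of all null data up to the $N$-th null time. Reveal the null points in decreasing order of position, and let $\mathcal{R}_k$ be the $\sigma$-field generated by $\mathcal{N}$, $B_N$, the occupants already revealed at positions $t_{k+1},\dots,t_N$, and the tie-breakers $\xi_{t_{k+1}},\dots,\xi_{t_N}$. Then $\mathcal{R}_k$ determines the bag $S_k$ available at the $k$-th null time (it is $B_N$ with the later-revealed points removed), while by exchangeability the occupant of position $t_k$ is still uniform over $S_k$ given $\mathcal{R}_k$. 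Applying the within-step lemma conditionally on $\mathcal{R}_k$ gives $p_{t_k}\mid\mathcal{R}_k\sim\mathrm{Unif}[0,1]$; since $p_{t_{k+1}},\dots,p_{t_N}$ are $\mathcal{R}_k$-measurable and this conditional law does not depend on the value of $\mathcal{R}_k$, the variable $p_{t_k}$ is independent of all later null $p$-values. Descending $k=N,N-1,\dots,1$ yields mutual independence and uniformity of $p_{t_1},\dots,p_{t_N}$, and letting $N\to\infty$ delivers the claim for the entire null subsequence.

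The main obstacle is exactly that the score function is data-dependent and varies with $t$, so there is no single fixed total order on the null points whose sequential ranks coincide with the $p$-values — pairwise orderings can flip as the calibration bag grows. This is why the classical "uniform argsort permutation / Lehmer-code" argument for i.i.d. sequences does not transcribe directly, and it is the step I expect to require the most care. The peeling scheme circumvents this by never invoking a global order: it uses only the within-step fact that, conditional on the current bag (which the future-revealed occupants pin down), the current smoothed rank is uniform \emph{and uniform irrespective of which bag arose}. I would also verify the two supporting details — that conditioning on $B_N$ and the higher positions genuinely leaves the occupant of $t_k$ uniform over $S_k$ (a direct consequence of exchangeability of the null collection), and that the initial calibration null points, which never serve as test points, are merely carried along in every bag without disturbing the recursion. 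This mirrors, and extends to the online testing setting, the reasoning behind Theorem~8.2 of \citet{angelopoulos2024theoretical}.
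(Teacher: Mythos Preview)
Your proposal is correct and follows essentially the same approach as the paper: both condition on the unordered bag of null data at time $t$ (together with the non-null data), use the symmetry of the score to obtain conditional uniformity of $p_t$, observe that later null $p$-values are invariant to permutations of that bag, and then factor the joint distribution. Your backward-revelation filtration $\mathcal{R}_k$ is a slightly different packaging of the same conditioning as the paper's explicit averaging over permutations $\sigma\in\Omega_t$, and both arguments explicitly adapt Theorem~8.2 of \citet{angelopoulos2024theoretical}.
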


% Define 
% $$\Phi_t=\Big(\{V_i:i\in\gC_{0t}\cup\{t\}\},(\theta_i:i\leq t),\big((\X_i,Y_i):i\in\gC_{1t}\big)\Big)$$
% which contains the unordered set of scores in $\gC_{0t}$.

% For any $t\leq T$ and any fixed value $x_t,\cdots,x_T\in[0,1]$
% \begin{align*}
%     &\Pr(p_t\leq x_t,\cdots,p_T\leq x_T)\\
%     &=\E\left[\mathbb{I}\{p_t\leq x_t\}\cdots\mathbb{I}\{p_T\leq x_T\}\right]\\
%     &=\E\left[\E\left[\mathbb{I}\{p_t\leq x_t\}\mid \Phi_t\right]\cdot\mathbb{I}\{p_{t+1}\leq x_{t+1}\}\cdots\mathbb{I}\{p_T\leq x_T\}\right]\\
%     &=\E\left[x_t\cdot\mathbb{I}\{p_{t+1}\leq x_{t+1}\}\cdots\mathbb{I}\{p_T\leq x_T\}\right]\\
%     &= x_{t}\cdots x_{T}.
% \end{align*}

% The key relies on two facts. The first fact is that $p_i,i>t$ is independent of $\Phi_t$. To see why, define $\Omega$ as the set containing all permutations of $\gC_{0t}\cup\{t\}$. The $s$-th p-value for $s>t$ has conditional distribution given $\Phi_t$ as
% \begin{align*}
% &\Pr(p_s\leq x_s\mid\Phi_t)\\
% =&\frac{1}{(|\gC_{0t}|+1)!}\sum_{\sigma\in\Omega}\Pr(p_s(\sigma(\gD_{t}))\leq x_s\mid \Phi_t)\\
% =&\frac{1}{(|\gC_{0t}|+1)!}\sum_{\sigma\in\Omega}\Pr(p_s((\gD_{t}))\leq x_s\mid \Phi_t)\\
% =&\Pr(p_s\leq x_s\mid\Phi_t).
% \end{align*}

% The second key is the conditional probability of $\Pr(p_t\leq x_t\mid \Phi_t)$, which is standard.
\begin{proof}
We adapt the standard argument for the validity of conformal $p$-values to our online setting with exchangeable data. For each time $t$, define
\[\color{black}
\Phi_t = \Biggl(\{(\X_i,Y_i) : i\in \gC_{0t}\cup\{t\}\},\, (\theta_i: i< t),\, \bigl((\X_i,Y_i) : i\in \gC_{1t}\bigr)\Biggr),
\]
where $\Phi_t$ contains the unordered set of conformity scores for indices in $\gC_{0t}\cup\{t\}$, the true state $\theta_i$, and the data corresponding to indices not used in the calibration set (denoted here by $\gC_{1t}$). To prove the mutual independence, consider any time indices $t\leq T$ and let $x_t,x_{t+1},\dots,x_T\in [0,1]$ be arbitrary. {\color{black}For ease of notation, we first consider the case in which all p-values $\{p_s\}_{s=t}^T$ are associated with null hypotheses. The argument extends directly to the more general setting in which some non-null hypotheses appear between them.} 

Since $\{p_k\}_{k=t+1}^T$ are determined by $\Phi_t\cup\{(\X_i,Y_i)\}_{i=t+1}^{T}$ and $p_t$ is independent of $\{(\X_i,Y_i)\}_{i=t+1}^{T}$, it holds that 
\begin{align*}
    &\Pr\bigl(p_t\leq x_t,\, p_{t+1}\leq x_{t+1},\,\dots,\, p_T\leq x_T\bigr)\\[1mm]
    =&\; \mathbb{E}\Bigl[\mathbb{I}\{p_t\leq x_t\}\mathbb{I}\{p_{t+1}\leq x_{t+1}\}\cdots \mathbb{I}\{p_T\leq x_T\}\Bigr]\\[1mm]
    =&\; \mathbb{E}\Biggl[\mathbb{E}\Bigl[\mathbb{I}\{p_t\leq x_t\}\,\Big|\,\textcolor{black}{\Phi_t\cup\{(\X_i,Y_i)\}_{i=t+1}^{T}}\Bigr]\cdot \mathbb{I}\{p_{t+1}\leq x_{t+1}\}\cdots \mathbb{I}\{p_T\leq x_T\}\Biggr]\\[1mm]
    =&\; \mathbb{E}\Biggl[\mathbb{E}\Bigl[\mathbb{I}\{p_t\leq x_t\}\,\Big|\,\Phi_t\Bigr]\cdot \mathbb{I}\{p_{t+1}\leq x_{t+1}\}\cdots \mathbb{I}\{p_T\leq x_T\}\Biggr].
\end{align*}
The key observation is that, by exchangeability, the conditional distribution of $p_t$ given $\Phi_t$ is uniform on $[0,1]$, so that
\[\mathbb{E}\Bigl[\mathbb{I}\{p_t\leq x_t\}\,\Big|\,\Phi_t\Bigr] = x_t.
\]
Thus,
\[
\Pr\bigl(p_t\leq x_t,\, p_{t+1}\leq x_{t+1},\,\dots,\, p_T\leq x_T\bigr)
= x_t\, \mathbb{E}\Bigl[\mathbb{I}\{p_{t+1}\leq x_{t+1}\}\cdots \mathbb{I}\{p_T\leq x_T\}\Bigr].
\]
Next, we note the following two key facts:

\medskip
\textbf{(a) Uniformity of single p-value $p_t$:}\\[1mm]
Fix any $t\in[T]$. Let $\Omega_t$ be the set of all {permutations} of $\gC_0\cup[T]$ that fix indices outside of $\gC_{0t}\cup\{t\}$. Note that given $\Phi_t$, the only randomness for $p_t$ is the order of $\{(\X_i,Y_i) : i\in \gC_{0t}\cup\{t\}\}$ and $\xi_t$.

For any $\sigma\in\Omega_t$ and given $\{\X_i : i\in \gC_{0t}\cup\{t\}\}=\{\x_i : i\in \gC_{0t}\cup\{t\}\}$ as the realizations, define 
\begin{align*}
    &p_{t}((\gD_T)_\sigma)\\
    =&\frac{\sum_{i\in{\gC}_{0t}\cup\{t\}} \I\{V(\x_{\sigma(i)};(\gD_t)_\sigma)<V(\x_{\sigma(t)};(\gD_t)_\sigma)\}+\xi_t\cdot \sum_{i\in \gC_{0t}\cup\{t\}} \mathbb{I}\{V(\x_{\sigma(i)};(\gD_t)_\sigma)=V(\x_{\sigma(t)};(\gD_t)_\sigma)\}}{1+|\gC_{0t}|}\\
    \Eqmark{i}{=}&\frac{\sum_{i\in{\gC}_{0t}\cup\{t\}} \I\{V(\x_{\sigma(i)};\gD_t)<V(\x_{\sigma(t)};\gD_t)\}+\xi_t\cdot \sum_{i\in \gC_{0t}\cup\{t\}} \mathbb{I}\{V(\x_{\sigma(i)};\gD_t)=V(\x_{\sigma(t)};\gD_t)\}}{1+|\gC_{0t}|}\\
    \Eqmark{ii}{=}&\frac{\sum_{i\in{\gC}_{0t}\cup\{t\}} \I\{V(\x_{i};\gD_t)<V(\x_{\sigma(t)};\gD_t)\}+\xi_t\cdot \sum_{i\in \gC_{0t}\cup\{t\}} \mathbb{I}\{V(\x_{i};\gD_t)=V(\x_{\sigma(t)};\gD_t)\}}{1+|\gC_{0t}|}.
\end{align*}
Equality (i) holds since $V(\cdot;\sigma(D_{t}))=V(\cdot;D_{t})$ from the symmetry of score function $V$. Equality (ii) is true by the property that $\sigma$ only permutes indices in $\gC_{0t}\cup\{t\}$ such that $\{V(\X_i;D_{t}):i\in\gC_{0t}\}=\{V(\X_{\sigma(i)};D_{t}):i\in\gC_{0t}\}$.

Denote $Q_\alpha(S_i:i\in\mathcal{I})$ as the $\alpha$-th quantile of the set $\{S_i\}_{i\in\gI}$. Let $\{X_i:i\in \gC_{0t}\cup\{t\}\}=\{x_i:i\in \gC_{0t}\cup\{t\}\}$ as a set of realizations and 
$$q=Q_\alpha(V(x_i;\gD_t):i\in\gC_{0t}\cup\{t\}).$$

For any $\sigma\in\Omega_t$ such that $V(\x_{\sigma(t)};\gD_t)>q$, we have
$$p_{t}((\gD_T)_\sigma)\geq \frac{\sum_{i\in{\gC}_{0t}\cup\{t\}} \I\{V(\x_{i};\gD_t)<V(\x_{\sigma(t)};\gD_t)\}}{1+|\gC_{0t}|}{\geq} \frac{N_=+N_-}{1+|\gC_{0t}|}\geq \alpha;$$
Here
$$N_-=\sum_{i\in\gC_{0t}\cup\{t\}}\mathbb{I}\{(V(x_i;\gD_t)<q\},\quad N_{=}=\sum_{i\in\gC_{0t}\cup\{t\}}\mathbb{I}\{(V(x_i;\gD_t)=q\}.$$
The last inequality holds by the property of quantile function such that $N_=+N_-\geq 1+\alpha(|\gC_{0t}+1|)$.

For any $\sigma\in\Omega_t$ such that $V(\x_{\sigma(t)};\gD_t)<q$, we have
$$p_{t}((\gD_T)_\sigma)\leq \frac{\sum_{i\in{\gC}_{0t}\cup\{t\}} \I\{V(\x_{i};\gD_t)\leq V(\x_{\sigma(t)};\gD_t)\}}{1+|\gC_{0t}|}\leq \frac{N_-}{1+|\gC_{0t}|}\leq \alpha,$$
which is from the property that $N_-\leq\alpha(|\gC_{0t}+1|)$.

And for any $\sigma\in\Omega_t$ such that $V(\x_{\sigma(t)};\gD_t)= q$, we have 
\begin{align}
    p_{t}((\gD_T)_\sigma) 
    =&\frac{\sum_{i\in{\gC}_{0t}\cup\{t\}} \I\{V(\x_{i};\gD_t) <V(\x_{\sigma(t)};\gD_t)\}+\xi_t\cdot \sum_{i\in \gC_{0t}\cup\{t\}} \mathbb{I}\{V(\x_{i};\gD_t)=V(\x_{\sigma(t)};\gD_t)\}}{1+|\gC_{0t}|} \nonumber \\
    \leq& \frac{N_-+\xi_t N_=}{1+|\gC_{0t}|}. \nonumber 
\end{align}
% $$p_{t}((\gD_T)_\sigma)=\frac{\sum_{i\in{\gC}_{0t}\cup\{t\}} \I\{V(\x_{i};\gD_t)<V(\x_{\sigma(t)};\gD_t)\}+\xi_t\cdot \sum_{i\in \gC_{0t}\cup\{t\}} \mathbb{I}\{V(\x_{i};\gD_t)=V(\x_{\sigma(t)};\gD_t)\}}{1+|\gC_{0t}|}\leq \frac{N_-+\xi_t N_=}{1+|\gC_{0t}|}.$$

Hence,
\begin{align*}
    \Pr(p_t\leq\alpha\mid\Phi_t)&=\frac{1}{(|\gC_{0t}|+1)!}\sum_{\sigma\in\Omega_t}\Pr(p_t((\gD_T)_\sigma)\leq\alpha)\\
    &\Eqmark{i}{=}\frac{1}{(|\gC_{0t}|+1)!}\sum_{\sigma\in\Omega_t}\left(\mathbb{I}\{V(\x_{\sigma(t)};\gD_t)<q\}+\mathbb{I}\{V(\x_{\sigma(t)};\gD_t)=q\}\frac{\alpha(|\gC_{0t}|+1)-N_-}{N_=}\right)\\
    &\Eqmark{ii}{=}\frac{1}{|\gC_{0t}|+1}\sum_{i\in\gC_{0t}\cup\{j\}}\left(\mathbb{I}\{V(\x_{i};\gD_t)<q\}+\mathbb{I}\{V(\x_{i};\gD_t)=q\}\frac{\alpha(|\gC_{0t}|+1)-N_-}{N_=}\right)\\
    &\Eqmark{iii}{=}\frac{1}{|\gC_{0t}|+1}\left(N_-+N_=\frac{\alpha(|\gC_{0t}|+1)-N_-}{N_=}\right)=\alpha,
\end{align*}
where equality (i) holds since $\xi_t$ is uniformly distributed and independent of everything else, making $$\E[\mathbb{I}\{V(\x_{\sigma(t)};\gD_t)=q,p_t((\gD_T)_\sigma)\leq\alpha\}\mid (\gD_T)_\sigma]=\mathbb{I}\{V(\x_{\sigma(t)};\gD_t)=q\}\frac{\alpha(|\gC_{0t}|+1)-N_-}{N_=}. $$
 Equality (ii) is from the fact that $\sum_{\sigma\in\Omega_t,\sigma(t)=i}\mathbb{I}\{V(\x_{\sigma(t)};\gD_t)<q\}= |\gC_{0t}|\mathbb{I}\{V(\x_{i};\gD_t)<q\}$ and $\sum_{\sigma\in\Omega_t,\sigma(t)=i}\mathbb{I}\{V(\x_{\sigma(t)};\gD_t)=q\}= |\gC_{0t}|\mathbb{I}\{V(\x_{i};\gD_t)=q\}$. And equality (iii) is direct by the definition of $N_-$ and $N_=$. Marginalizing over the $\Phi_t$ implies $\Pr(p_t\leq \alpha)=\alpha$ for all $t\in[T]$.

\textbf{(b) Independence of future $p$-values from $\Phi_t$:}\\[1mm]
Define the data set $\gD_T=((\X_{-n+1},Y_{-n+1}),\dots,(\X_T,Y_T))$. By exchangeability of the data, it holds that $\gD_T\overset{d}{=}(\gD_T)_{\sigma}$, where $(\gD_T)_{\sigma}$ is obtained from $\gD_T$ by permuting the data points according to $\sigma$. A key observation is that for any $\sigma\in\Omega_t$, defining $\sigma$ as above we have 
\[p_{t'}(\gD_T)=p_{t'}((\gD_T)_{\sigma})\]
for all $t'\in\{t+1,\dots,T\}$. Intuitively, this indicates that permuting the data according to $\sigma$ does not change $p$-values after time $t$.

This is because $p_{t'}((\gD_T)_{\sigma})$ is
\begin{eqnarray}
    %& & p_{t'}((\gD_T)_{\sigma})\\ 
     && \frac{ \sum_{i \in \gC_{0t'}} \mathbb{I}\{V(\X_{\sigma(i)};\sigma(\gD_{t'})) < V(\X_{\sigma(t')};\sigma(\gD_{t'}))+\xi_{t'}(1+\sum_{i\in\gC_{0t'}}\mathbb{I}\{V(\X_{\sigma(i)};\sigma(\gD_{t'})) = V(\X_{\sigma(t')};\sigma(\gD_{t'}))\})}{1 + |\gC_{0t'}|} \nonumber \\
     &\Eqmark{i}{=}& \frac{ \sum_{i \in \gC_{0t'}} \mathbb{I}\{V(\X_{\sigma(i)};\gD_{t'}) < V(\X_{t'};\gD_{t'})\}+\xi_{t'}(1+\sum_{i\in\gC_{0t'}}\mathbb{I}\{V(\X_{\sigma(i)};\gD_{t'}) = V(\X_{t'};\gD_{t'})\})}{1 + |\gC_{0t'}|}\; \nonumber \\
     &\Eqmark{ii}{=}& \frac{ \sum_{i \in \gC_{0t'}} \mathbb{I}\{V(\X_i;\gD_{t'})< V(\X_{t'};\gD_{t'})\}+\xi_{t'}(1+\sum_{i\in\gC_{0t'}}\mathbb{I}\{V(\X_i;\gD_{t'}) = V(\X_{t'};\gD_{t'})\})}{1 + |\gC_{0t'}|}\; \nonumber \\
     &=&p_{t'}(\gD_T), \nonumber 
\end{eqnarray}
where equality (i) holds since $\sigma(t')=t'$ by definition and $V(\cdot;\sigma(D_{t'}))=V(\cdot;D_{t'})$ from the symmetry of score function $V$. And equality (ii) is true as $\{V(\X_i;D_{t'}):i\in\gC_{0t'}\}=\{V(\X_{\sigma(i)};D_{t'}):i\in\gC_{0t'}\}$.

Therefore, 
\begin{align*}
    &\Pr(p_t\leq x_t,\cdots,p_T\leq \textcolor{black}{x_T})\\
    =&\frac{1}{(|\gC_{0t}|+1)!}\E\left[\sum_{\sigma\in\Omega_t} \mathbb{I}\{p_t((\gD_T)_\sigma)\leq x_t,\cdots,p_T((\gD_T)_\sigma)\leq \textcolor{black}{x_T}\}\right]\\
    =&\frac{1}{(|\gC_{0t}|+1)!}\E\left[\sum_{\sigma\in\Omega_t} \mathbb{I}\{p_t(\gD_T)\leq x_t,\cdots,p_T(\gD_T)\leq \textcolor{black}{x_T}\}\right]\\
    =&\frac{1}{(|\gC_{0t}|+1)!}\E\left[\left(\sum_{\sigma\in\Omega_t} \mathbb{I}\{p_t(\gD_T)\leq x_t\}\right)\mathbb{I}\{p_{t+1}(\gD_T)\cdots,p_T(\gD_T)\leq \textcolor{black}{x_T}\}\right]\\
    =&\E\left[\Pr\left(p_t\leq x_t\mid\textcolor{black}{\Phi_t\cup\{(\X_i,Y_i)\}_{i=t+1}^{T}}\right)\mathbb{I}\{p_{t+1}(\gD_T)\cdots,p_T(\gD_T)\leq \textcolor{black}{x_T}\}\right]\\
    =&x_t\Pr(p_{t+1}(\gD_T)\cdots,p_T(\gD_T)\leq \textcolor{black}{x_T}).
\end{align*}

Repeating the above argument recursively for $s=t,t+1,\ldots,T$, and conditioning at step $s$ on
$\Phi_{s}\cup\{(\X_i,Y_i)\}_{i=s}^{T}$, we obtain
\[
\Pr\bigl(p_t\leq x_t,\, p_{t+1}\leq x_{t+1},\dots, p_T\leq x_T\bigr)
= x_t\, x_{t+1}\, \cdots\, x_T.
\]
Since the joint cumulative distribution function factors as the product of the marginals, it follows that the sequence $\{p_t,\dots,p_T\}$ is mutually independent, with each $p_t$ marginally distributed as $\mathrm{Uniform}(0,1)$.

\medskip
Thus, we have demonstrated that under the exchangeability assumption and the online updating scheme, the online conformal $p$-values defined in \eqref{eq:conf_p-appen} are mutually independent. 
\end{proof}

% \subsection{Proof of Proposition \ref{prop:local-dep-conf-p} }
% \begin{proof}
% Under the null hypothesis, by the exchangeability of the data $\{(\X_i,Y_i)\}$ and the symmetry of the score function $V$ with respect to null points, the rank of $V_t$ among $\{V_i : i \in \gC_{0t}^{\text{local}}\} \cup \{V_t\}$ is uniformly random. Since the scores are almost surely distinct, $p_t$ is a discrete random variable taking values in the set $\left\{\frac{1}{1 + |\gC_{0t}^{\text{local}}|}, \dots, 1\right\}$, and satisfies the super-uniformity property: 
% \[
% \Pr(p_t \leq u) \leq u \quad \text{for all} \quad u \in [0, 1].
% \]

% For the local dependence, observe that $p_t$ depends only on the data points from time steps $t-L_t$ to $t-1$ and the initial calibration set $\gC$, whereas $p_1, \dots, p_{t-L_t-1}$ depend only on earlier observations and $\gC$. Since the online data are exchangeable and the calibration set $\gC$ is fixed, the random variables $p_t$ and $p_1, \dots, p_{t-L_t-1}$ are conditionally independent, given $\gC$. Therefore, the sequence $\{p_t\}$ satisfies the local dependence condition as defined.
% \end{proof}

\subsection{Proof of Theorem \ref{the:FDR-OCTF}: finite sample mFDR control}\label{proof:mFDR-OCTF}
\begin{proof}
    \textbf{mFDR control for Algorithm \ref{alg:OCTF-ind-conservative}.} 
We first prove the results for LFS. 
Denote $\Psi_t=\Big( (p_k:k\in\gC_{0t}), (\theta_k:k< t),\big((X_k,Y_k):k\in\gC_{1t}\big)\Big)$. 
We need to verify two facts, for any time $t\in[T]$, 

(i):  $\alpha_t^{\text{LFS}}$ is fixed given $\Psi_t$. Since $\Psi_t$ contains all non-null information available up to time \(t\), we can regard the p-values \(\{p_i\}_{i\leq t}\) as being constructed without using any non-null data; the general case then follows directly by conditioning on $\Psi_t$ as non-null data are all fixed. Note that $\alpha_t^{\text{LFS}}$ is fixed given all past null decisions $(\delta_i:1\leq i< t)$ and a null decision $\delta_i$ is decided by $( (p_k,\alpha_k):\theta_k=0,1\leq k\leq i)$. By iterated discussion, $\alpha_t$ is decided by null p-values $(p_k:k\in\gC_{0t})$, thereby determined fully by $\Psi_t$.

(ii): $p_t$ is super-uniform given $\Psi_t$ and $\theta_t=0$. This is direct as long as $p_t$ is independent of past null p-values given non-null data, which is verified by Proposition \ref{prop-inde-p}.

Then we have
 \textcolor{black}{\begin{eqnarray}
    \E[V(t)]&=&\E\left[\sum_{j\leq t}(1-\theta_j)\mathbb{I}\{p_j\leq \alpha_j\}\right] \nonumber \\
    &{=}& \sum_{j\leq t}\E\left[(1-\theta_j)\E\left[\mathbb{I}\{p_j\leq \alpha_j\}\mid \Psi_j,\theta_j=0\right]\right]   \nonumber \\
    &\Eqmark{i}{\leq}& \E\left[\sum_{j\leq t}\alpha_j(1-\theta_j)\right] \nonumber \\
    &\Eqmark{ii}{\leq}& \E\left[\sum_{j=1}^{t-1}\alpha_j(1-\theta_j)+\alpha_t\right] \nonumber \\
   &\Eqmark{iii}{\leq}&  \alpha \E\left[1\vee\sum_{j\leq t}\delta_j(1-\theta_j)\right]   \nonumber \\
    &\Eqmark{iv}{\leq}& \alpha \cdot \E[1\vee R(t)],
\end{eqnarray}
}
where the inequality (i) follows from the above two facts, thereby $\Pr(p_j\leq \alpha_j\mid \Psi_j,\theta_j=0)\leq \alpha_j$. And the inequality (ii) follows by the fact that $1-\theta_t\leq 1$. For inequality (iii) follows from the construction of $\alpha_t^{\text{LFS}}$, and the inequality (iv) holds since $1-\theta_j\leq 1$ for all $j\leq t$. Therefore, we conclude that $\mFDR(t)\leq \alpha$ for LFS.

Similar results for SFS can also be proved as follows:
\begin{eqnarray}
    \E[V(t)]&=&\E\left[\sum_{j\leq t}(1-\theta_j)\mathbb{I}\{p_j\leq \alpha_j\}\right] \nonumber \\
    &{=}& \sum_{j\leq t}\E\left[(1-\theta_j)\E\left[\mathbb{I}\{p_j\leq \alpha_j\}\mid \Psi_j,\theta_j=0\right]\right]   \nonumber \\
    &\leq& \E\left[\sum_{j\leq t}\alpha_j(1-\theta_j)\right] \nonumber \\
     &\leq& \E\left[\sum_{j=1}^{t-1}\alpha_j(1-\theta_j)\frac{\mathbb{I}\{p_j>\lambda\}}{(1-\lambda)}+\alpha_t(1-\theta_t) \frac{\mathbb{I}\{p_t>\lambda\}}{(1-\lambda)}.\right] \nonumber \\
    &\leq& \E\left[\sum_{j=1}^{t-1}\alpha_j(1-\theta_j)\frac{\mathbb{I}\{p_j>\lambda\}}{(1-\lambda)}+\alpha_t \frac{\mathbb{I}\{p_t>\lambda\}}{(1-\lambda)}\right] \nonumber \\
    &\leq& \alpha \E\left[1\vee\sum_{j\leq t}\delta_j(1-\theta_j)\right]  \nonumber \\
    &\leq& \alpha\cdot \E[1\vee R(t)] . \nonumber 
\end{eqnarray}
Therefore, we obtain the conclusion that $\operatorname{mFDR}(t)\leq \alpha$ for the proposed OCTF procedures in Algorithm \ref{alg:OCTF-ind-conservative} with $\alpha_t=\alpha_t^{\text{LFS}}$ or $\alpha_t=\alpha_t^{\text{SFS}}$.
    % \textbf{mFDR control for Algorithm \ref{alg:OCTF-dep}.} The results of mFDR control for Algorithm \ref{alg:OCTF-dep} then follow directly from Proposition \ref{prop:local-dep-conf-p} and Theorem \ref{prop:mFDR-control-dep}, thereby completing the proof.
\end{proof}

\subsection{Proof of Corollary \ref{the:FDR-Opt}}
\begin{proof}
    It suffices to verify that the optimized score function \( V(\cdot;\hat{k}_t) \) is symmetric with respect to \( \gC_{0t} \cup \{t\} \), so that Proposition~\ref{prop:online_conf_val_exch} and Theorem~\ref{the:FDR-OCTF} can be directly applied to establish the validity of mFDR control at the target level for Algorithm Opt-OCTF.

For any permutation $\sigma\in\Omega_t$ that only permutes the indices in $\gC_{0t}\cup\{t\}$, we have $$\hat{k}_t^\sigma=\underset{k \in [K]}{\arg\min}\;\gM^{\mathrm{EWMA}}_t((\gD_T)_\sigma;k)=\hat{k}_t.$$
To see why, the auxiliary p-value for $j\in\gC_{1t}$ after permutation $\sigma$ is
\begin{align*}
(\tilde{p}_{k,j})_\sigma&=\frac{\sum_{s\in\gC_{0t}\cup\{t\}}\mathbb{I}\{V(\mathbf{X}_{\sigma(s)};k)\leq V(\mathbf{X}_j;k)\}}{1+|\gC_{0t}|}\\
&=\frac{\sum_{s\in\gC_{0t}\cup\{t\}}\mathbb{I}\{V(\mathbf{X}_{s};k)\leq V(\mathbf{X}_j;k)\}}{1+|\gC_{0t}|}=\tilde{p}_{k,j}.
\end{align*}
This means $\{\tilde{p}_{k,j}\}_{j\in\gC_{1t}}$ is permutation invariant to $\sigma$. Applying this, we have
\begin{equation*}
\gM^{\mathrm{EWMA}}_t((\gD_t)_\sigma;k)=\frac{\sum_{j=1}^{t-1} \rho^{t-1-j}\, (\tilde{p}_{k,j})_\sigma\theta_j\,}{\sum_{j=1}^{t-1}\rho^{t-1-j}\cdot \theta_j}=\frac{\sum_{j=1}^{t-1} \rho^{t-1-j}\, \tilde{p}_{k,j}\theta_j\,}{\sum_{j=1}^{t-1}\rho^{t-1-j}\cdot \theta_j}=\gM^{\mathrm{EWMA}}_t(\gD_t;k),
\end{equation*}
which keeps invariant to the permutation $\sigma$. Combining together, we have $\hat{k}_t$ is symmetric to $\gC_{0t}\cup\{t\}$. And it indicates that $V(\cdot;\hat{k}_t)$ is symmetric to $\gC_{0t}\cup\{t\}$.
\end{proof}

\subsection{Proof of Theorem \ref{the:FDR_e-GAIF}}
\begin{proof}
{\color{black}We first prove the results for $e$-GAIF. Since $\delta_j=\mathbb{I}\{e_j\geq 1/\alpha_j\}$, by definition, we have
\begin{eqnarray}
    \operatorname{FDR}(t)&=&\E\left[\frac{\sum_{j\leq t}\mathbb{I}\{e_j\geq 1/\alpha_j\}(1-\theta_j)}{1\vee \sum_{j\leq t}\delta_j}\right] \nonumber \\
    ~&\leq&\E\left[\sum_{j\leq t}\frac{(1-\theta_j)\mathbb{I}\{1/e_j\leq \alpha_j\}}{R({j-1})+ 1}\right] \nonumber \\
    ~&\leq& \E\left[\sum_{j\leq t}\frac{(1-\theta_j)e_j\alpha_j}{R({j-1})+ 1}\right] \nonumber \\
     ~&=& \E\left[\sum_{j\leq t}\frac{(1-\theta_j)\E[e_j\mid \gF_{j-1}]\alpha_j}{R({j-1})+ 1}\right] \nonumber \\
   ~&\leq& \E\left[\sum_{j\leq t}\frac{(1-\theta_j)\alpha_j}{R({j-1})+1}\right] \nonumber \\
    ~&\leq&\E\left[\sum_{j\in\mathcal{I}_t}\frac{\alpha_j(1-\theta_j)}{1+R({j-1})}+\sum_{j\in\bar{\mathcal{I}}_t}\frac{\alpha_j}{1+R({j-1})}\right] \nonumber \\
    ~&=&\E\left[\widehat{\FDP}_{\text{e-GAIF}} (t)\right]\leq \alpha, \nonumber
\end{eqnarray}

where the first inequality holds since $R({j-1})+1\leq (R(t)\vee 1)$ for every $j\in\{j\leq t:\delta_j=1\}$ by definition, the second inequality holds since $\mathbb{I}\{e_j\alpha_j\geq 1\}\leq e_j\alpha_j$, the third inequality uses the law of iterated expectations by conditioning on $\mathcal{F}_{j-1}$ and then applies the property of $e$-values, and the fourth inequality holds since $1-\theta_t\leq 1$, and the last inequality follows from the construction of $e$-GAIF, which completes the proof of FDR control for $e$-GAIF.

We now proceed to establish the FDR control guarantee for the $e$-Ada-GAIF procedure. Specifically, we show that:
\begin{eqnarray}
    \operatorname{FDR}(t) &=& \E\left[\frac{\sum_{j\leq t}\mathbb{I}\{e_j\geq 1/\alpha_j\}(1-\theta_j)}{1\vee \sum_{j\leq t}\delta_j}\right] \nonumber \\
    &\leq& \E\left[\sum_{j\leq t}\frac{(1-\theta_j)\mathbb{I}\{1/e_j\leq \alpha_j\}}{R({j-1})+ 1}\right] \nonumber \\
    &\leq& \E\left[\sum_{j\leq t}\frac{(1-\theta_j)e_j\alpha_j}{R({j-1})+1}\right] \nonumber \\
    &=& \E\left[\sum_{j\leq t}\frac{(1-\theta_j)\E[e_j\mid \mathcal{J}_{j-1}]\alpha_j}{R({j-1})+1}\right] \nonumber \\
    &\leq& \E\left[\sum_{j\leq t}\frac{(1-\theta_j)\alpha_j}{R({j-1})+1}\right] \nonumber \\
    &\leq& \sum_{j\leq t}\E\left[\frac{(1-\theta_j)\alpha_j}{R({j-1})+1} \cdot \frac{\E\left[\mathbb{I}\{e_j<1/\lambda\}\mid \mathcal{J}_{j-1}\right]}{1-\lambda}\right] \nonumber \\
    &=& \sum_{j\leq t}\E\left[\frac{(1-\theta_j)\alpha_j}{R({j-1})+1} \cdot \frac{\mathbb{I}\{e_j<1/\lambda\}}{1-\lambda}\right] \nonumber \\
    &\leq& \E\left[\sum_{j\in\mathcal{I}_t}\frac{\alpha_j(1-\theta_j)}{R({j-1})+1} \cdot \frac{\mathbb{I}\{e_j\leq 1/\lambda\}}{1-\lambda} +\sum_{j\in\bar{\mathcal{I}}_t} \frac{\alpha_j}{R({j-1})+1} \cdot \frac{\mathbb{I}\{e_j\leq 1/\lambda\}}{1-\lambda}\right] \nonumber \\
    &=& \E\left[\widehat{\FDP}_{\text{e-Ada-GAIF}}(t)\right] \leq \alpha, \nonumber 
\end{eqnarray}
where the first inequality holds since $R({j-1})+1\leq (R(t)\vee 1)$ for every $j\in\{j\leq t:\delta_j=1\}$ by definition, the second inequality holds since $\mathbb{I}\{e_j\alpha_j\geq 1\}\leq e_j\alpha_j$, the third inequality uses the law of iterated expectations by conditioning on $\mathcal{F}_{j-1}$ and then applies the property of $e$-values, and the fourth inequality holds since $\E\left[\mathbb{I}\{e_j<1/\lambda\}\mid \mathcal{J}_{j-1}\right]\geq 1-\lambda$ by the property of $e$-values, the fifth inequality follows from $1-\theta_t\leq 1$, and the last inequality follows from the construction of $e$-Ada-GAIF, which completes the proof of FDR control for $e$-Ada-GAIF.}
\end{proof}

{\color{black}\subsection{Proof of Corollary \ref{the:FDR-OCTF-delay}}

\begin{proof}
{\color{black}
We prove the result in two steps.

\paragraph{Step 1: mFDR control within each sub-stream.}
Fix a sub-stream index $j\in\{1,2,\ldots,d+1\}$ and consider the sequence of testing times 
$\{t_k\}_{k\ge 1}$ such that $t_k\in S_j$. By construction of the sub-streams, the feedback 
associated with test $t_k$ is revealed at time $t_{k+1}$, which belongs to the same sub-stream. 
Therefore, when conditioning on the sub-stream index $j$, the testing procedure restricted to 
$\{t_k\}_{k\ge 1}$,  using an adjusted level $\alpha/(d+1)$, has fully observed feedback and coincides with the standard OCTF procedure 
(without delay).

Under Assumption~\ref{assump:conformal_setting}, the conformal $p$-values $\{p_{t_k}\}_{k\ge 1}$ 
computed using the calibration sets $\{\gC_{0t}^{\,j}\}$ are super-uniform under the null, 
conditional on the past within the same sub-stream. Moreover, the test levels 
$\{\alpha_{t_k}\}_{k\ge 1}$ are updated based only on information available prior to $t_k$ 
within the sub-stream. Hence, by the proof of finite-sample mFDR guarantee for OCTF in Appendix \ref{proof:mFDR-OCTF}, the rejection set 
restricted to sub-stream $S_j$, denoted by $\gR^{(j)}=\gR\cap S_j$, satisfies
\[
\mathbb E[V^{(j)}(t)]
\le \frac{\alpha}{d+1}\,\mathbb E[1\vee R^{(j)}(t)].
\]
and thus
\[
\mFDR^{(j)}(t) \le \alpha/(d+1)\le\alpha \quad \text{for all } t\in S_j.
\]

% \paragraph{Step 2: Aggregation across sub-streams.}
% Let $V(t)$ and $R(t)$ denote the number of false discoveries and total rejections up to time $t$, 
% respectively. Since the sub-streams $\{S_j\}_{j=1}^{d+1}$ form a partition of $\mathbb{N}$, we have
% \[
% V(t) = \sum_{j=1}^{d+1} V^{(j)}(t),
% \qquad
% R(t) = \sum_{j=1}^{d+1} R^{(j)}(t),
% \]
% where $V^{(j)}(t)$ and $R^{(j)}(t)$ are the corresponding quantities restricted to sub-stream $S_j$.

% By definition of mFDR and linearity of expectation,
% \[
% \mFDR(t)
% =
% \frac{\mathbb{E}[V(t)]}{\mathbb{E}[R(t)\vee 1]}
% =
% \frac{\sum_{j=1}^{d+1}\mathbb{E}[V^{(j)}(t)]}
%      {\mathbb{E}[R(t)\vee 1]}.
% \]
% Using the fact that $\mathbb{E}[R(t)\vee 1]\ge \sum_{j=1}^{d+1}\mathbb{E}[R^{(j)}(t)]$ and the 
% sub-stream-wise mFDR control established in Step~1, we obtain
% \[
% \mFDR(t)
% \le
% \frac{\sum_{j=1}^{d+1}\mathbb{E}[V^{(j)}(t)]}{\sum_{j=1}^{d+1}\mathbb{E}[R^{(j)}(t)\vee 1]}
% \le
% \alpha\frac{\sum_{j=1}^{d+1}\mathbb{E}[R^{(j)}(t)]}{\sum_{j=1}^{d+1}\mathbb{E}[R^{(j)}(t)\vee 1]} 
% \le 
% \alpha.
% \]

% Therefore, the rejection set $\gR$ produced by Algorithm~\ref{alg:OCTF-delay} satisfies
% $\mFDR(t)\le \alpha$ for all $t\in\mathbb{N}$.

\paragraph{Step 2: Aggregation across sub-streams.}

Let $V(t)$ and $R(t)$ denote the numbers of false discoveries and total
rejections up to time $t$, respectively. Since the sub-streams
$\{S_j\}_{j=1}^{d+1}$ form a partition of $\mathbb N$,
\[
V(t)=\sum_{j=1}^{d+1}V^{(j)}(t),
\qquad
R(t)=\sum_{j=1}^{d+1}R^{(j)}(t).
\]
% By Step~1, 
% summing over $j$ yields
% \[
% \mathbb E[V(t)]
% =
% \sum_{j=1}^{d+1}\mathbb E[V^{(j)}(t)]
% \le
% \frac{\alpha}{d+1}
% \sum_{j=1}^{d+1}\mathbb E[1\vee R^{(j)}(t)]
% \leq \alpha \mathbb E[1\vee \sum_{j=1}^{d+1}R^{(j)}(t)]\leq
% \alpha\,\mathbb E[1\vee R(t)].
% \]
% Therefore,
% \[
% \mFDR(t)
% =
% \frac{\mathbb E[V(t)]}
% {\mathbb E[R(t)\vee1]}
% \le
% \alpha.
% \]
% This completes the proof.
By Step~1,
\[
\mathbb E[V(t)]
=
\sum_{j=1}^{d+1}\mathbb E[V^{(j)}(t)]
\le
\frac{\alpha}{d+1}
\sum_{j=1}^{d+1}\mathbb E[1\vee R^{(j)}(t)].
\]

Since \(R^{(j)}(t)\le R(t)\) pointwise and \(1\vee(\cdot)\) is monotone,
\[
1\vee R^{(j)}(t)\le 1\vee R(t),
\]
for every \(j\). Hence,
\[
\sum_{j=1}^{d+1}\mathbb E[1\vee R^{(j)}(t)]
\le
(d+1)\mathbb E[1\vee R(t)].
\]

Therefore,
\[
\mathbb E[V(t)]
\le
\alpha\mathbb E[1\vee R(t)],
\]
which implies
\[
\mFDR(t)
=
\frac{\mathbb E[V(t)]}
{\mathbb E[R(t)\vee1]}
\le
\alpha.
\]
}

\end{proof}

\subsection{Proof of Theorem \ref{the:opt-FDR-OCTF}}
\begin{proof}
For simplicity, we define that when $\theta_t=1$, 
$$\hat{M}_{k,t}:=\gM^{\mathrm{EWMA}}_t(\gD_t;k)=\frac{\sum_{j\in\gC_{1t}} \rho^{t-1-j}\, \tilde{p}_{k,j}\,}{\sum_{j\in\gC_{1t}}\rho^{t-1-j}}\quad\text{and}\quad M_{k,t}:= \E[F^{(0)}_{k}(V(\mathbf{X}_t;k))\mid\theta_t=1].$$
And we set historical data $\gD_\gC$ as empty set for notational convenience. The final guarantee will remain valid by replacing $t$ with $n+t$ when considering historical samples.

We decompose the error by
\begin{align}\label{eq:error_decom_EWMA}
 \hat{M}_{k,t}-{M}_{k,t}=&\underbrace{\frac{\sum_{j\in\gC_{1t}} \rho^{t-1-j}\, (\tilde{p}_{k,j}-F^{(0)}_{k}(V(\X_j;k)))\,}{\sum_{j\in\gC_{1t}}\rho^{t-1-j}}}_{\text{(I) null CDF estimation}}\nonumber\\
 &+\underbrace{\frac{\sum_{j<t} \rho^{t-1-j}\theta_t\, (F^{(0)}_{k}(V(\X_j;k))-\E[F^{(0)}_{k}(V(\X_j;k))\mid\theta_j=1])\,}{\sum_{j\in\gC_{1t}}\rho^{t-1-j}}}_{\text{(II) gap of EWMA estimation}}\nonumber\\
 &+\underbrace{\frac{\sum_{j<t} \rho^{t-1-j}\theta_j\, (\E[F^{(0)}_{k}(V(\X_j;k))\mid\theta_j=1]-\E[F^{(0)}_{k}(V(\X_t;k))\mid\theta_t=1])\,}{\sum_{j\in\gC_{1t}}\rho^{t-1-j}}}_{\text{(III) drift bias}}.  
\end{align}
Then we check the three term one by one.

\textbf{(I) Null CDF estimation.} 
Firstly, we condition on the past states $(\theta_s)_{s<t}$ and $\{\theta_t=1\}$ to give results. In this case, $\gC_{0t}$ and $\gC_{1t}$ are fixed.
Denote 
$$\hat{F}_{k,t}^{(0)}(v)=\frac{\sum_{s \in \gC_{0t} \cup \{t\}} \mathbb{I}\{V(\mathbf{X}_s; k) \leq v\}}{1 + |\gC_{0t}|}.$$
On the event $\{\theta_t=1\}$, the score $V(\X_t;k)$ is drawn from the non-null distribution and thus is not a null
sample for estimating $F_0^k$. Let
\[\hat{F}_{k,t}^{(0),\text{null}}
(v):=\frac{1}{|\gC_{0t}|}\sum_{s \in \gC_{0t}} \mathbb{I}\{V(\mathbf{X}_s; k) \leq v\}.
\]
Then we can decompose $\hat{F}_{k,t}^{(0)}$ as
\[
\hat{F}_{k,t}^{(0)}(v)=\frac{|\gC_{0t}|}{|\gC_{0t}|+1}\,\hat{F}_{k,t}^{(0),\text{null}}(v)
+\frac{1}{|\gC_{0t}|+1}\,\mathbb{I}\{V(\X_t;k)\le v\}.
\]
Therefore, for all $v$,
\begin{align*}
\big|\hat{F}_{k,t}^{(0)}(v)-F_{k}^{(0)}(v)\big|
&\le \frac{|\gC_{0t}|}{|\gC_{0t}|+1}\,\big|\hat{F}_{k,t}^{(0),\text{null}}(v)-F_{k}^{(0)}(v)\big|
+\frac{1}{|\gC_{0t}|+1}\,\big|\mathbb{I}\{V(\X_t;k)\le v\}-F_{k}^{(0)}(v)\big|\\
&\le \sup_u \big|\hat{F}_{k,t}^{(0),\text{null}}(u)-F_{k}^{(0)}(u)\big|+\frac{1}{|\gC_{0t}|+1},
\end{align*}
since $\big|\mathbb{I}\{V(\X_t;k)\le v\}-F_{k}^{(0)}(v)\big|\le 1$.

Conditional on $(\theta_s)_{s<t}$, the collection $\{V(\X_s;k)\}_{s\in\gC_{0t}}$ are i.i.d. from $F_{k}^{(0)}$.
By the DKW inequality, for any $\varepsilon_1>0$,
\[
\Pr\!\left(\sup_v\big|\hat{F}_{k,t}^{(0),\text{null}}(v)-F_{k}^{(0)}(v)\big|\leq \varepsilon_1\mid (\theta_s)_{s<t},\theta_t=1\right)
\geq 1-2\exp\left\{-2|\gC_{0t}|\varepsilon_1^2\right\}.
\]
Combining the two displays yields
\[
\Pr\!\left(\sup_v\big|\hat{F}_{k,t}^{(0)}(v)-F_{k}^{(0)}(v)\big|\leq \varepsilon_1+\frac{1}{|\gC_{0t}|+1}\mid (\theta_s)_{s<t},\theta_t=1\right)
\geq 1-2\exp\left\{-2|\gC_{0t}|\varepsilon_1^2\right\}.
\]
As $\tilde p_{k,j}=\hat{F}_{k,t}^{(0)}\!\left(V(\X_j;k)\right)$, we further have
\[
\Pr\!\left(\sup_{j\in\gC_{1t}}\big|\tilde{p}_{k,j}-F_k^{(0)}(V(\mathbf{X}_j;k))\big|
\leq \varepsilon_1+\frac{1}{|\gC_{0t}|+1}\mid (\theta_s)_{s<t},\theta_t=1\right)
\geq 1-2\exp\left\{-2|\gC_{0t}|\varepsilon_1^2\right\}.
\]
Since $|\gC_{0t}|=\sum_{s=1}^{t-1}\mathbb{I}\{\theta_s=0\}$ and $\Pr(\theta_s=0)\ge \underline{\pi}$,
by Hoeffding's inequality,
\[
\Pr\Big(|\gC_{0t}|\ge (t-1)\underline{\pi}/2\Big)=\Pr\Big(|\gC_{0t}|-(t-1)\underline{\pi}\ge -(t-1)\underline{\pi}/2\Big)\ge 1-\exp\left\{-(t-1)\underline{\pi}^2/2\right\}.
\]
Therefore, on the event $\P_t=\big\{|\gC_{0t}|\ge (t-1)\underline{\pi}/2\big\}$, we have
$$\Pr\!\left(\sup_{j\in\gC_{1t}}\big|\tilde{p}_{k,j}-F_k^{(0)}(V(\mathbf{X}_j;k))\big|
\leq \varepsilon_1+\frac{2}{(t-1)\underline{\pi}}\mid \P_t,\theta_t=1\right)
\geq 1-2\exp\left\{-\{(t-1)\underline{\pi}\}\varepsilon_1^2\right\}.$$
Then by the independence of $\P_t$ and $\{\theta_t=1\}$, invoking Lemma \ref{lem:decondition_event} leads to
\begin{align}\label{appen_eq:bound_nullCDF}
&\Pr\left(\left|\frac{\sum_{j\in\gC_{1t}} \rho^{t-1-j}\, (\tilde{p}_{k,j}-F_k^{(0)}(V(\X_j;k)))\,}{\sum_{j\in\gC_{1t}}\rho^{t-1-j}}\right|\leq\varepsilon_1+\frac{2}{(t-1)\underline{\pi}}\mid \theta_t=1 \right)\nonumber\\
\geq& 1-2\exp\left\{-\{(t-1)\underline{\pi}\varepsilon_1^2\right\}-\exp\left\{-(t-1)\underline{\pi}^2/2\right\}.
\end{align}

\textbf{(II) Gap of EWMA estimation.} 
Define $\Omega_t=\sum_{j\in\gC_{1t}}\rho^{t-1-j}=\sum_{j=1}^{t-1}\rho^{t-1-j}\theta_j$. It follows that $\E[\Omega_t]=\sum_{j=1}^{t-1}\rho^{t-1-j}\E[\theta_j]\geq (1-\overline{\pi})(1-\rho^{t-1})/(1-\rho)$. By Hoeffding inequality, we have
\begin{align}\label{appen_eq:hoeff_ewma}
    \Pr\left(\Omega_t\geq\frac{1-\overline{\pi}}{2}\frac{1-\rho^{t-1}}{1-\rho}\right)\geq1-\exp\left\{-\frac{(1-\overline{\pi})^2(1-\rho^2)}{2(1-\rho^{2(t-1)})}\right\}.
\end{align}

Then we can apply the Hoeffding inequality to the weighted sum as $$\left\{\theta_j\big(F_k^{(0)}(V(\X_j;k))-\E[F_k^{(0)}(V(\X_j;k))\mid\theta_j=1]\big)\right\}_{j< t}$$ are bounded between $[0,1]$ and independent:

\begin{align}\label{appen_eq:EWMA_gap}
    &\Pr\left(\left|\sum_{j<t} \rho^{t-1-j}\theta_j\, (F^{(0)}_{k}(V(\X_j;k))-\E[F^{(0)}_{k}(V(\X_j;k))\mid\theta_j=1])\right|\leq\varepsilon_2\Omega_t\right)\nonumber\\
    \geq& \Pr\left(\left|\sum_{j<t} \rho^{t-1-j}\theta_j\, (F^{(0)}_{k}(V(\X_j;k))-\E[F^{(0)}_{k}(V(\X_j;k))\mid\theta_j=1])\right|\leq\varepsilon_2\Omega_t,\Omega_t\geq\frac{1-\overline{\pi}}{2}\frac{1-\rho^{t-1}}{1-\rho}\right)\nonumber\\
    \geq& 1-\Pr\left(\left|{\sum_{j<t} \rho^{t-1-j}\theta_j\, (F_k^{(0)}(V(\X_j;k))-\E[F_k^{(0)}(V(\X_j;k))\mid\theta_j=1])\,}\right|\geq\varepsilon_2\left\{\frac{1-\overline{\pi}}{2}\frac{1-\rho^{t-1}}{1-\rho}\right\}\right)\nonumber\\
    &-\Pr\left(\Omega_t\leq\frac{1-\overline{\pi}}{2}\frac{1-\rho^{t-1}}{1-\rho}\right)\nonumber\\
    \geq& 1-2\exp\left\{-\frac{2\varepsilon_2^2(1-\rho^{2})\left\{\frac{1-\overline{\pi}}{2}\frac{1-\rho^{t-1}}{1-\rho}\right\}^2}{1-\rho^{2(t-1)}}\right\}-\exp\left\{-\frac{(1-\overline{\pi})^2(1-\rho^2)}{2(1-\rho^{2(t-1)})}\right\}.
\end{align}

\textbf{(III) Drift bias.} 
By Stieltjes integrals, 
\begin{align*}
 \left|\E[F_k^{(0)}(V(\X_j;k))\mid\theta_j=1]-\E[F_k^{(0)}(V(\X_t;k))\mid\theta_t=1]\right|&=|\int F_k^{(0)}(v)dF_{k,j}^{(1)}(v)-\int F_k^{(0)}(v)dF_{k,t}^{(1)}(v)|\\
 &=|\int (F_{k,j}^{(1)}(v)-F_{k,t}^{(1)}(v))dF_k^{(0)}(v)|\\
 &\leq \|F_{k,j}^{(1)}-F_{k,t}^{(1)}\|_\infty.
\end{align*}
So relating it with our slowly varying distribution condition leads to
\begin{align*}
   & \left|\frac{\sum_{j\in\gC_{1t}} \rho^{t-1-j}\, (\E[F_k^{(0)}(V(\X_j;k))\mid\theta_j=1]-\E[F_k^{(0)}(V(\X_t;k))\mid\theta_t=1])\,}{\sum_{j\in\gC_{1t}}\rho^{t-1-j}}\right| \\
   &\leq \gamma\frac{\sum_{j=1}^{t-1}\rho^{t-1-j}(t-j)}{\Omega_t}\leq \gamma\frac{1}{\Omega_t(1-\rho)^2}.
\end{align*}
% $$\left|\frac{\sum_{j\in\gC_{1t}} \rho^{t-1-j}\, (\E[F_k^{(0)}(V(\X_j;k))\mid\theta_j=1]-\E[F_k^{(0)}(V(\X_t;k))\mid\theta_t=1])\,}{\sum_{j\in\gC_{1t}}\rho^{t-1-j}}\right|\leq \gamma\frac{\sum_{j=1}^{t-1}\rho^{t-1-j}(t-j)}{\Omega_t}\leq \gamma\frac{1}{\Omega_t(1-\rho)^2}.$$
The last inequality comes from the fact that $\sum_{j=1}^{t-1}(t-j)\rho^{t-1-j}=\{1-t\rho^{t-1}+(t-1)\rho^t\}/(1-\rho)^2\leq 1/(1-\rho)^2$.

% As we have already analyze the high-probability lower bound of $\Omega_t$ in \eqref{appen_eq:hoeff_ewma}, then it follows
% \begin{align}\label{appen_eq:slow_drift}
%     &\Pr\left(\left|\frac{\sum_{j\in\gC_{1t}} \rho^{t-1-j}\, (\E[F_k^{(0)}(V(\X_j;k))\mid\theta_j=1]-\E[F_k^{(0)}(V(\X_t;k))\mid \theta_t=1])\,}{\sum_{j\in\gC_{1t}}\rho^{t-1-j}}\right| 
%     \leq 2\gamma\{(1-\overline{\pi})(1-\rho^{t-1})(1-\rho)\}^{-1}\right)\nonumber\\
%     &\geq 1-\exp\left\{-\frac{(1-\overline{\pi})^2(1-\rho^2)}{2(1-\rho^{2(t-1)})}\right\}.
% \end{align}

Since we have already analyzed the high-probability lower bound of $\Omega_t$ in \eqref{appen_eq:hoeff_ewma}, it follows that
\begin{align}\label{appen_eq:slow_drift}
\Pr\left(|\bar{\Delta}_t|
\le 2\gamma\{(1-\overline{\pi})(1-\rho^{t-1})(1-\rho)\}^{-1}\right)
\ge 1-\exp\left\{-\frac{(1-\overline{\pi})^2(1-\rho^2)}
{2(1-\rho^{2(t-1)})}\right\},
\end{align}
where
\[
\bar{\Delta}_t := \frac{\sum_{j\in\gC_{1t}} \rho^{t-1-j}\,
\Big(\E[F_k^{(0)}(V(\X_j;k))\mid \theta_j=1]
- \E[F_k^{(0)}(V(\X_t;k))\mid \theta_t=1]\Big)}
{\sum_{j\in\gC_{1t}} \rho^{t-1-j}}.
\]

 Combining the results in \eqref{appen_eq:bound_nullCDF}, \eqref{appen_eq:EWMA_gap} and \eqref{appen_eq:slow_drift} together into \eqref{eq:error_decom_EWMA}, we have for any $\varepsilon_1\in(0,\sqrt{\underline{\pi}/2})$ and $\varepsilon_2\in(0,1-\rho)$ 
 $$ |\hat{M}_{k,t}-{M}_{k,t}|\leq \varepsilon_1+\varepsilon_2+2\gamma\{(1-\overline{\pi})(1-\rho^{t-1})(1-\rho)\}^{-1}+2\{(t-1)\underline{\pi}\}^{-1}$$
with probability
$$1-3\exp\left\{-\{(t-1)\underline{\pi}\varepsilon_1^2\right\}-4\exp\left\{-\frac{\varepsilon_2^2(1+\rho)({1-\overline{\pi}})^2({1-\rho^{t-1}})}{2(1+\rho^{t-1})(1-\rho)}\right\}.$$

    Suppose $t>1+1/(1-\rho)$ such that $\rho^{t-1}<1/2$, then $$\frac{(1+\rho)({1-\rho^{t-1}})}{2(1+\rho^{t-1})(1-\rho)}\geq \frac{1}{3}\frac{1+\rho}{1-\rho}.$$
    Then the part (i) is verified. %by choose $\varepsilon_1=\sqrt{\frac{\log \{3(t+n-1)\}}{(t+n-1)\underline{\pi}}}$ and $\varepsilon_2=\frac{1}{(1-\overline{\pi})}\sqrt{\frac{\log\{4(1-\rho)^{-1}\}}{6(1-\rho)^{-1}}}$.

   For part (ii), 
$$\max_{k\in[K]} |\hat{M}_t^k-{M}_t^k|\leq \varepsilon_1+\varepsilon_2+4\gamma\{(1-\overline{\pi})(1-\rho)\}^{-1}+2\{(t-1)\underline{\pi}\}^{-1}$$
    holds with probability 
$$1-3K\exp\left\{-\{(t-1)\underline{\pi}\varepsilon_1^2\right\}-4K\exp\left\{-\frac{\varepsilon_2^2({1-\overline{\pi}})^2}{6(1-\rho)}\right\}.$$

By Assumption \ref{assum:margin},
$$\Pr(\hat{k}_t\neq k^*_t\mid\theta_t=1)\leq\Pr(\max_{k\in[K]}|\hat{M}_{k,t}-{M}_{k,t}|\geq c/2).$$

To enforce $\max_k|\hat M_{k,t}-M_{k,t}|\le c/2$,
it suffices to let 
\begin{equation*}
\varepsilon_1=\varepsilon_2 \le\frac{c}{8} ,\;\;\frac{4\gamma}{(1-\overline\pi)(1-\rho)}\leq \frac{c}{8},\;\;\frac{2}{(t-1)\underline\pi}\leq \frac{c}{8}.
\end{equation*}
So it is required that
$$t\geq 1+\frac{16}{c\underline{\pi}},\quad \frac{\gamma}{1-\rho}\leq \frac{(1-\overline{\pi})c}{32}.$$
And we have the explicit bound
\begin{align}\label{eq:misselect_deltafree}
\Pr(\hat k_t\neq k_t^*)
\le\;&
3K\exp\!\left\{-(t-1)\underline\pi\frac{c^2}{64}\right\}+
4K\exp\!\left\{-\frac{c^2(1-\overline\pi)^2}{384(1-\rho)}\right\}.\nonumber
\end{align}
And this bound can be directly generalized  when the historical data are not empty by replacing $t$ with $n+t$. 

Finally, we can jointly guarantee the score selection consistency for all non-null points:
\begin{align*}
    &\Pr(\exists i\le t:\theta_i=1,\ \hat k_i\neq k_i^*)\leq\sum_{i=1}^t\Pr(\theta_i=1)\Pr(\hat k_i\neq k_i^*\mid\theta_i=1)\\
    \leq&  3K(1-\underline{\pi})\sum_{i=1}^t \exp\!\left\{-(n+i-1)\underline\pi\frac{c^2}{64}\right\}
+4K(1-\underline{\pi})t\exp\!\left\{-\frac{c^2(1-\overline\pi)^2}{384(1-\rho)}\right\}\\
\leq&\frac{3K(1-\underline{\pi})\exp\!\left\{-n\underline\pi{c^2}/{64}\right\}}{1-\exp\!\left\{-\underline\pi{c^2}/{64}\right\}}
+4K(1-\underline{\pi})t\exp\!\left\{-\frac{c^2(1-\overline\pi)^2}{384(1-\rho)}\right\}.
\end{align*}

\end{proof}

\subsection{Proof of Theorem \ref{the:FDR-Opt-perm}}

\begin{proof}
It suffices to prove Proposition \ref{prop:online_conf_val_exch} for the permuted $p$-value. Naturally, we can view score selection as part of the score construction such that $V(\cdot;\hat{k}_t)$ can be defined as $V(\cdot;\gD_t)$. Here we do not require $V(\cdot;\gD_t)$ is symmetric to past null data and allow it perform non-symmetrically. For convenience, we set $\theta_t=\theta_{t+1}=\cdots=\theta_T=0$.

For any $\sigma'\in\Omega_t$ and given $\{\X_i : i\in \gC_{0t}\cup\{t\}\}=\{\x_i : i\in \gC_{0t}\cup\{t\}\}$ as the realizations, define $\bar{\sigma}=\sigma'\cdot\sigma$ and
\begin{align*}
    &p_{t}^{\rm opt}((\gD_T)_{\sigma'})\\
    =&\frac{1}{|\Omega_t|}\sum_{\sigma\in\Omega_t} \I\{V(\x_{{\sigma'(\sigma(i))}};((\gD_t)_\sigma)_{\sigma'})<V(\x_{\sigma'(t)};(\gD_t)_{\sigma'})\}\\
    &+\frac{1}{|\Omega_t|}\xi_t\cdot \sum_{\sigma\in\Omega_t} \mathbb{I}\{V(\x_{\sigma'(\sigma(i))};((\gD_t)_\sigma)_{\sigma'}))=V(\x_{\sigma'(t)};(\gD_t)_{\sigma'})\}\\
=&\frac{\sum_{\bar{\sigma}\in\Omega_t} \I\{V(\x_{{\bar{\sigma}(i)}};(\gD_t)_{\bar{\sigma}})<V(\x_{\sigma'(t)};(\gD_t)_{\sigma'})\}+\xi_t\cdot \sum_{\bar{\sigma}\in\Omega_t} \mathbb{I}\{V(\x_{\bar{\sigma}(i)};(\gD_t)_{\bar{\sigma}}))=V(\x_{\sigma'(t)};(\gD_t)_{\sigma'})\}}{|\Omega_t|}\\
    =&\frac{\sum_{{\sigma}\in\Omega_t} \I\{V(\x_{{{\sigma}(i)}};(\gD_t)_{{\sigma}})<V(\x_{\sigma'(t)};(\gD_t)_{\sigma'})\}+\xi_t\cdot \sum_{\sigma\in\Omega_t} \mathbb{I}\{V(\x_{{\sigma}(i)};(\gD_t)_{{\sigma}}))=V(\x_{\sigma'(t)};(\gD_t)_{\sigma'})\}}{|\Omega_t|}.
\end{align*}
The last equality comes from the fact that $\sigma'\in\Omega_t$ is a bijection such that $\{\sigma:\sigma\in\Omega_t\}=\{\sigma'\cdot\sigma:\sigma\in\Omega_t\}$.

Follow the same discussion in the proof of Proposition \ref{prop:online_conf_val_exch}.
Let $\{X_i:i\in \gC_{0t}\cup\{t\}\}=\{x_i:i\in \gC_{0t}\cup\{t\}\}$ as a set of realizations and 
$$q=Q_\alpha(V(x_{\sigma(t)};(\gD_t)_\sigma):\sigma\in\Omega_t).$$

For any $\sigma'\in\Omega_t$ such that $V(\x_{\sigma'(t)};(\gD_t)_{\sigma'})>q$, we have
$$p_{t}((\gD_T)_{\sigma'})\geq \frac{N_=+N_-}{|\Omega_t|}\geq \alpha;$$
Here
$$N_-=\sum_{{\sigma'}\in\Omega_t}\mathbb{I}\{(V(x_{{\sigma'}(t)};(\gD_t)_{\sigma'})<q\},\quad N_{=}=\sum_{{\sigma'}\in\Omega_t}\mathbb{I}\{V(x_{{\sigma'}(t)};(\gD_t)_{\sigma'})=q\}.$$

For any ${\sigma'}\in\Omega_t$ such that $V(\x_{{\sigma'}(t)};(\gD_t)_{\sigma'})<q$, we have
$$p_{t}((\gD_T)_{\sigma'})\leq  \frac{N_-}{|\Omega_t|}\leq \alpha.$$

Hence
\begin{align*}
    \Pr(p_t\leq\alpha\mid\Phi_t)&=\frac{1}{|\Omega_t|}\sum_{{\sigma'}\in\Omega_t}\Pr(p^{\rm opt}_t((\gD_T)_{\sigma'})\leq\alpha)\\
    &{=}\frac{1}{|\Omega_t|}\sum_{{\sigma'}\in\Omega_t}\left(\mathbb{I}\{V(\x_{{\sigma'}(t)};(\gD_t)_{\sigma'})<q\}+\mathbb{I}\{V(\x_{{\sigma'}(t)};\gD_t)=q\}\frac{\alpha|\Omega_t|-N_-}{N_=}\right)\\
    &{=}\frac{1}{|\Omega_t|}\left(N_-+N_=\frac{\alpha|\Omega_t|-N_-}{N_=}\right)=\alpha,
\end{align*}

As for the independence of null $p$-values, by a similar discussion in Proposition \ref{prop:online_conf_val_exch}, we can still have 
\[p_{t'}(\gD_T)=p_{t'}((\gD_T)_{\sigma'})\]
for all $t'\in\{t+1,\dots,T\}$ and any $\sigma'\in\Omega_t$.  Then by the same iterate conditional expectation without relying on additional symmetry discussion, we can provide the independence results.

\end{proof}

}

{\color{black}

\subsection{Proof of Proposition \ref{prop:LF_fdphat_control}}

\begin{proof}
The inequality $\widehat{\mathrm{FDP}}_{\mathrm{GAIF}}(t) \le \alpha$ is equivalent to
\begin{equation}\label{eq:wealth_target_gaif}
    \sum_{m=1}^{t} \alpha_m \;\le\; \alpha \bigl(1 \vee R(t)\bigr) + \sum_{j \in \mathcal{I}_t} \theta_j\, \alpha_j,
\end{equation}
where $R(t) = \sum_{m=1}^t \delta_m$.
Since 
\[
    {\alpha}_m = \gamma_m s_0 + (\alpha - s_0)\,\gamma_{m - \tau_1}\,\mathbb{I}\{\tau_1 < m\}
    + \alpha \sum_{k \ge 2:\,\tau_k < m} \gamma_{m - \tau_k}
    + \sum_{j \in \mathcal{I}_m} \theta_j\,\alpha_j\,\gamma_{m - j},
\]
summing over $m = 1, \dots, t$ and exchanging the order of summation, we bound each inner sum separately.

\medskip
\noindent\textbf{Base wealth terms.}
For any fixed origin $a \in \{0\} \cup \{\tau_k : k \ge 1\}$, define
\[
    S_{\mathrm{base}}(a) \;=\; \sum_{m > a}^{t} \gamma_{m - a}.
\]
Since the index $m - a$ increments by exactly 1 at each step, \textcolor{black}{and $\gamma_i\ge 0$ for all $i$,}
\[
    S_{\mathrm{base}}(a) \;=\; \sum_{i=1}^{t - a} \gamma_i \;\le\; \sum_{i=1}^{\infty} \gamma_i \;\le\; 1.
\]
Since $R(t-1)=\#\{k:\tau_k<t\}$, applying this bound with $\alpha^{(1)} = \alpha - s_0$ and $\alpha^{(k)} = \alpha$ for $k \ge 2$ yields:
\[
\sum_{m=1}^{t}
\Bigl[
s_0\,\gamma_m
+
\sum_{k \ge 1:\,\tau_k < m}
\alpha^{(k)}\,\gamma_{m-\tau_k}
\Bigr]
\le
s_0
+
(\alpha-s_0)\,\mathbb{I}\{R(t-1)\ge 1\}
+
\alpha\,(R(t-1)-1)_+.
\]
Since
\[
s_0
+
(\alpha-s_0)\,\mathbb{I}\{R(t-1)\ge 1\}
+
\alpha\,(R(t-1)-1)_+
=
\begin{cases}
s_0, & R(t-1)=0,\\[2mm]
\alpha R(t-1), & R(t-1)\ge 1,
\end{cases}
\]
and $s_0\le \alpha$, we obtain
\[
s_0
+
(\alpha-s_0)\,\mathbb{I}\{R(t-1)\ge 1\}
+
\alpha\,(R(t-1)-1)_+
\le
\alpha\,(1\vee R(t)).
\]
Therefore,
\[
\sum_{m=1}^{t}
\Bigl[
s_0\,\gamma_m
+
\sum_{k \ge 1:\,\tau_k < m}
\alpha^{(k)}\,\gamma_{m-\tau_k}
\Bigr]
\le
\alpha\,(1\vee R(t)).
\]

\medskip
\noindent\textbf{Feedback wealth terms.}
\textcolor{black}{By definition, $\mathcal{I}_t=\{j\in[t]:\theta_j \text{ is revealed by time } t\}$; since once revealed a label stays revealed, $\{\mathcal{I}_t\}_{t\in\mathbb{N}}$ is automatically nondecreasing, i.e., $\mathcal{I}_m\subseteq\mathcal{I}_t$ for all $m\le t$. Consequently, in the double sum $\sum_{m=1}^t\sum_{j\in\mathcal{I}_m}\theta_j\alpha_j\gamma_{m-j}$, every index $j$ that appears for some $m\le t$ already belongs to $\mathcal{I}_t$. Hence, after exchanging the order of summation, it suffices to sum over $j\in\mathcal{I}_t$.}
For each $j \in \mathcal{I}_t$, define
\[
    S_{\mathrm{fb}}(j) \;=\; \sum_{m=j+1}^{t} \mathbb{I}\{j \in \mathcal{I}_m\}\,\gamma_{m - j}.
\]
Since a revealed label stays revealed, if $j \in \mathcal{I}_t$ then $\mathbb{I}\{j \in \mathcal{I}_m\} = 1$ for all $m \ge m_0$ where $m_0$ is the first time $j$ enters $\mathcal{I}_m$. Hence, using $\gamma_i\ge 0$,
\[
    S_{\mathrm{fb}}(j) \;\le\; \sum_{m=j+1}^{t} \gamma_{m-j} \;=\; \sum_{i=1}^{t-j} \gamma_i \;\le\; 1.
\]
The total feedback contribution therefore satisfies:
\[
    \sum_{j \in \mathcal{I}_t} \theta_j\,\alpha_j \cdot S_{\mathrm{fb}}(j)
    \;\le\; \sum_{j \in \mathcal{I}_t} \theta_j\,\alpha_j.
\]

\medskip
\noindent\textbf{Combining.}
Adding the two contributions:
\[
    \sum_{m=1}^{t} \alpha_m
    \;\le\; \alpha\,(1 \vee R(t)) + \sum_{j \in \mathcal{I}_t} \theta_j\,\alpha_j,
\]
which is exactly \eqref{eq:wealth_target_gaif}. Rearranging gives $\widehat{\mathrm{FDP}}_{\mathrm{GAIF}}(t) \le \alpha$.
\end{proof}

\subsection{Proof of Proposition \ref{prop:SF_fdphat_control}}

\begin{proof}
The inequality $\widehat{\mathrm{FDP}}_{\mathrm{Ada\text{-}GAIF}}(t) \le \alpha$ is equivalent to
\begin{equation}\label{eq:wealth_target}
    \sum_{m=1}^{t} \alpha_m \kappa(p_m) \;\le\; \alpha \bigl(1 \vee R(t)\bigr) + \sum_{j \in \mathcal{I}_t} \theta_j\, \alpha_j\, \kappa(p_j),
\end{equation}
where $R(t) = \sum_{m=1}^t \delta_m$ and $\kappa(p) = \frac{\mathbb{I}\{p > \lambda\}}{1-\lambda}$.

Denote
\begin{equation*}
\begin{aligned}
\tilde{\alpha}_m
&= (1-\lambda)\Bigl[
s_0\, \gamma_{m - C_{0+}}
+ (\alpha - s_0)\, \gamma_{m - \tau_1 - C_{1+}}
+ \alpha \sum_{j \geq 2} \gamma_{m - \tau_j - C_{j+}}
\Bigr] \\
&\quad + \sum_{j:\, j \in \mathcal{I}_m} 
\gamma_{m - j - C_{j+}^{*}}\, \alpha_j\, \theta_j\, \mathbb{I}\{p_j > \lambda\}.
\end{aligned}
\end{equation*}
Since $\alpha_m = \min\{\lambda, \tilde{\alpha}_m\} \le \tilde{\alpha}_m$, and $\kappa(p_m) = \frac{\mathbb{I}\{p_m > \lambda\}}{1-\lambda}$, the factor $(1-\lambda)$ in the denominator of $\kappa(p_m)$ cancels the leading $(1-\lambda)$ in $\tilde{\alpha}_m$. Hence, for each $m$:
\begin{equation}\label{eq:per_step}
\begin{aligned}
    \alpha_m \kappa(p_m) \;\le\; \mathbb{I}\{p_m > \lambda\} \Bigl[
        &\;s_0\,\gamma_{m - C_{0+}(m)}
        + (\alpha - s_0)\,\gamma_{m - \tau_1 - C_{1+}(m)}\,\mathbb{I}\{\tau_1 < m\} \\
        &+ \alpha \sum_{k \ge 2:\,\tau_k < m} \gamma_{m - \tau_k - C_{k+}(m)} \\
        &+ \sum_{j \in \mathcal{I}_m} \theta_j\,\alpha_j\,\kappa(p_j)\,\gamma_{m - j - C_{j+}^*(m)}
    \Bigr].
\end{aligned}
\end{equation}

Summing \eqref{eq:per_step} over $m = 1, \dots, t$ and exchanging the order of summation, we bound each resulting inner sum separately.

\medskip
\noindent\textbf{Base wealth terms.}
For any fixed origin $a \in \{0\} \cup \{\tau_k : k \ge 1\}$, define
\[
    S_{\mathrm{base}}(a) \;=\; \sum_{m > a}^{t} \mathbb{I}\{p_m > \lambda\}\,\gamma_{m - a - C_{a+}(m)}.
\]
Since $C_{a+}(m)$ counts the number of indices $i \in (a, m)$ with $p_i \le \lambda$, \textcolor{black}{on the event $\{p_m>\lambda\}$ the effective index satisfies}
\[
\textcolor{black}{m - a - C_{a+}(m) = \sum_{i=a+1}^{m}\mathbb{I}\{p_i > \lambda\},}
\]
\textcolor{black}{which increments by exactly 1 relative to its value at the previous occurrence of $\{p_m>\lambda\}$, and stays unchanged whenever $p_m\le\lambda$ (in which case the term is annihilated by the indicator $\mathbb{I}\{p_m>\lambda\}$ in $S_{\mathrm{base}}(a)$).} Consequently, the non-zero terms in $S_{\mathrm{base}}(a)$ map bijectively onto a contiguous prefix of $\{\gamma_i\}_{i \ge 1}$, that is, they correspond exactly to $\gamma_1,\gamma_2,\ldots,\gamma_{\Delta_{a,t}}$ without gaps. Therefore, \textcolor{black}{using $\gamma_i\ge0$,}
\[
    S_{\mathrm{base}}(a) \;=\; \sum_{i=1}^{\Delta_{a,t}} \gamma_i \;\le\; \sum_{i=1}^{\infty} \gamma_i \;\le\; 1,
\]
where $\Delta_{a,t} = \sum_{i=a+1}^{t} \mathbb{I}\{p_i > \lambda\}$.

Applying this bound with $\alpha^{(1)} = \alpha - s_0$ and $\alpha^{(k)} = \alpha$ for $k \ge 2$, we obtain
\[
\sum_{m=1}^{t}
\mathbb{I}\{p_m>\lambda\}
\Bigl[
s_0\,\gamma_{m-C_{0+}(m)}
+
\sum_{k\ge 1:\,\tau_k<m}
\alpha^{(k)}
\gamma_{m-\tau_k-C_{k+}(m)}
\Bigr]
\le
s_0
+
(\alpha-s_0)\mathbb{I}\{R(t-1)\ge 1\}
+
\alpha(R(t-1)-1)_+.
\]
Since
\[
s_0
+
(\alpha-s_0)\mathbb{I}\{R(t-1)\ge 1\}
+
\alpha(R(t-1)-1)_+
=
\begin{cases}
s_0, & R(t-1)=0,\\[2mm]
\alpha R(t-1), & R(t-1)\ge 1,
\end{cases}
\]
and $s_0\le \alpha$, it follows that
\[
s_0
+
(\alpha-s_0)\mathbb{I}\{R(t-1)\ge 1\}
+
\alpha(R(t-1)-1)_+
\le
\alpha(1\vee R(t)).
\]
Therefore,
\[
\sum_{m=1}^{t}
\mathbb{I}\{p_m>\lambda\}
\Bigl[
s_0\,\gamma_{m-C_{0+}(m)}
+
\sum_{k\ge 1:\,\tau_k<m}
\alpha^{(k)}
\gamma_{m-\tau_k-C_{k+}(m)}
\Bigr]
\le
\alpha(1\vee R(t)).
\]

\medskip
\noindent\textbf{Feedback wealth terms.}
\textcolor{black}{As in the proof of Proposition~\ref{prop:LF_fdphat_control}, $\{\mathcal{I}_t\}_{t\in\mathbb{N}}$ is nondecreasing by definition, so terms with $j\notin\mathcal{I}_t$ do not occur in the double sum, and after exchanging the order of summation it suffices to sum over $j\in\mathcal{I}_t$.}
For each $j \in \mathcal{I}_t$, define
\[
    S_{\mathrm{fb}}(j) \;=\; \sum_{m=j+1}^{t} \mathbb{I}\{p_m > \lambda\}\,\mathbb{I}\{j \in \mathcal{I}_m\}\,\gamma_{m - j - C_{j+}^*(m)}.
\]
As $m$ ranges over the non-candidate times after $j$, \textcolor{black}{on the event $\{p_m>\lambda\}$} this effective index $m - j - C_{j+}^*(m) = \sum_{i=j+1}^{m}\mathbb{I}\{p_i > \lambda\}$ increases by one each time. Hence the corresponding $\gamma$-indices are all distinct. In the full and instant feedback case, these indices form a consecutive prefix of $\{\gamma_i\}_{i\ge1}$. Under general feedback, the indicator $\mathbb I\{j\in\mathcal I_m\}$ may remove some terms before the feedback for $j$ becomes available, but it cannot duplicate any $\gamma$-index. Consequently, \textcolor{black}{using $\gamma_i\ge0$,}
\[
S_{\mathrm{fb}}(j)
\le
\sum_{i=1}^{\infty}\gamma_i
\le
1 .
\]

Hence the total feedback contribution satisfies
\[
    \sum_{j \in \mathcal{I}_t} \theta_j\,\alpha_j\,\kappa(p_j) \cdot S_{\mathrm{fb}}(j)
    \;\le\; \sum_{j \in \mathcal{I}_t} \theta_j\,\alpha_j\,\kappa(p_j).
\]

\medskip
\noindent\textbf{Combining.}
Adding the two contributions:
\[
    \sum_{m=1}^{t} \alpha_m\,\kappa(p_m)
    \;\le\; \alpha\,(1 \vee R(t)) + \sum_{j \in \mathcal{I}_t} \theta_j\,\alpha_j\,\kappa(p_j),
\]
which is exactly \eqref{eq:wealth_target}. Rearranging gives $\widehat{\mathrm{FDP}}_{\mathrm{Ada\text{-}GAIF}}(t) \le \alpha$.
\end{proof}

}

\newpage
\vskip 0.2in
\bibliography{main}
\end{document}